\documentclass[11pt]{article}

\usepackage{hyperref}

% mathtools invokes amsmath
\usepackage{latexsym,xcolor,graphicx,mathtools,amssymb,amsthm,comment}
% makes slight adjustments to fonts to reduce the number of hyphenations
% and fixes a few other things

%\usepackage[mathlines]{lineno}

\usepackage{microtype}

\usepackage{fullpage}

% Already defined by mathtools
%\def\coloneqq{{:=}}

\newtheorem{theorem}{Theorem}[section]
\newtheorem{lemma}[theorem]{Lemma}
\newtheorem{proposition}[theorem]{Proposition}
\newtheorem{corollary}[theorem]{Corollary}
\newtheorem{assumption}{Assumptions}

\theoremstyle{definition}
%% BA: unnumbered
\newtheorem*{definition}{Definition}

\def\cplx{\mathbb{C}}

\def\reals{\mathbb{R}}
\let\eps\varepsilon
\let\bd\partial
\def\A{\mathcal{A}}

\DeclarePairedDelimiter{\abs}{\lvert}{\rvert}

\def\etal{\textit{et~al.}}

\DeclareMathOperator{\poly}{poly}
\DeclareMathOperator{\res}{res}

\newcommand{\ignore}[1]{}

\begin{document}
%\linenumbers

\title{Testing Polynomials for Vanishing on \\
Cartesian Products of Planar Point Sets: \\
Collinearity Testing and Related Problems\thanks{%
Work by Boris Aronov was partially supported by NSF grant CCF-15-40656 
and by grant 2014/170 from the US-Israel Binational Science Foundation.
Work by Esther Ezra was partially supported by NSF CAREER under grant 
CCF:AF-1553354 and by Grant 824/17 from the Israel Science Foundation.
Work by Micha Sharir was partially supported by ISF Grant 260/18, 
by grant 1367/2016 from the German-Israeli Science Foundation (GIF), and by
Blavatnik Research Fund in Computer Science at Tel Aviv University.}}

% \titlerunning{Testing Polynomials for Vanishing on Products of Planar Sets}

%%%%%%%%%%%%%%%%%%%%%%%%%%%%%%%%%%%%%%%%%%%%%%%%%%%%%%%%%%%%%%%%%%%%%%%%
\author{Boris Aronov\thanks{%
Department of Computer Science and Engineering,
Tandon School of Engineering,
New York University,
Brooklyn, NY 11201, USA,
\textsf{boris.aronov@nyu.edu},
{http://orcid.org/0000-0003-3110-4702}}%
\and
%%%%%%%%%%%%%%%%%%%%%%%%%%%%%%%%%%%%%%%%%%%%%%%%%%%%%%%%%%%%%%%%%%%%%%%%
Esther Ezra\thanks{%
School of Computer Science, Bar Ilan University, Ramat Gan, Israel,
\textsf{ezraest@cs.biu.ac.il},
{https://orcid.org/0000-0001-8133-1335}}%
%%%%%%%%%%%%%%%%%%%%%%%%%%%%%%%%%%%%%%%%%%%%%%%%%%%%%%%%%%%%%%%%%%%%%%%%
\and
Micha Sharir\thanks{%
School of Computer Science, Tel Aviv University, Tel~Aviv, Israel,
\textsf{michas@tau.ac.il},
{http://orcid.org/0000-0002-2541-3763}}
}
%%%%%%%%%%%%%%%%%%%%%%%%%%%%%%%%%%%%%%%%%%%%%%%%%%%%%%%%%%%%%%%%%%%%%%%%

%\keywords{Algebraic decision tree, Polynomial partition, Collinearity testing, 3SUM-hard problems, Polynomials vanishing on Cartesian products}

\maketitle

\begin{abstract}
We present subquadratic algorithms, in the algebraic
decision-tree model of computation, for detecting whether there exists a triple of points,
belonging to three respective sets $A$, $B$, and $C$ of points 
in the plane, that satisfy a certain polynomial equation or two equations. 
The best known instance of such a problem is testing
for the existence of a collinear triple of points in
$A\times B\times C$, a classical 3SUM-hard problem that
has so far defied any attempt to obtain a subquadratic solution, whether in the
(uniform) real RAM model, or in the algebraic decision-tree model. 
While we are still unable to solve this problem,
in full generality, in subquadratic time, we obtain such a solution,
in the algebraic decision-tree model, that uses only roughly
$O(n^{28/15})$ constant-degree polynomial sign tests,
for the special case where two of the sets lie on two respective one-dimensional curves
and the third is placed arbitrarily in the plane.
Our technique is fairly general, and applies to many other problems
where we seek a triple that satisfies a single polynomial equation,
e.g., determining whether $A\times B\times C$ contains a triple 
spanning a unit-area triangle.
This result extends recent work by Barba \etal~(2017)
and by Chan (2018), where all three sets $A$,~$B$, and~$C$ 
are assumed to be one-dimensional. 

As a second application of our technique, we again have three $n$-point sets
$A$, $B$, and $C$ in the plane, and we want to
determine whether there exists a triple $(a,b,c) \in A\times B\times C$
that simultaneously satisfies two independent real polynomial equations. For example,
this is the setup when testing for collinearity in the complex plane,
when each of the sets $A$, $B$, $C$ lies on some constant-degree algebraic curve.
We show that problems of this kind can be solved with roughly 
$O(n^{24/13})$ constant-degree polynomial sign tests. 
\end{abstract}

\section{Introduction}
\label{sec:intro}

\paragraph*{General background.}
Let $A$, $B$, and $C$ be three $n$-point sets in the plane.
We want to determine whether there exists a triple of points 
$(a,b,c)\in A\times B\times C$ that satisfy one or two prescribed polynomial equations.
An example of such a scenario, with a single vanishing polynomial, is to determine whether $A\times B\times C$
contains a collinear triple of points. This classical problem is at least as hard as the 3SUM problem~\cite{GO}, 
in which we are given three sets $A$, $B$, and $C$, each consisting of 
$n$ real numbers, and we want to determine whether there exists a triple 
of numbers $(a,b,c)\in A\times B\times C$ that add up to zero. 

The 3SUM problem itself, conjectured for a long time to require $\Omega(n^2)$ 
time, was shown by Gr{\o}nlund and Pettie~\cite{GP} (with further improvements by Chan~\cite{Chan-18})
to be solvable in very slightly subquadratic time. 
Moreover, in the \emph{linear decision-tree model}, in which we only count
linear sign tests involving the input point coordinates (and do not allow any other
operation to access the input explicitly), Gr{\o}nlund and Pettie
have improved the running time to nearly $O(n^{3/2})$ 
(see also~\cite{Fre,GS} for subsequent slight speedups), 
which has been drastically further improved (still in the linear decision-tree model)
to $O(n\log^2n)$ time by Kane \etal~\cite{KLM}.

In contrast, no subquadratic algorithm is known for the collinearity 
detection problem, either in the standard real RAM model (also
known as the \emph{uniform} model) or in the decision-tree model; see~\cite{BCILOS-17} for a discussion.
In the uniform model, the problem can be solved in~$O(n^2)$~time.
The primitive operation needed to test for collinearity of a specific triple is 
the so-called \emph{orientation test},
in which we test for the sign of a quadratic polynomial in the six 
coordinates of a triple of points in $A\times B\times C$ (see Eq.~\eqref{eq:matrix} 
below for the concrete expression).
Consequently, it is natural (and apparently necessary) to use
the more general \emph{algebraic decision-tree model}, 
in which each comparison is a sign test of some constant-degree polynomial in the
coordinates of a constant number of input points;
see \cite{BenOr83,PS-85} and below. 

%%%%%%%%%%%%%%%%%%%%%%%%%%%%%%%%%%%
\paragraph*{The problems, in more detail.} 

In this paper we consider two main variants of testing a polynomial,
or polynomials, for vanishing on a triple Cartesian product.
The main motivation for the present study is the aforementioned collinearity testing 
question. We present the problem in a wider context, where we are given three 
sets~$A$, $B$, and~$C$, each consisting of $n$ points in the plane, and we consider two scenarios:

\medskip
\noindent
\textbf{(a) A single vanishing polynomial.}
Given a single constant-degree irreducible $6$-variate real polynomial $F$,
determine whether there exists a triple~$(a,b,c)\in A\times B\times C$
such that $F(a,b,c) = 0$.

\medskip
\noindent
\textbf{(b) A pair of vanishing polynomials.}
Given a pair $F$, $G$ of constant-degree irreducible and independent 
$6$-variate real polynomials, determine whether there exists a triple 
$(a,b,c)\in A\times B\times C$ such that $F(a,b,c) = G(a,b,c) = 0$.

\medskip
Alternatively, in both types of problems, we may want to count the number of 
triples $(a,b,c)\in A\times B\times C$ that satisfy this equation or 
equations, or report all such triples. We only 
consider the existence problem, but the techniques can easily 
be adapted to handle the other variants, with comparable
bounds, in the algebraic decision-tree model.

As can be expected, our results are stronger for the vanishing pair problem in (b).
That is, requiring the triple $(a,b,c)$ to satisfy two equalities 
facilitates a more efficient solution. In contrast, the 
collinearity testing problem in the real plane, as well as 
more general instances of a single vanishing polynomial in (a), 
seem harder to solve efficiently. As we spell out below,
we can solve problems of type (a) in subquadratic time, in the algebraic
decision-tree model, only for suitably restricted input sets.

A special (but natural) case of the problem with two polynomial constraints 
is where each of the sets $A$, $B$, $C$ consists of $n$ complex numbers, and we 
want to test, given a \emph{single} constant-degree trivariate polynomial 
$H \colon \mathbb{C}^3 \to \mathbb{C}$, whether
there is a triple $(a,b,c) \in A\times B\times C$ such that\footnote{%
    Over the reals, $H$ induces two polynomial equations, 
    one for the real part and the other for the imaginary part,
    so this is indeed a special case of the polynomial vanishing pair problem.}
$H(a,b,c) = 0$. This is an extension of the problem studied by 
Barba et al.~\cite{BCILOS-17} over the reals. 
A simple instance of this setup is collinearity testing
in the complex plane, where each of the sets $A$, $B$, $C$ lies on some
suitably parameterized constant-degree algebraic curve; see below for more specific details.

In an earlier version of this work~\cite{AES:prev}, we also mentioned two other instances 
of the polynomial pair vanishing problem, which are two variants of the problem of testing 
for the existence of similar triangles that are determined by $A$, $B$, and $C$.
However, these instances involve a complex polynomial that is either linear (in one instance) 
or can be made linear (in the other instance) after a certain transformation of the input (see~\cite{AES:prev}). 
Following the very recent analysis of Aronov and Cardinal~\cite{AC-20}, 
the case where $H$ is linear, or the more general case where each of our 
real polynomials $F$, $G$ is linear, can be solved, in the \emph{linear} 
decision-tree model, in nearly linear time, using the technique of Kane et al.~\cite{KLM}.
We therefore do not handle these instances in the present version.

\paragraph*{Comments on the purely one-dimensional setup.}

Questions of the type studied in this paper are extensions to higher dimensions
of the algorithmic counterparts of the classical
problems in combinatorial geometry, studied by~Elekes and R\'onyai~\cite{ER}
and by~Elekes and Szab\'o~\cite{ESz}, and later improved, respectively,
by~Raz, Sharir, and Solymosi~\cite{RSS}, and by~Raz, Sharir, and De~Zeeuw~\cite{RSdZ}.
In these problems $A$, $B$, and $C$ are sets of real (or complex) numbers,
and the goal is to bound the number of zeroes that a trivariate real 
(or complex) constant-degree polynomial~$F$ can have on $A\times B\times C$.
As these results show, if $F$ does not have a special, group-like, form,
the number of zeroes is only~$O(n^{11/6})$ (otherwise it can be $\Theta(n^2)$).
This raises the question whether, for polynomials~$F$ that do not have the special form, 
the problem (for one-dimensional sets $A$, $B$, $C$) should be solvable in strictly subquadratic time 
in the uniform model. Strictly subquadratic solutions have recently been obtained in
Barba \etal~\cite{BCILOS-17} in the algebraic decision-tree model,
regardless of whether $F$ does or does not have the special form, for \emph{one-dimensional} 
sets $A$, $B$, $C$, where the running time of the algorithm is close to $O(n^{12/7})$. 
The same approach, combined with more involved algorithmic techniques, 
yields (in \cite{BCILOS-17}) an algorithm in the uniform model that runs in 
$O(n^2(\log\log n)^{3/2}/\log^{1/2}n)$ time. The latter running time has been 
slightly improved to~$O(n^2(\log\log n)^{O(1)}/\log^2n)$ by Chan~\cite{Chan-18}.

Given this apparent (polynomial) hardness of computation, already for one-dimensional point sets in the uniform model,
our goal is thus to obtain a significantly subquadratic solution
(that is, a solution with running time $O(n^{2-c})$, for some constant $c>0$)
in the \emph{algebraic decision-tree model}.
Here we only pay for sign tests that involve the input point coordinates, where
each such test determines the sign of some real polynomial of constant degree
in a constant number of variables. All other operations cost nothing in this model,
and are assumed not to access the input explicitly. For example, detecting
whether some triple of points $(a_1,a_2)\in A$, $(b_1,b_2)\in B$, $(c_1,c_2)\in C$ 
is collinear is done by examining whether the determinant
\begin{equation}
  \label{eq:matrix}
\begin{vmatrix}
1 & a_1 & a_2 \\
1 & b_1 & b_2 \\
1 & c_1 & c_2
\end{vmatrix} 
\end{equation}
is zero. This determinant (whose sign in general determines the orientation of the triple $(a,b,c)$)
is a quadratic polynomial in the six coordinates $(a_1,a_2,b_1,b_2,c_1,c_2)$. We note though
that using only these orientation tests does not yield a worst-case subquadratic solution 
even in the algebraic decision-tree model (see the lower bound argument in~\cite{ES} and see below),
and therefore one needs to resort to somewhat more general polynomial sign tests in order to achieve 
subquadratic solutions in the algebraic decision-tree model.

\paragraph*{Concrete problems in the two-dimensional setup.}
  
Each of the two general questions studied here (of one or two vanishing polynomials)
arises in various concrete problems in computational geometry. For the case of a
single vanishing polynomial, collinearity testing is a fairly famous (or should we say, notorious) example.
Other problems include testing for the existence
of a triangle~$\Delta abc$, for $(a,b,c)\in A\times B\times C$,
that has a given area, or perimeter, or 
circumscribing disk of a given radius, and so on.

Collinearity testing also serves as an instance of the vanishing polynomial pair problem.
The sets $A$, $B$, $C$ are now sets of points in the complex plane $\cplx^2$, each consisting
of $n$ points and lying on some constant-degree algebraic curve of its own,
and the goal is to determine whether $A\times B\times C$ contains a collinear triple.
We will assume in this paper that the curves $\gamma_A$, $\gamma_B$, and $\gamma_C$
that contain, respectively, $A$, $B$, and $C$ are represented parametrically 
by equations of the form $(w,z) = (f_A(t),g_A(t))$, $(w,z) = (f_B(t),g_B(t))$,
and~$(w,z) = (f_C(t),g_C(t))$, where $t$ is a complex parameter and
$f_A$, $g_A$, $f_B$, $g_B$, $f_C$, and $g_C$
are constant-degree univariate (complex) polynomials. (A special case is when
the curves are given by explicit equations of the form $w = f_A(z)$, $w = f_B(z)$, 
and $w = f_C(z)$.) The more general setup in which
$f_A$, $g_A$, $f_B$, $g_B$, $f_C$, and $g_C$ are constant-degree rational functions
can also be handled with a few easy and straightforward modifications.

Given this parameterization, the points of each of the sets
can be represented as points in the real plane (representing the complex numbers $t$),
and the complex polynomial whose vanishing asserts collinearity of a triple 
$a = (z_a,w_a)$, $b = (z_b,w_b)$, $c = (z_c,w_c)$, is
\begin{equation}
  \label{eq:colcomp}
H(t_a,t_b,t_c) \coloneqq
\begin{vmatrix}
1 & f_A(t_a) & g_A(t_a) \\
1 & f_B(t_b) & g_B(t_b) \\
1 & f_C(t_c) & g_C(t_c)
\end{vmatrix} ,
\end{equation}
where $t_a$, $t_b$, $t_c$ are the parameters that specify $a$, $b$, $c$, respectively,
so its real and imaginary components form the pair of real polynomials that have to vanish.
The more general case, where the curves are given by implicit equations of the form
$F_A(z,w) = 0$, $F_B(z,w) = 0$, $F_C(z,w) = 0$, or where the functions $f_A,\ldots,g_C$
are general algebraic functions, will not be addressed in this paper,
although we believe that it can also be handled, using substantially more involved algebraic techniques.

%%%%%%%%%%%%%%%%%%%%%%%%%%%%%%%%%%%%%%%%
\paragraph*{Our results.}

After setting up the technical machinery that our analysis requires, 
in Sections~\ref{sec:prelim} and~\ref{sec:hierarchical} (and also in 
Appendix~\ref{app:hierarchical}), we first consider, 
in Section~\ref{sec:col}, the problem of ``$2{\times}1{\times}1$-dimensional'' (real) collinearity testing,
meaning that $A$ is an arbitrary set of points in the real plane, but each of $B$ and $C$
lies on some respective constant-degree algebraic curve $\gamma_B$, $\gamma_C$, parameterized as above.
We show that this restricted version of the problem can be solved in the
algebraic decision-tree model with $O(n^{28/15+\eps})$ polynomial
sign tests, for any $\eps > 0$ (where again the constant of proportionality
depends on $\eps$). For this we also need to ensure that the underlying polynomial 
has the so-called ``good fibers'' property---see Section~\ref{sec:prelim} for the 
definition and Section~\ref{sec:col} for further details.
We present a general procedure for ensuring this property for the single vanishing
polynomial problem in Section~\ref{sec:col}. The technique extends naturally to 
similar problems involving a single vanishing polynomial, such as determining 
whether $A\times B\times C$ spans a unit-area triangle, or a triangle with other
similar properties, as mentioned above.

We still do not have a subquadratic solution, even in the algebraic 
decision-tree model, to the unconstrained (referred to as $2{\times}2{\times}2$-dimensional)
collinearity testing problem, or even for the more restricted $2{\times}2{\times}1$-dimensional
scenario, where only one of the sets is constrained to lie on a given curve. 
The techniques that we use for the $2{\times}1{\times}1$ version can be 
extended to the general unconstrained (or less constrained) case, but 
they become too inefficient, and actually result in \emph{superquadratic} 
algorithms; see Sections~\ref{sec:col} and~\ref{sec:fg} for more details.
As shown by Erickson and Seidel~\cite{ES}, if the
\emph{only} sign tests that we allow in the decision tree are orientation tests,
then $\Omega(n^2)$ tests are needed in the worst case. The solution
presented here uses other sign tests, making it more powerful (and more efficient).

We then consider in Section~\ref{sec:fg} the problem of testing for a vanishing pair of polynomials, 
which includes the complex collinearity testing problem.
We show that such problems can be solved, in the algebraic decision-tree model, 
with $O(n^{24/13+\eps})$ polynomial sign tests, for any $\eps>0$ (with the 
constant of proportionality depending on $\eps$), where each test involves 
a real polynomial of constant degree in a constant number of variables, 
which in general might be more involved than the given pair of polynomials 
(like those arising in complex collinearity testing). For 
the analysis, we need to assume that the pair of polynomials $F$,~$G$ 
have ``good fibers'' and some additional properties (which they do in 
the collinearity testing problem). We remark that, unlike Section~\ref{sec:col},
we do not have a general procedure that ensures this property,
and ad-hoc techniques are needed
for verifying that this property holds for each concrete instance of the problem;
see Section~\ref{sec:fg}. 

We also consider an extension of the setup of Section~\ref{sec:col} to higher dimensions.
Specifically, we study collinearity testing in real $d$-dimensional spaces, where we assume that 
each of~$B$ and~$C$ lies on a hyperplane.
Our solution is based on projections of the input onto lower-dimensional subspaces,
and achieves the same asymptotic performance as in the plane.
This result is presented in Section~\ref{sec:higher_dim}.

In the earlier version~\cite{AES:prev} of this work, we have also sketched two more general extensions 
of our technique to the vanishing single-polynomial and polynomial-pair problems 
in $d$ dimensions. In the first extension, we assume that $B$ and $C$ each lies 
on a $(d-1)$-dimensional surface, while $A$ is an arbitrary set of points in 
$\reals^d$, and we seek a triple $(a,b,c)$ in $A\times B\times C$ that satisfies 
$d-1$ independent polynomial equations. Collinearity testing in (real) $d$-dimensional space, 
for input sets restricted as above, is a special instance of this setup. 
In the second extension, each of $A$, $B$, $C$ is an arbitrary set of 
points in $\reals^d$, and we seek a triple in $A\times B\times C$ that 
satisfies $d$ independent polynomial equations. The algorithms sketched in \cite{AES:prev}
were incomplete, and completing them into detailed rigorous solutions 
will make the paper considerably longer and more technical, so we omit them in this version.
We only comment that the sketches in \cite{AES:prev} indicate that the first extension can 
be solved in the algebraic decision-tree model in time $O\left(n^{2-2/(12d^2-20d+7)+\eps}\right)$,
and the second one in time $O(n^{2-2/(6d+1)+\eps})$, for any $\eps>0$.
Both bounds match the corresponding results (and contexts) in the plane, when $d=2$.

\smallskip
\noindent\textbf{Remark.}
While there are common features in 
the high-level approaches in this work and in Barba et al.~\cite{BCILOS-17}, 
the actual analysis in this paper becomes more involved and requires new methods and techniques.
We are not aware of a simple extension of 
the analysis of Barba~\etal~\cite{BCILOS-17} or of Chan~\cite{Chan-18} 
to the problems studied in this paper. A main technique in our arsenal,
which we believe to be essential, is to consider the Cartesian product 
of polynomial partitionings, and algorithmic implementations thereof, 
which are based on so-called hierarchical polynomial partitions.
This technique, and several other novel ingredients, will be elaborated below. 
 
The $2{\times}1{\times}1$ case of problems involving a single vanishing polynomial,
considered in Section~\ref{sec:col}, has an alternative subquadratic, albeit less
efficient, solution, using simpler considerations, which somewhat resemble 
the analysis in~\cite{BCILOS-17}.
We present this alternative technique in Section~\ref{sec:alternative}. 
%The same is true for the vanishing pair problem studied in Section~\ref{sec:fg}, and we present the simpler but inferior alternative technique in Section~\ref{sec:alternative_fg}. 

We also comment that Chan~\cite{Chan-18} addresses several related geometric 3SUM-hard problems,
among which is a variant of dual collinearity testing: Given three sets $A$, $B$, and $C$ 
of line segments in the plane, where the segments in $A$ are pairwise disjoint, and so are
the segments in $B$ and in $C$, decide whether there exists a triple of segments in 
$A\times B \times C$ that meet at a common point.
Although Chan's technique results in a slightly subquadratic algorithm in the RAM model, 
and is also claimed (without details) to yield a truly subquadratic algorithm\footnote{%
  By this we mean an algorithm whose running time is $O(n^{2-c})$ for some constant $c > 0$.} \
in the algebraic decision-tree model,
the disjointedness assumptions significantly restrict the problem, so, 
to quote~\cite{Chan-18}, ``it remains open whether there is a subquadratic algorithm 
for the degeneracy testing for $n$ lines in $\reals^2$.''
We remark that, in a subsequent work in progress, we present a truly subquadratic algorithm 
that solves Chan's problem in the algebraic decision-tree model,
with $O(n^{112/57+\eps})$ comparisons, for any $\eps>0$.

%%%%%%%%%%%%%%%%%%%%%%%%%
\section{Preliminaries}
\label{sec:prelim}

\paragraph*{Model of computation.}
First, we need the following strengthening of the RAM model.
We assume that this model supports root extractions, in the sense that for each $b\geq 1$,
the roots of a real univariate polynomial of degree $b$ can be computed in
time that depends only on $b$. This means that we obtain some discrete representation 
of the roots that allow us to perform comparisons and algebraic computations that 
involve these roots. See~\cite{AEZ-19} for a similar assumption.

\paragraph*{Polynomial partitioning.}
Our analysis relies on planar polynomial partitioning and on properties of 
Cartesian products of pairs of them.  For a polynomial 
$f \colon \reals^d \to \reals$, for any~$d \ge 2$, the \emph{zero set} of $f$ is
$Z(f) \coloneqq  \{ x\in\reals^d \mid f(x)=0 \}$. We refer to an open connected 
component of $\reals^d \setminus Z(f)$ as a \emph{cell}. 
The classical Guth-Katz result is: 

%%%%%%%%%%%%%%%%%%%%%%%%%%%%%%%%%%%%%%%%%%%%
\begin{proposition}[Polynomial partitioning; Guth and Katz~\cite{GK}]
  \label{prop:GK}
  Let $P$ be a finite set of points in $\reals^d$, for any $d\ge 2$. 
  For any real parameter $D$ with $1 \le D \le \abs{P}^{1/d}$,
  there exists a real $d$-variate polynomial $f$ of degree $O(D)$ such that
  $\reals^d \setminus Z(f)$ has $O(D^d)$ cells, each containing at most 
  $\abs{P}/D^d$ points of $P$.
\end{proposition}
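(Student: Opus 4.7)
The plan is to follow the classical Guth--Katz proof, which reduces the statement to iterated application of the polynomial ham sandwich theorem. Recall that the polynomial ham sandwich theorem (a consequence of the Borsuk--Ulam theorem, applied to the Veronese embedding of $\reals^d$ into the space of monomials up to some degree) says: given any $k$ finite point sets $S_1,\dots,S_k$ in $\reals^d$, there exists a nonzero real polynomial $h$ of degree $O(k^{1/d})$ such that for each $i$, the two open half-spaces $\{h>0\}$ and $\{h<0\}$ each contain at most $\abs{S_i}/2$ points of $S_i$. I would take this as a black box.

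I would then build $f$ by a bisection process. Set $J \coloneqq \lceil d\log_2 D\rceil$, so that $2^J = \Theta(D^d)$. Starting with $\mathcal{P}_0 \coloneqq \{P\}$, in the $j$-th round ($j=1,\dots,J$) I apply polynomial ham sandwich to the $2^{j-1}$ current subsets $\mathcal{P}_{j-1}$, obtaining a polynomial $h_j$ of degree $O\bigl(2^{(j-1)/d}\bigr)$ whose zero set bisects each subset. I then refine $\mathcal{P}_{j-1}$ into $\mathcal{P}_j$ by splitting each subset according to the sign of $h_j$ (points on $Z(h_j)$ can be assigned arbitrarily, say all to $\{h_j>0\}$ side). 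After $J$ rounds, each surviving subset has at most $\abs{P}/2^J = O(\abs{P}/D^d)$ points. I define
\begin{equation*}
f \coloneqq \prod_{j=1}^{J} h_j,
\end{equation*}
whose degree telescopes as a geometric series $\sum_{j=1}^J O\bigl(2^{(j-1)/d}\bigr) = O\bigl(2^{J/d}\bigr) = O(D)$.

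Two things remain. First, every cell of $\reals^d\setminus Z(f)$ is characterized by a fixed sign vector of $(h_1,\dots,h_J)$, and every point of $P$ lying outside $Z(f)$ inherits such a sign vector; by the bisection property, at most $\abs{P}/2^J = O(\abs{P}/D^d)$ points of $P$ share any given sign vector, so each cell contains at most this many points. Second, the number of cells of $\reals^d\setminus Z(f)$ is bounded, by the Milnor--Thom theorem (or Warren's bound on sign patterns of a single polynomial of degree $\deg f$), by $O((\deg f)^d) = O(D^d)$.

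The main obstacle, and really the only non-mechanical ingredient, is the polynomial ham sandwich theorem itself; everything else is bookkeeping of degrees and a standard cell-count bound. Since the statement allows us to cite Guth--Katz~\cite{GK} directly, the proposal above serves mostly to indicate where the two numerical parameters $O(D^d)$ and $\abs{P}/D^d$ come from: the first is a Milnor--Thom bound applied to the product polynomial, and the second arises from $J$ levels of exact bisection. Two technical points to be careful about are handling points that fall on $Z(f)$ (they lie on the partition's boundary and contribute to no open cell, which is consistent with the statement, which only controls points in cells) and keeping the implicit constants independent of $D$, which holds because the degree of $h_j$ in round $j$ grows geometrically and dominates the sum.
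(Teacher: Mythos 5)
The paper does not prove Proposition~\ref{prop:GK}; it is stated with a citation to Guth and Katz~\cite{GK} and taken as a black box. So there is no ``paper's proof'' to compare against. That said, your sketch is, in outline, the standard Guth--Katz argument: iterated polynomial ham sandwich, telescoping geometric degree sum, and Milnor--Thom/Warren for the cell count.

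There is one bookkeeping slip worth fixing. When you form $\mathcal{P}_j$ from $\mathcal{P}_{j-1}$ you propose to assign points lying on $Z(h_j)$ ``arbitrarily, say all to the $\{h_j>0\}$ side.'' With that convention the intermediate claim ``after $J$ rounds, each surviving subset has at most $\abs{P}/2^J$ points'' is false: the ham sandwich theorem controls only the two \emph{open} half-spaces, so a subset $S$ can split into a negative part of size $\le \abs{S}/2$ and a non-negative part of size up to $\abs{S}$ (if many points sit on $Z(h_j)$). In the degenerate extreme, $h_j$ could vanish on all of $S$, the positive-side subset would be all of $S$, and the next round could even repeat $h_j$ and make no progress. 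The induction therefore does not close as written. Your closing paragraph actually contains the correct statement---only points with a full \emph{strict} sign vector land in a cell, and at most $\abs{P}/2^J$ points of $P$ can share any given strict sign vector---but to make that count follow from the ham sandwich theorem you must apply ham sandwich at round $j$ to the $2^{j-1}$ subsets defined by \emph{strict} sign conditions (equivalently, drop points that fall on any $Z(h_i)$ from the recursion entirely, rather than assigning them to a side). With that change each strict-sign subset has size at most $\abs{P}/2^j$ by a clean induction, and the rest of your argument---the degree telescoping $\sum_j O\bigl(2^{(j-1)/d}\bigr)=O(D)$ and the $O\bigl((\deg f)^d\bigr)=O(D^d)$ cell bound---goes through as you describe.
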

%%%%%%%%%%%%%%%%%%%%%%%%%%%%%%%%%%%%%%%%%%%%

Agarwal, Matou{\v{s}}ek, and Sharir \cite{AMS-13} presented an algorithm that 
efficiently computes\footnote{%
  This polynomial forms a partition approximating the one in Proposition~\ref{prop:GK}, 
  and the constant of proportionality in the degree bound of~\cite{AMS-13} is somewhat larger.} 
such a polynomial~$f$, whose expected running time is $O(nr + r^3)$, where $r = D^d$.

Note that the number of points of $P$ on $Z(f)$ can be arbitrarily large.
For planar polynomial partitions, though, this can be handled fairly easily,
by partitioning the algebraic curve $Z(f)$ into subarcs,
each containing at most $|P|/D^2$ points (as do the complementary cells). 
We state this property formally and spell out the easy details in Appendix~\ref{app:hierarchical}.

%%%%%%%%%%%%%%%%%%%%%%%%%%%%%%%%%%%%%%%%%%%%%%%
\paragraph*{Polynomial partitioning for Cartesian products of point sets in the plane.}

Solymosi and De~Zeeuw~\cite{SdZ-18} studied polynomial partitioning for Cartesian
products of planar point sets.
Given two finite sets $P_1$ and $P_2$ of points in the plane, a natural strategy
to construct a partitioning polynomial for $P_1 \times P_2$ in $\reals^2\times\reals^2$, 
a space that we simply regard as $\reals^4$, is to construct
suitable bivariate partitioning polynomials $\varphi_1$ for $P_1$ and $\varphi_2$ for $P_2$,
as provided in Proposition~\ref{prop:GK}, and then take their product
$\varphi(x,y,z,w) \coloneqq \varphi_1(x,y) \varphi_2 (z,w)$.

%%%%%%%%%%%%%%%%%%%%%%%%%%%%%%%%%%%%
\begin{corollary}[Polynomial partitioning of a Cartesian product~\cite{SdZ-18}]
  \label{cor:partition_CP}
  The partition of $P_{1,2} \coloneqq P_1\times P_2$ just described results in overall $O(D^4)$ relatively open cells of 
  dimensions $2$, $3$, and~$4$, each of which contains at most $\abs{P_{1,2}}/D^4$ points of $P_{1,2}$.
  The zero- and one-dimensional cells do not contain any point of $P_{1,2}$.
\end{corollary}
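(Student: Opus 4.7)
The plan is to construct the cell decomposition by taking the product of two planar stratifications induced by $\varphi_1$ and $\varphi_2$, and then to match each cell-type up with its point-count bound. First I apply Proposition~\ref{prop:GK} separately to $P_1$ and $P_2$, obtaining bivariate partitioning polynomials $\varphi_1$ and $\varphi_2$, each of degree $O(D)$, such that for $i=1,2$ the complement $\reals^2 \setminus Z(\varphi_i)$ has $O(D^2)$ open cells, each containing at most $\abs{P_i}/D^2$ points of~$P_i$. I then set $\varphi(x,y,z,w) \coloneqq \varphi_1(x,y)\,\varphi_2(z,w)$, so that
\[
Z(\varphi) \;=\; \bigl(Z(\varphi_1)\times\reals^2\bigr) \;\cup\; \bigl(\reals^2\times Z(\varphi_2)\bigr).
\]

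Next I refine each $Z(\varphi_i)$ into a stratification consisting of a finite set of $0$-dimensional ``vertices'' and a collection of $1$-dimensional relatively open arcs, using the arc-refinement procedure deferred to Appendix~\ref{app:hierarchical}. That refinement has two features I will rely on: (i) each arc contains at most $\abs{P_i}/D^2$ points of $P_i$; and (ii) the vertices of the stratification are chosen to avoid the points of $P_i$ (by moving any breakpoint off a point of $P_i$ along the arc). The total number of pieces in each stratification of $\reals^2$ remains $O(D^2)$, counting vertices, arcs, and open $2$-cells together. Taking the Cartesian product of these two stratifications yields a stratification of $\reals^4$ whose cells are products $c_1\times c_2$, where each $c_i$ is a vertex, arc, or open $2$-cell of the stratification of $\reals^2$ induced by $\varphi_i$; the dimensions add, so the resulting cells have dimensions in $\{0,1,2,3,4\}$, and their total number is $O(D^2)\cdot O(D^2) = O(D^4)$.

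I then check the point-count claims cell by cell. A $4$-dimensional cell is a product of two open $2$-cells and thus contains at most $(\abs{P_1}/D^2)(\abs{P_2}/D^2) = \abs{P_{1,2}}/D^4$ points of $P_{1,2}$. A $3$-dimensional cell is (arc)\,$\times$\,($2$-cell) or the symmetric product, and the same bound follows from feature~(i). A $2$-dimensional cell is either (arc)\,$\times$\,(arc), again bounded by $\abs{P_{1,2}}/D^4$, or of the form (vertex)\,$\times$\,($2$-cell) (or its mirror); in the latter case feature~(ii) makes one factor $P_i$-free, so the cell contains no points of $P_{1,2}$, which is within the claimed bound. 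Finally, every $0$- or $1$-dimensional cell has at least one vertex factor, hence by feature~(ii) contains no points of $P_{1,2}$, giving the last assertion of the corollary.

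The only nontrivial step is the arc-refinement in the third paragraph --- ensuring both the per-arc point bound and the avoidance of $P_i$ by the vertex set --- so this is where I expect to invoke the details of Appendix~\ref{app:hierarchical}. Everything else is a direct product-of-stratifications argument and a bookkeeping of dimensions.
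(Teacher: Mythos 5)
Your proposal is correct and follows essentially the same route as the paper (and as the cited~\cite{SdZ-18}): you build the Cartesian product of the two planar stratifications, invoke the arc-refinement of Proposition~\ref{prop:curve_partition} (Appendix~\ref{app:hierarchical}) to get breakpoints in $\gamma\setminus Q$, and then count cells and points per cell by multiplying the per-factor bounds. The two features you isolate---per-arc point bound and $P_i$-avoidance of the vertex set---are exactly what Proposition~\ref{prop:curve_partition} provides, so the bookkeeping by cell type goes through as you wrote it.
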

%%%%%%%%%%%%%%%%%%%%%%%%%%%%%%%%%%%%

The analysis in~\cite{SdZ-18} also bounds the number of partition cells
intersected by a two-dimensional algebraic surface $S$ in $\reals^4$, provided
it has ``good fibers.''  We define this notion: 

% \micha{Part (i) of Def should be adjusted after we figure out what's happening in Section 5.}
% \esther{I think we can leave it as is, the further assumptions in Section 5 are made in order to make the analysis work with less pains. These additional assumptions are not needed in the alternative approach, and part (i) is precisely what is needed in order to have the right bound on the number of cells crossed by the surface. }

%%%%%%%%%%%%%%%%%%%%%%%%%%%%%%%
\begin{definition}[Good fibers]
  % \label{def:fibers}
  (i)~A two-dimensional algebraic surface $S$ in $\reals^4 = \reals^2\times \reals^2$
  has \emph{good fibers} if, for every point $p \in \reals^2$, 
  with the possible exception of $O(1)$ points, 
  the fibers $(\{p\} \times \reals^2) \cap S$
  and $({\reals^2} \times \{p\}) \cap S$ are finite.
  (ii)~A two-dimensional algebraic surface $S$ in~$\reals^3 = \reals^2\times \reals$
  has \emph{good fibers} if, for every point $p \in \reals^2$, 
  with the possible exception of~$O(1)$~points, 
  the fiber $(\{p\} \times \reals) \cap S$ is finite (it is $\{p\}\times\reals$ 
  for each exceptional point $p$), and for every point $q \in \reals$, 
  with the possible exception of $O(1)$ points, the fiber $(\reals^2 \times \{q\}) \cap S$ 
  is a one-dimensional variety (i.e., an algebraic curve).
\end{definition}
%%%%%%%%%%%%%%%%%%%%%%%%%%%%%%%

Note that in this definition we are only concerned with one specific
decomposition of the underlying space into a product of two subspaces.

%%%%%%%%%%%%%%%%%%%%%%%%%%%%%%%%%%%%
\begin{proposition}[Cells intersected by a surface~\cite{SdZ-18}]
  \label{prop:intersect}
  Let $S$ be a constant-degree two-dimensional algebraic surface in $\reals^4$ that has good fibers.
  Then $S$ intersects at most $O(D^2)$ two-, three-, and four-dimensional cells in
  the partitioning induced by $P_{1,2}$.
\end{proposition}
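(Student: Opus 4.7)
The plan is to pull the Cartesian-product arrangement back onto the surface $S$ and bound the combinatorial complexity of the resulting two-dimensional arrangement on $S$ itself. A four-dimensional product cell $C_1\times C_2$ is met by $S$ only if it contains a relatively open two-dimensional piece of $S\setminus(Z(\varphi_1)\times\reals^2\cup\reals^2\times Z(\varphi_2))$; a three-dimensional cell is met only if it contains a relatively open one-dimensional piece of $S$ lying on exactly one of these two hypersurfaces; and a two-dimensional cell is met only if it contains a point of $S$ lying on both hypersurfaces simultaneously. So it suffices to bound the numbers of faces, edges, and vertices of the arrangement on $S$ cut out by
\[
\alpha_1 \coloneqq S\cap(Z(\varphi_1)\times\reals^2), \qquad \alpha_2 \coloneqq S\cap(\reals^2\times Z(\varphi_2)),
\]
further refined by the lifts to $S$ of the $O(D^2)$ arc endpoints that subdivide $Z(\varphi_1)$ and $Z(\varphi_2)$ into arcs, as described in Appendix~\ref{app:hierarchical}.

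The good-fibers hypothesis is exactly what makes $\alpha_1$ and $\alpha_2$ behave as honest one-dimensional algebraic curves on $S$. If $S$ contained a positive-dimensional family of fibers $\{p\}\times\reals^2$, then $\alpha_1$ could drop in codimension and become two-dimensional, derailing the count. Good fibers restricts this to at most $O(1)$ exceptional $p$, and for each such $p$ the set $(\{p\}\times\reals^2)\cap S$ is a constant-degree object lying in a single fiber; viewing it in the second $\reals^2$ partitioned by $\varphi_2$, it meets at most $O(D^2)$ cells by Proposition~\ref{prop:GK}. The same holds symmetrically on the other side. Off these $O(1)$ exceptional fibers, a B\'ezout-type argument in $\reals^4$ (using $\deg S = O(1)$ and each of $Z(\varphi_i)\times\reals^2$, $\reals^2\times Z(\varphi_i)$ a hypersurface of degree $O(D)$) shows that $\alpha_1$ and $\alpha_2$ are one-dimensional algebraic curves of degree $O(D)$ on $S$. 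Good fibers further guarantees that each of the $O(D^2)$ arc-endpoint fibers meets $S$ in only $O(1)$ points, producing $O(D^2)$ marked vertices on $\alpha_1\cup\alpha_2$.

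Counting the arrangement on $S$ is then routine: B\'ezout on the constant-degree surface $S$ yields $|\alpha_1\cap\alpha_2| = O(D^2)$; together with the $O(D^2)$ lifted arc-endpoint marks, this subdivides $\alpha_1\cup\alpha_2$ into $O(D^2)$ edges; and Euler's formula on $S$ (whose Euler characteristic is $O(1)$ because $S$ has constant degree) gives $O(D^2)$ faces. Each face of the arrangement lies in a single four-dimensional product cell, each edge in a single three-dimensional cell, and each vertex in a single two-dimensional cell, yielding the desired $O(D^2)$ bound on the total number of cells of dimensions $2$, $3$, and $4$ met by $S$.

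The main obstacle in turning this sketch into a rigorous proof is a careful accounting for degeneracies. Beyond the two-dimensional exceptional fibers handled above, one must control tangential contact between $S$ and either hypersurface, singular points of the $Z(\varphi_i)$ that happen to be chosen as arc endpoints, arc endpoints that coincide with exceptional fibers of $S$, and higher-order contacts between $\alpha_1$ and $\alpha_2$. In each case the constant-degree assumption on $S$ combined with good fibers keeps the extra contribution at $O(D^2)$, so the arrangement-counting argument above goes through essentially unchanged and the overall $O(D^2)$ estimate is preserved.
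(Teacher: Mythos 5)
The paper does not prove Proposition~\ref{prop:intersect}; it is cited from~\cite{SdZ-18}, and Appendix~\ref{app:hierarchical} reuses the same ingredients (the curves $\gamma_1,\gamma_2 = S\cap(\zeta_1\times\reals^2),\ S\cap(\reals^2\times\zeta_2)$, the lifted endpoint sets $Q_1,Q_2$, and the two-dimensional overlay on $S$) while deferring to this proposition for the complexity bound. Your proof reproduces exactly that argument — restrict the partition walls and $1$-gaps to $S$, use good fibers to keep the restrictions one-dimensional apart from $O(1)$ exceptional fibers handled separately, apply B\'ezout for the $O(D^2)$ crossing and marked-point counts, and bound faces via Euler's formula on the constant-degree surface $S$ — so it is essentially the same approach as the source and the appendix sketch.
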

%%%%%%%%%%%%%%%%%%%%%%%%%%%%%%%%%%%%

Both Corollary~\ref{cor:partition_CP} and Proposition~\ref{prop:intersect}
have (simpler) three-dimensional counterparts (i.e., in the context of a Cartesian product 
of a plane and a curve). These versions, while not stated explicitly, are addressed
in the analysis presented in Appendix~\ref{app:hierarchical}.

%%%%%%%%%%%%%%%%%%%%%%%%%%%%%%%%%%%%%%%%%%%%%%%%%%%%%%%%%%
\section{Hierarchical Polynomial Partitioning}
\label{sec:hierarchical}

Even though we work in the algebraic decision-tree model, we still need to account
for the cost of constructing the various polynomial partitionings (as it requires
explicit access to the input points), which, if done by
a na\"ive application of the technique of \cite{AMS-13}, would be too expensive, as 
a direct implementation of our technique needs to use polynomials of high, 
non-constant degree. We circumvent this issue by constructing a
\emph{hierarchical polynomial partitioning}, akin to the
constructions of hierarchical cuttings of Chazelle~\cite{Chazelle-91} and
Matou\v{s}ek~\cite{Mat} from~the~1990s. The material is rather technical,
and, to make the presentation of our main results more accessible,
its details are delegated to Appendix~\ref{app:hierarchical}. 

Roughly, we gain efficiency by constructing a hierarchical tree of partitions 
using constant-degree polynomials, until we reach subsets of the input point set 
of the right size. Concretely, each node of the tree is associated with a cell
$\tau$ of some recursive partition, and with a subset~$P_\tau$ of points of $P$
that lie in\footnote{%
    A cell may contain additional points, but it is associated only with those points that 
    it contains and that are associated with its parent cell---see Appendix~\ref{app:hierarchical}.} 
$\tau$, and $P_\tau$ is partitioned recursively at the descendants of the node.

The actual hierarchical partitions that we will need are within a Cartesian product 
either of two planes or of a plane and a one-dimensional curve, and are obtained by 
taking suitable Cartesian products of partitions constructed within each of these subspaces.

We show that, up to $n^\eps$ factors, we achieve the same combinatorial properties 
as in a single-shot construction with a higher-degree polynomial, at a lower algorithmic cost.
Specifically, we have the following results.

%%%%%%%%%%%%%%%%%%%%%%%%%%%%%%%%%%%%%%%%%%%%%%%%%%%%
\begin{theorem}[Cartesian product of two planar point sets]
  \label{thm:hier_partition_CP}
  Let $P_1$, $P_2$ be two sets of points in the plane, each of size $n$, 
  and put $P_{1,2} \coloneqq P_1 \times P_2\subset\reals^4 = \reals^2\times\reals^2$. 
  Let $1\le r \le n$ be an integer and $\eps>0$.\\
  (i) There is a hierarchical polynomial partition for $P_{1,2}$ with $O((n/r)^{2+\eps})$ bottom-level cells, 
  each of which is associated with a subset of at most $r^2$ points of~$P_{1,2}$, 
  which is the Cartesian product of a set of
  at most $r$~points from~$P_1$ and a set of at most $r$~points from~$P_2$, which it contains. 
  The number of such sets from $P_1$ (resp., $P_2$) is $O((n/r)^{1+\eps})$, each of which 
  is associated with a bottom-level cell in an appropriate hierarchical partition.
  The hierarchy can be constructed implicitly in expected $O(n\log n)$ time. \\
  (ii) Any constant-degree two-dimensional algebraic surface $S$ with good fibers
  reaches\footnote{%
    A surface $S$ is said to \emph{reach} a cell $\tau$ if it intersects $\tau$ 
    and all its ancestral cells---see Appendix~\ref{app:hierarchical}.} 
  at most $O((n/r)^{1+2\eps})$ cells at all levels of the hierarchical partition of $P_{1,2}$.
  These cells can be computed within the same asymptotic time bound $O((n/r)^{1+2\eps})$. \\
  (iii) For any algebraic curve $\gamma$ of degree at most $b$ (where $b$ is a constant), 
  the number of cells at all levels of the hierarchical partition of $P_{1,2}$ reached by 
  $\gamma$, and the time needed to find these cells, are $O((n/r)^{1/2+\eps})$, for any 
  $\eps > 0$. The constant of proportionality depends on $\eps$ and on $b$.
\end{theorem}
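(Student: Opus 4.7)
The plan is to construct two planar hierarchical partitions, one for $P_1$ and one for $P_2$, and then form the product hierarchy whose nodes at level $\ell$ are Cartesian products of pairs of level-$\ell$ cells from the two trees. At each node of each planar tree, I invoke the efficient Guth--Katz construction of Agarwal, Matou\v{s}ek, and Sharir~\cite{AMS-13} with a \emph{fixed constant} degree $D_0$; this partitions the associated point subset into $O(D_0^2)$ open cells, each containing at most a $D_0^{-2}$ fraction of the points, plus the points lying on the zero set $Z(f)$. The points on $Z(f)$ are handled by the subarc decomposition reviewed in Appendix~\ref{app:hierarchical}: cut $Z(f)$ into $O(D_0^2)$ subarcs of comparable weight and treat each subarc as an additional sibling cell. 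We recurse inside every cell (open or subarc) until its associated subset has size at most $r$.

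For part~(i), each planar tree has depth $L = O(\log_{D_0^2}(n/r))$ and, following the standard analysis of hierarchical cuttings \`a la Chazelle~\cite{Chazelle-91} and Matou\v{s}ek~\cite{Mat}, at most $O((n/r)^{1+\eps})$ leaves, where $n^\eps$ absorbs the modest overcount coming from the subarcs on successive zero sets. Taking the product level by level produces $O((n/r)^{2+\eps})$ bottom-level product cells; each is the Cartesian product of a leaf cell from each planar tree and is thus associated with a subset of at most $r \cdot r = r^2$ points of $P_{1,2}$, which is itself a Cartesian product of two $r$-point subsets. The construction cost is dominated by the AMS calls, which at level $\ell$ cost $O(n_\ell D_0^2 + D_0^6) = O(n_\ell)$ and sum to $O(n\log n)$ across all $O(\log n)$ levels of both planar trees; the product tree need not be materialized explicitly, as it can be traversed on demand using pointers between the two planar trees.

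For part~(ii), I argue level by level that a constant-degree surface $S$ with good fibers reaches at most $O(D_0^2)$ children of every product cell it already reaches. Within a fixed parent $\tau_1 \times \tau_2$, the $O(D_0^4)$ children are exactly the product cells produced by the two local partitioning polynomials of degree $D_0$, so Proposition~\ref{prop:intersect} applies and supplies the $O(D_0^2)$ bound, provided the restriction $S\cap(\tau_1\times\tau_2)$ has good fibers. The good-fibers hypothesis on $S$ passes to each restriction up to $O(1)$ exceptional fibers per cell, which are absorbed into the $n^\eps$ slack as in the single-shot argument of Solymosi and De~Zeeuw~\cite{SdZ-18}. Summing the geometric series $\sum_{\ell=0}^{L} O(D_0^{2\ell})$ and compounding with the per-level $n^\eps$ overhead yields the desired total of $O((n/r)^{1+2\eps})$ reached cells, and a straightforward top-down traversal, testing $S$ against each candidate child via constant-degree algebra, runs within the same asymptotic time bound.

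For part~(iii), the key observation is that a one-dimensional curve $\gamma$ of constant degree in $\reals^4$ reaches only $O(D_0)$ subcells of any product cell $\tau_1 \times \tau_2$ it already reaches. Indeed, the two coordinate projections of $\gamma$ into $\reals^2$ are plane curves of constant degree, and each crosses at most $O(D_0)$ cells of the corresponding local planar partition by a standard B\'ezout estimate; as we parameterize $\gamma$, the current product cell can change only when one of the two projections crosses a boundary of its planar partition, giving the claimed $O(D_0)$ bound (if one projection collapses to a point the bound is even better). The reached-count therefore grows by a factor of $O(D_0)$ per level rather than $O(D_0^2)$, telescoping to $O(D_0^L)=O((n/r)^{1/2})$ at the leaves, and to $O((n/r)^{1/2+\eps})$ after incorporating the hierarchy's slack. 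The main technical obstacle throughout is the bookkeeping needed to propagate the good-fibers property across restrictions and to account uniformly for points on the successive zero sets of both factors of the product, while keeping all hidden constants and the $\eps$ slack independent of the input size.
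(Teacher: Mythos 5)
Your proposal is essentially correct and mirrors the paper's own construction closely: constant-degree planar hierarchies for $P_1$ and for $P_2$, built via AMS-13 together with the subarc decomposition of the zero set, combined level-by-level into a product hierarchy, with the per-level $O(D^2)$ crossing bound for surfaces (from Proposition~\ref{prop:intersect}) and $O(D)$ for curves driving recurrences whose slack is absorbed by raising the exponent by $\eps$. One small clarification: in part~(ii) there is no need to ``pass'' the good-fibers property to restrictions $S\cap(\tau_1\times\tau_2)$ or to track exceptional fibers per cell; the good-fibers hypothesis on $S$ is global, and Proposition~\ref{prop:intersect} applies directly to the fresh product partition constructed inside each reached cell, which is exactly how the paper sets up recurrence~\eqref{eq:surface_crossing}.
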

%%%%%%%%%%%%%%%%%%%%%%%%%%%%

%%%%%%%%%%%%%%%%%%%%%%%%%%%%
\begin{theorem}[Cartesian product of a planar point set and a 1D set] 
  \label{thm:hier_partition_12}
  Let $P$ be a set of $n$ points in the plane, and let $Q$ be a set of $n$ points lying on a
  constant-degree algebraic curve~$\gamma \subset \reals^2$.
  Let $1 \le r, s \le \sqrt{n}$ be integer parameters.\footnote{%
    The artificial threshold $\sqrt{n}$ is assumed because of a 
    certain technical step in the analysis---see Appendix~\ref{app:hierarchical} for details.}
  \\
  (i) There is a hierarchical polynomial partition for $P \times Q\subset \reals^2\times \gamma$
  into $O(n^{2+\eps}/(rs)^{1+\eps})$ bottom-level cells, for any $\eps > 0$,
  each of which is associated with a subset of at most $rs$ points of $P \times Q$,
  which is the Cartesian product of a set of at most $r$ points from $P$ and a set of at most $s$ points from $Q$.
  The number of such sets from~$P$ (resp., $Q$) is~$O((n/r)^{1+\eps})$ (resp.,~$O(n/s)$),
  each of which is associated with a bottom-level cell in an appropriate hierarchical partition.
  The hierarchy can be constructed implicitly in expected $O(n\log n)$ time. \\
  (ii) Any constant-degree two-dimensional surface $S$ with good fibers
  reaches (in the same sense as above) at most $O\left(\frac{n^{3/2+\eps}}{r^{1/2+\eps}s^{1+\eps}} \right)$
  cells at all levels of the hierarchical partition of $P \times Q$.
  These cells can be computed within the same asymptotic time bound
  $O\left(\frac{n^{3/2+\eps}}{r^{1/2+\eps}s^{1+\eps}} \right)$. 
  % (iii) For any algebraic curve $\gamma$ of degree at most $b$ (where $b$ is a constant), 
  % the number of cells at all levels of the hierarchical partition of $P_{1,2}$ reached by 
  % $\gamma$, and the time needed to find these cells, are $O((n/r)^{1/2+\eps})$, for any 
  % $\eps > 0$. The constant of proportionality depends on $\eps$ and on $b$.
\end{theorem}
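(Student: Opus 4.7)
The plan is to prove Theorem~\ref{thm:hier_partition_12} by combining a hierarchical polynomial partition for $P$ in $\reals^2$ (built as in Theorem~\ref{thm:hier_partition_CP}) with a one-dimensional hierarchical partition for $Q$ along $\gamma$, and then analyzing how a constant-degree two-dimensional surface $S$ with good fibers meets the cells of the resulting product hierarchy.

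For part~(i), I would build the hierarchy for $P$ by the recursive scheme of Appendix~\ref{app:hierarchical}, using constant-degree partitioning polynomials at each level until every bottom-level cell is associated with at most $r$ points of $P$; this gives $O((n/r)^{1+\eps})$ bottom cells in $O(n\log n)$ expected time. The hierarchy for $Q$ is built by sorting $Q$ along a parameterization of $\gamma$ and recursively subdividing into a constant number of arcs, yielding $O((n/s)^{1+\eps})$ bottom arcs, each carrying at most $s$ points of $Q$, in $O(n\log n)$ time. The Cartesian product of the two hierarchies then produces bottom-level cells of the form $C\times\alpha$, each associated with at most $rs$ points of $P\times Q$, for a total of $O((n/r)^{1+\eps})\cdot O((n/s)^{1+\eps}) = O(n^{2+\eps}/(rs)^{1+\eps})$ cells after absorbing constants into $\eps$.

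For part~(ii), fix a bottom-level arc $\alpha$ and consider the swept region $R_\alpha \coloneqq \pi_{\reals^2}(S\cap(\reals^2\times\alpha))\subset\reals^2$. Its boundary is composed of $O(1)$ constant-degree algebraic curves: the slice curves $S_{q_1^*},S_{q_2^*}$ at the two endpoints of $\alpha$, together with the envelope of the family $\{S_q\}_{q\in\alpha}$, obtained by elimination. By the single-plane analogue of Theorem~\ref{thm:hier_partition_CP}(iii), each such constant-degree curve reaches $O((n/r)^{1/2+\eps})$ cells of the $P$-hierarchy, giving $O((n/r)^{1/2+\eps})$ two-dimensional cells touching $\partial R_\alpha$. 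Summing over the $O((n/s)^{1+\eps})$ bottom arcs yields the claimed boundary contribution $O(n^{3/2+\eps}/(r^{1/2+\eps}s^{1+\eps}))$. For cells entirely contained in some $R_\alpha$, the good-fibers property lets me write $S$ locally as $O(1)$ sheets $q=f_i(p)$; a cell $C$ that meets none of the slice curves $S_{q_j^*}$ has each $f_i(C)$ a connected interval of $\gamma$ contained in a single arc, so $C$ participates in $O(1)$ arcs, giving an interior contribution of $O((n/r)^{1+\eps})$, which is dominated by the boundary term whenever $r,s\le\sqrt{n}$. The reached cells can be enumerated top-down in time proportional to their count by tracing these boundary curves through the product hierarchy.

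The main obstacle I expect is the interior argument, and in particular, showing rigorously that the envelope of $\{S_q\}_{q\in\alpha}$ has constant degree (a standard resultant computation, but one that must respect the $O(1)$ exceptional fibers permitted by the good-fibers definition), together with verifying that the sum of interior (cell, arc) pairs across \emph{all} levels of the hierarchy (not just the bottom) stays absorbed into the claimed bound. A secondary technical issue is the clean propagation of $\eps$-slack across two independent hierarchies and the matching between combinatorial count and algorithmic time, both of which parallel the corresponding analyses in Appendix~\ref{app:hierarchical} for Theorem~\ref{thm:hier_partition_CP}.
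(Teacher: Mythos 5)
Your proposal takes a genuinely different route from the paper, but it has gaps that are not merely technical. The paper's proof works \emph{on the surface $S$ itself}: at each level of the hierarchy it intersects $S$ with the walls $\zeta\times\reals$, $\reals^2\times X'$ and the $1$-gap $X\times\reals$, builds the resulting planar map on $S$, and bounds its complexity by $O(b^2 D_1 D_2^2)$ via B\'ezout; the recurrence $X_S(n_1,n_2)\le O(b^2 D_1 D_2^2)\,X_S(n_1/D_1^2,n_2/D_2^2)+O(b^2 D_1 D_2^2)$ then yields the stated bound and automatically accounts for \emph{all} levels and for the ``reaches'' semantics. Crucially, the two hierarchies for $P$ and for $Q$ must have the \emph{same depth} $k$; this is enforced by choosing $D_2=D_1^{\log(n/s)/\log(n/r)}$, and the restriction $r,s\le\sqrt n$ in the theorem exists precisely to guarantee $\sqrt{D_1}\le D_2\le D_1^2$ so that both parameters can be constants. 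Your proposal builds the two hierarchies independently, so the product is not a hierarchy with well-defined levels; the phrase ``cells at all levels'' has no clean meaning in your framework, and you invoke $r,s\le\sqrt n$ only for the interior-versus-boundary domination, missing the actual role of that constraint (which is what the footnote in the theorem refers to).

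The second, more serious, gap is in the interior argument. You want a cell $C$ that avoids the slice curves $S_{q_j^*}$ to contribute $O(1)$ arcs, because the sheets $f_i$ on $C$ have connected images contained in a single arc of $\gamma$. But for the $f_i$ to be well-defined, single-valued and continuous on $C$, the cell must avoid the entire apparent contour (silhouette) of the projection $S\to\reals^2$---the locus above which sheets merge or vanish---and not merely the slice curves and ``the envelope of $\{S_q\}_{q\in\alpha}$.'' The latter is not an algebraic curve (the algebraic object is the envelope over \emph{all} $q$), and in any case is only a piece of the silhouette; if $C\subset R_\alpha$ crosses the silhouette away from $\partial R_\alpha$, your sheet decomposition breaks and the $O(1)$-arcs-per-cell bound can fail. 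You would need to add the full silhouette to the set of boundary curves and charge its crossings too, which you do not do. (Two smaller issues: the count of $Q$-arcs should be $O(n/s)$, not $O((n/s)^{1+\eps})$, since the one-dimensional partition can be taken exact; and the algorithmic part of the statement is dismissed in one sentence, whereas the paper derives it from the same recurrence that gives the combinatorial bound.)
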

%%%%%%%%%%%%%%%%%%%%%%%%%%%%

%%%%%%%%%%%%%%%%%%%%%%%%%%%%%%%%%%%%%%%%%%%%%%%%%%%%%%%%%%%%%%%%%%
\section{Collinearity Testing and Related Problems:\\ 
Vanishing of a Single Polynomial in $2{\times}1{\times}1$ Dimensions} 
\label{sec:col}

Let $A$, $B$, and $C$ be three sets of points in the plane, but assume that
$B$ and $C$ lie on respective constant-degree algebraic curves $\gamma_B$ and $\gamma_C$, 
assumed to be polynomially parameterized, meaning that they are given in
the parametric forms $\gamma_B(t) = (x_B(t),y_B(t))$ and $\gamma_C(s) = (x_C(s),y_C(s))$),
for $t,s\in \reals$, where $x_B(t)$, $y_B(t)$, $x_C(s)$, $y_C(s)$ are constant-degree polynomials.
With a small additional effort, the machinery developed here also applies to the case
where these are rational functions,\footnote{%
  The machinery also applies to situations where $x_B(t)$, $y_B(t)$, $x_C(s)$, $y_C(s)$ 
  are constant-degree continuous algebraic functions, but this involves additional
  technical complications, which we prefer to avoid in this presentation.}
but we stick to polynomiality for conciseness.
We make this assumption for all versions of the problems considered in this paper.

Our goal is to determine whether there exists a collinear triple of points
in $(a,b,c)\in A\times B\times C$, or more generally a triple $(a,b,c)$ satisfying some 
\emph{single} prescribed constant-degree polynomial equation $F(a,b,c) = 0$. 
For simplicity of exposition, we present our results in two stages.
We first focus on the collinearity testing problem, and simplify it further 
by assuming that $\gamma_C$ is a line. This simplifies some aspects of the analysis.
We then describe how to extend the analysis, for collinearity testing,
when $\gamma_C$ is a general constant-degree algebraic curve, and
for more general problems of this sort. 

In the first part of this section (Sections~\ref{sec:line}--\ref{sec:1poly})
we present our main algorithmic approach, 
which has the best complexity bound that we were able to obtain. 
In Section~\ref{sec:alternative} we present an alternative,
somewhat simpler, solution, which has an inferior 
(albeit still significantly subquadratic) complexity bound.

\subsection{Collinearity testing when $\gamma_C$ is a line} \label{sec:line}

In this subsection we consider the collinearity testing problem for the case
where $\gamma_C$ is a line (the case where the other curve $\gamma_B$ is a 
line is of course fully symmetric).

We assume that the sets $A$, $B$, and $C$ are pairwise disjoint, for otherwise 
collinear triples exist trivially; this condition can be checked efficiently.

In general, a triple $(a=(a_1,a_2),\;b=(x_B(t),y_B(t)),\;c=(x_C(s),y_C(s)))$ is collinear if and only if 
\[
\begin{vmatrix}
1 & a_1 & a_2 \\
1 & x_B(t) & y_B(t) \\
1 & x_C(s) & y_C(s)
\end{vmatrix}
 = 0 \,, \quad\text{or}
\]
\begin{equation}
  \label{eq:collinear}
x_B(t)y_C(s) - y_B(t)x_C(s) - a_1(y_C(s)-y_B(t)) + a_2(x_C(s)-x_B(t)) = 0 .
\end{equation}
In the special case where $\gamma_C$ is a line,
say the $x$-axis, we have $\gamma_C(s) = (s,0)$, and \eqref{eq:collinear}~becomes:
\begin{equation}
  \label{eq:colline}
- y_B(t)s + a_1y_B(t) + a_2(s-x_B(t)) = 0 ,  \text{\quad or\quad}
s = \varphi(a,t) \coloneqq \frac{a_1y_B(t) - a_2x_B(t)} {y_B(t) - a_2} .
\end{equation}
Here $\varphi$ is a constant-degree rational function; it is a linear rational function of $a$.

To simplify the analysis, we first dispose of some special situations, which can 
easily be detected and dealt with efficiently. Some of these assumptions reappear 
later when we present (in Section~\ref{sec:good}) a general technique for testing for
the good-fibers property, which we will need to do
for certain polynomials that arise in the analysis, as well as for other similar
properties, and for handling the problem in a simple efficient manner when these
properties do not hold (this section also handles, in this manner, some of the
properties that we need to assume).

We assume that $\gamma_B$ is an irreducible curve (for general collinearity testing, 
$\gamma_C$ is also assumed to be irreducible).
Otherwise, the analysis can be applied to individual irreducible components of these curves. 
(Since the curves are of constant degree, factoring each of them, over the real field,
into its irreducible components can be done in constant time under various assumptions
and model of computation; see, e.g., von zur Gathen~\cite{vzg} and Kaltofen~\cite{Kal}.)
If $\gamma_B$ and $\gamma_C$ are distinct, there are at most $O(1)$ points $q \in \gamma_B \cap \gamma_C$.
In this case, we will further assume that neither $B$ nor $C$ include such points.
Indeed, for any point $c \in \gamma_B \cap C$, we can detect any collinearity involving $c$ 
explicitly in $O(n \log n)$ time, by sorting points of~$A \cup B$ around~$c$ in angular order. 
If a collinearity is detected, we stop, otherwise we remove such points $c$ from $C$ and continue. 
Points of $B$ lying on $\gamma_C$ can be treated symmetrically. As there are only $O(1)$ 
such points, we have expended only $O(n \log n)$ work so far.

The analysis is not affected when $\gamma_B$ and $\gamma_C$ coincide but are not a line (in the general case).
If they coincide and are a common line, there is a collinearity if and only if $A$ has points
on that line, an easily detected scenario (we use here the assumption, already discussed above,
that $B$ and $C$ are disjoint).

To summarize, by all the assumptions made so far, $\gamma_C$ is the $x$-axis, $\gamma_B$ 
does not coincide with it, and no points of $B \cup C$ lie at the intersection of the two curves.
We may also assume that no points of $A$ lie on $\gamma_C$: any such point can participate in
a collinear triple only with points of $B\cap\gamma_C$, and we have just assumed that there are no such points.

Returning to the collinearity testing procedure,
we fix a pair of parameters $g,h\ll n$ (whose values will be set later)
and a parameter $\eps>0$,
and apply Theorem~\ref{thm:hier_partition_12}(i) to the sets $A$,~$B$,
with the respective parameters $r = g, s = h$.
Let $\tau$ (resp., $\tau'$) be a bottom-level cell in the resulting partition for $A$ 
(resp., $B$). Put $A_\tau \coloneqq A\cap\tau$ and 
$B_{\tau'} \coloneqq B\cap\tau'$. In this analysis, somewhat abusing the notation, 
we regard $B$ as a subset of $\reals$, and denote by $t$ the real parameter that
parameterizes~$\gamma_B$; in particular, we write $t \in B$ (resp., $t \in B_{\tau'}$) 
instead of $\gamma_B(t)\in B$ (resp., $\gamma_B(t) \in B_{\tau'}$). \footnote{%
  In Section~\ref{sec:1poly}, when we discuss the general problem, and also in
  Section~\ref{sec:good}, we use a different notation, denoting the set of the values
  of $t$ that parameterize the points of $B$ as $T$, with a similar replaced notation for $C$.}
The number of bottom-level cells~$\tau$ (and sets $A_\tau$) is 
$O\left(\left(\frac{n}{g}\right)^{1+\eps} \right)$, for any $\eps>0$, and
the number of bottom-level cells $\tau'$ (and sets $B_{\tau'}$) is $n/h$ 
(see also Appendix~\ref{app:hierarchical} for these details).

The high-level idea of the algorithm is to sort each of the sets
$\{ \varphi(a,t) \mid (a,t) \in A_\tau \times B_{\tau'} \}$,
over all pairs $(\tau,\tau')$ of cells, and then to
search with each $c=(s,0)\in C$ (i.e., with the corresponding real value~$s$)
through only the sorted lists that might contain $s$; 
the number of such lists is small, as argued below.

As in all works on this type of problems, starting from \cite{GP},
sorting the sets explicitly is too expensive, because the overall number of elements in the lists is quadratic.
Instead, we use, in the algebraic decision-tree model, a simple instance of the algebraic variant of
\emph{Fredman's trick}, extending the machinery used in the previous algorithms
for one-dimensional point sets~\cite{BCILOS-17,GP}.

\paragraph*{Preprocessing for batched point location.}

Consider the step of sorting the set $\{\varphi(a,t) \mid (a,t)\in A_\tau\times B_{\tau'}\}$,
for a pair of cells $\tau$, $\tau'$,
which has to perform various comparisons of pairs of values 
$\varphi(a,t)$ and $\varphi(a',t')$, for $a, a'\in A_\tau$, $t, t'\in B_{\tau'}$.
We perform this task globally over all pairs $(\tau,\tau')$ of cells.

We recurse by switching between the ``primal'' and ``dual'' setups.
In the primal, we view $P \coloneqq \bigcup_\tau A_\tau\times A_\tau$ as a set of 
$O\left(\left(\frac{n}{g}\right)^{1+\eps} \cdot g^2 \right) = O(n^{1+\eps} g^{1-\eps})$ 
points in $\reals^4$, and associate with each pair $(t,t') \in B_{\tau'}\times B_{\tau'}$, 
for each cell $\tau'$, the three-dimensional constant-degree algebraic surface 
\begin{equation} \label{eq:sigma}
\sigma_{t,t'} \coloneqq \{ (a,a') \in \reals^4 \mid \varphi(a,t) = \varphi(a',t') \} .
\end{equation}
We let $\Sigma$ be the collection of all these surfaces, over all %
cells $\tau'$, and have $\abs{\Sigma} = n/h \cdot h^2 = nh$.

In the dual, we view the pairs $(t,t') \in \bigcup_{\tau'} B_{\tau'} \times B_{\tau'}$
as points in the plane, and associate with each pair $(a,a') \in P$ the curve 
\begin{equation} \label{eq:delta}
\delta_{a,a'} \coloneqq \{ (t,t') \in \reals^2 \mid \varphi(a,t) = \varphi(a',t') \} .
\end{equation}
In each primal problem we need to perform batched point-location queries 
in an arrangement of (some subset of) the constant-degree algebraic 3-surfaces 
$\sigma_{t,t'}$ in $\reals^4$, and in each dual problem we need to perform 
batched point-location queries in an arrangement of (some subset of) the
constant-degree algebraic curves $\delta_{a,a'}$ in $\reals^2$. Initially we 
are in the primal, with $O(n^{1 + \eps} g^{1-\eps})$ points and $nh$ 3-surfaces.

If we could construct the full arrangement of these surfaces in the primal, or of these curves in the dual, and locate in it
all the respective points, we would obtain the signs of all the differences $\varphi(a,t) - \varphi(a',t')$,
for all $(a,t)$, $(a',t')\in A_\tau\times B_{\tau'}$, over all pairs $(\tau,\tau')$ of cells,
which would determine (at no extra cost in terms of comparisons) the sorted order of the sets
$\{\varphi(a,t) \mid (a,t)\in A_\tau\times B_{\tau'}\}$, over all pairs $(\tau,\tau')$.
However, a single-step construction of the full arrangement is too expensive, so we replace it with the above
``flip-flop'' primal-dual processing, each time partitioning (the current version of) the arrangement using 
a polynomial of small degree, and thereby reduce the cost to the subquadratic bound stated below.

The output of this preprocessing, obtained in a standard manner, is a representation
of $P\times\Sigma$ as a disjoint union $\bigcup_\alpha P_\alpha\times \Sigma_\alpha$ 
of complete bipartite graphs, where for each $\alpha$, the differences 
$\varphi(a,t) - \varphi(a',t')$, for all $(a,a')\in P_\alpha$, 
$\sigma_{(t,t')}\in \Sigma_\alpha$, have a fixed sign (similar approaches, in more 
involved forms, are also used in Sections~\ref{sec:1poly} and \ref{sec:fg}).  
For efficiency, each of these complete bipartite graphs is represented by the pair 
of its vertex sets, so the output size of the procedure is the sum of the sizes 
of the vertex sets of the graphs in the decomposition.

\begin{lemma}
  \label{lem:107}
  The above recursive batched point-location stage takes randomized
  expected time $O\left(n^{10/7+\eps'}g^{6/7+\eps'}h^{4/7}\right)$,
  % also in the uniform model,
  where $\eps'$ is larger, by a small constant factor, than the prescribed~$\eps$.
  This also bounds the output size of the procedure, as just defined.
\end{lemma}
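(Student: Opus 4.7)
The plan is to analyze the recursive primal--dual batched point-location procedure as a flip-flop that alternates constant-degree polynomial partitionings in $\reals^4$ (where $P$ lives and the $\sigma_{t,t'}$ are constant-degree three-dimensional hypersurfaces) with polynomial partitionings in $\reals^2$ (where the parameters $(t,t')$ of the surfaces in $\Sigma$ become points and the pairs $(a,a')\in P$ become the constant-degree curves $\delta_{a,a'}$). Let $T(N,M)$ denote the expected time plus output size for an instance of $N$ primal points (equivalently, $N$ dual curves) and $M$ primal 3-surfaces (equivalently, $M$ dual points). I will prove $T(N,M) = O(N^{6/7}M^{4/7})$ up to $n^\eps$ factors, from which the claimed bound follows on substituting $N = O(n^{1+\eps}g^{1-\eps})$ and $M = nh$.

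One primal step, with a constant degree $D_1$, applies Proposition~\ref{prop:GK} in $\reals^4$ to the current point set, producing $O(D_1^4)$ cells of at most $N/D_1^4$ points each, in expected time $O(N)$ via~\cite{AMS-13}. By a standard B\'ezout-type argument, each constant-degree codimension-one surface $\sigma_{t,t'}$ crosses $O(D_1^3)$ cells, so if $M_\tau$ is the number of surfaces crossing cell~$\tau$, then $\sum_\tau M_\tau = O(MD_1^3)$. Pairs $(\tau,\sigma)$ with $\sigma$ not crossing $\tau$ contribute complete bipartite blocks to the output (the sign of $\varphi(a,t)-\varphi(a',t')$ on each block is determined by a single probe); crossing pairs are passed to a symmetric dual step that partitions $\reals^2$ with a constant degree $D_2$ into $O(D_2^2)$ cells of at most $M_\tau/D_2^2$ dual points each and total dual-curve crossings $O(N_\tau D_2)$, emits the corresponding non-crossing bipartite blocks, and recurses back to the primal on the crossing subproblems.

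One primal--dual round thus yields $D_1^4 D_2^2$ subproblems of sizes $(N_{\tau,\tau'},M_{\tau,\tau'})$ with $M_{\tau,\tau'}\le M_\tau/D_2^2$ and $\sum_{\tau'} N_{\tau,\tau'} = O(N_\tau D_2)$. Plugging in the ansatz $T(N,M) = c\,N^a M^b$ and applying H\"older's inequality twice---to $\sum_{\tau'} N_{\tau,\tau'}^a$ in the dual step and to $\sum_\tau M_\tau^b$ in the primal step---gives
\[
\sum_{\tau,\tau'} T(N_{\tau,\tau'},M_{\tau,\tau'}) \;\le\; c\,D_1^{4-4a-b}\,D_2^{2-a-2b}\,N^a M^b.
\]
Choosing $a=6/7$ and $b=4/7$ annihilates both exponents of $D_1$ and $D_2$, preserving the ansatz with the same constant. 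The additive $O(N+M)$ overhead per level, summed over the $O(\log n)$ recursion depth, is dominated by $N^{6/7}M^{4/7}$ in the parameter range of interest, and the $n^\eps$ slack introduced by using the hierarchical (rather than single-shot) partitioning of Section~\ref{sec:hierarchical}, needed to keep the per-level partition construction cost linear in the subinstance size, is absorbed into the $\eps' = O(\eps)$ of the statement.

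The main obstacle will be closing the recurrence cleanly in both dimensions simultaneously: the exponents $6/7$ and $4/7$ are rigidly pinned down by the pair of crossing bounds $O(D_1^3)$ in $\reals^4$ and $O(D_2)$ in $\reals^2$ together with the cell counts $D_1^4$ and $D_2^2$, so any weakening of these bounds (or any logarithmic factor that fails to be absorbable into $n^\eps$) would immediately worsen the exponent. Boundary cases where the $N+M$ additive term dominates are handled by brute force, and the bookkeeping that guarantees every input pair ends up in exactly one complete bipartite block follows the standard template used in~\cite{BCILOS-17}.
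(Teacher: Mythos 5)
Your proposal replaces the paper's $(1/r)$-cuttings (vertical decompositions) with Guth--Katz polynomial partitioning and closes the recurrence via H\"older/Jensen; the exponent calculation pinning down $a=6/7$, $b=4/7$ from the constraints $4a+b=4$, $a+2b=2$ is correct and does recover the main term. However, there is a genuine gap in the accounting of the additive overhead and, equivalently, the output size. You claim the overhead is $O(N+M)$ ``per level.'' This is false for the recursion you describe: the primal Guth--Katz step \emph{partitions} the points but \emph{multiplies} each surface across the $O(D_1^3)$ cells it crosses, and the dual step does the reverse, multiplying each point (as a dual curve) across $O(D_2)$ cells. Consequently, the total number of (point, subproblem) incidences at level $i$ is $\Theta(N D_2^i)$ and the total number of (surface, subproblem) incidences is $\Theta(M D_1^{3i})$, i.e., the per-level cost grows geometrically. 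Termination at constant-size subproblems forces $D_1^{4j}=N$ and $D_2^{2j}=M$, so the cumulative overhead and output size are $\Theta\bigl(NM^{1/2} + MN^{3/4}\bigr)$. For the concrete parameters used downstream ($g=n^{2/15}$, $h=n^{1/2}g^{1/2}$, giving $N\approx n^{17/15}$ and $M\approx n^{47/30}$), the term $MN^{3/4}\approx n^{29/12}$ is actually superquadratic in $n$, far exceeding the claimed $N^{6/7}M^{4/7}\approx n^{28/15}$. The H\"older argument bounds $\sum_{\tau,\tau'} N_{\tau,\tau'}^a M_{\tau,\tau'}^b$ and hence the number of \emph{nonempty} bottom-level subproblems, but it does not bound $\sum_v (N_v+M_v)$ over the recursion tree, which is what the running time and the size of the bipartite-graph representation are charged to.

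The paper avoids this by using $(1/r_1)$- and $(1/r_2)$-cuttings together with post-hoc slicing: a single primal cutting step simultaneously guarantees at most $N/r_1$ surfaces crossing each prism (cutting property) \emph{and}, after slicing, at most $M/r_1^4$ points per prism, so both quantities shrink at both the primal and the dual steps. After $j$ rounds with $r_1^j r_2^{2j}=N$ and $r_1^{4j}r_2^j=M$, the total points and total surfaces at the bottom each come out to exactly $M^{6/7}N^{4/7}$ (in the paper's notation $M=$ points, $N=$ surfaces), balanced with the number of bottom-level cells. Your recursion, by contrast, reduces only one quantity per step in the worst case (the average-case control via H\"older does not control the recursion depth), which is precisely what inflates the overhead. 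A secondary, smaller gap: Proposition~\ref{prop:GK} gives no control over points of $P$ lying on $Z(f)$, which the paper explicitly notes ``can be arbitrarily large''; the cutting-based decomposition tiles all of $\reals^4$ and sidesteps this, whereas your use of Proposition~\ref{prop:GK} would need an extra device (e.g., a Matou\v{s}ek--Pat\'akov\'a-style multilevel partition, or the curve-splitting device used elsewhere in the paper) that is not mentioned.
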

\begin{proof}
Put $M \coloneqq n^{1+\eps} g^{1-\eps}$, $N \coloneqq nh$, 
and fix two suitable sufficiently large constant parameters 
$r_1$, $r_2$, whose precise choice will be detailed shortly. 
We construct a $(1/r_1)$-cutting for~$\Sigma$, using vertical decomposition
(see~\cite{CEGS-91,Koltun-04} for details).
This forms a decomposition of $\reals^4$ into relatively-open pseudo-prisms of dimensions $0,\dots,4$, that we refer to
simply as \emph{prisms}, whose number is $O(r_1^{4+\eta})$, for any~$\eta>0$, so that each prism is crossed by
at most~$N/r_1$ surfaces of~$\Sigma$, and lower-dimensional prisms may be
fully contained in some of the other surfaces. By slicing the prisms into subprisms, 
we can also assume that each of these subprisms contains at most $M/r_1^4$ points of $P$;
somewhat abusing the terminology, we refer to the resulting subprisms also as prisms.

For each prism~$\zeta$ of the decomposition, we pass to the dual plane, 
with the set $P_\zeta^*$ of at most $M/r_1^4$ dual curves corresponding 
to points contained in~$\zeta$, and the set $\Sigma_\zeta^*$ 
of at most $N/r_1$ dual points corresponding to surfaces %
crossing~$\zeta$. We now apply a planar decomposition to this setting, 
using a $(1/r_2)$-cutting for the set of curves $P_\zeta^*$.
This results in $O(r_2^2)$ (pseudo-)trapezoids (see, e.g.,~\cite{CF-90}),
each crossed by at most $(M/r_1^4)/r_2 = \frac{M}{r_1^4r_2}$ dual curves,
and, after slicing the trapezoids into sub-trapezoids, as needed,
each trapezoid contains at most $(N/r_1)/r_2^2 = \frac{N}{r_1r_2^2}$ dual points.
Altogether, passing again to the primal, we end up with
$O(r_1^{4+\eta} r_2^2)$ subproblems, each involving at most $\frac{N}{r_1r_2^2}$~surfaces and at most $\frac{M}{r_1^4r_2}$ points.

Before processing each of these subproblems recursively, we form complete 
bipartite graphs, both in the primal and in the dual, 
where each such graph involves the points in some cell 
and the surfaces (or dual curves) that miss the cell or, when the cell is of lower dimension, 
fully contain the cell. Each graph is actually at most three different graphs, depending on the (fixed) sign
of the corresponding differences $\varphi(a,t)-\varphi(a',t')$, of the points in the cell, with respect
to the missing or containing surfaces or dual curves (recall the equations of these surfaces and curves in~\eqref{eq:sigma} and~\eqref{eq:delta}).
The overall collection of these graphs, including the trivial ones 
constructed at the bottom of the recursion, constitutes the output of this preprocessing.

We proceed recursively, alternating between primal and dual spaces, for $j$ levels, reaching
a total of at most $c^jr_1^{(4+\eta)j} r_2^{2j}$ subproblems, for a suitable constant $c$,
each involving at most $\frac{N}{r_1^j r_2^{2j}}$ surfaces and at most $\frac{M}{r_1^{4j} r_2^j}$ points.
We choose $j$, $r_1$ and $r_2$ so as to satisfy
\begin{equation}
  \label{eq:r1_r2}
  r_1^jr_2^{2j}  = N \quad \text{and} \quad
  r_1^{4j}r_2^j  = M ,
\end{equation}
or
\[
r_1^j  = \frac{M^{2/7}}{N^{1/7}} \quad \text{and} \quad
r_2^j  = \frac{N^{4/7}}{M^{1/7}} .
\]
This choice is valid when both expressions $M^{2/7}/N^{1/7}$ and $N^{4/7}/M^{1/7}$ are at least $1$.
That is, we require that $M^2\ge N$ and $N^4 \ge M$, or that 
\[
  h \le n^{1+2\eps}g^{2-2\eps} \qquad\text{and} \qquad g^{1-\eps} \le n^{3-\eps}h^4 .
\]
As both inequalities trivially hold, it follows that we can choose $r_1$ and $r_2$ (and $j$) to satisfy the above equalities.

The conditions in~\eqref{eq:r1_r2} make the bottom-level subproblems have constant size and
leads to the overall running time and storage size
\[
O\left(c^j(r_1^j)^{4+\eta} (r_2^j)^{2}\right) 
= O\left(c^j M^{6/7+2\eta/7}N^{4/7-\eta/7} \right) 
= O\left( n^{10/7+\eps'}g^{6/7+\eps'}h^{4/7} \right) ,
\]
where $\eps'$ depends on $\eps$ and $\eta$ (and on $c$); with a suitable choice of $\eta$
it is only slightly larger than our prescribed $\eps$.
\end{proof}

%%%%%%%%%%%%%%%%%%%%%%%%%%%%%%%%%%%%%%%%%%%%%
\paragraph*{Searching with the points of $C$.}
We next search the structure with every $s\in C$ (identified with the point $(s,0)$ on the $x$-axis).
For each $s\in C$, we only want to visit subproblems $(\tau,\tau')$ where there 
might exist $a\in\tau$ and $t\in\tau'$ (not necessarily from $A_\tau\times B_{\tau'}$), such that $\varphi(a,t)=s$.
We consider the two-dimensional surface 
\[
\pi_s \coloneqq \{(a,t)\in \reals^3 \mid \varphi(a,t)=s \} .
\]

To apply the machinery laid out in Sections~\ref{sec:prelim} and~\ref{sec:hierarchical}, 
we need to verify that the surfaces $\pi_s$ have the good fibers property, with the 
possible exclusion of $O(1)$ exceptional values of $s$. Rather than presenting ad-hoc 
arguments for the special setup at hand (as we did in an earlier version~\cite{AES:prev}),
we present, in Proposition~\ref{thm:2x1x1-one} below, 
a general efficient mechanism that either finds explicitly a zero of $F$ on $A\times B\times C$, 
or certifies that no such zero exists,
or else guarantees that the good fibers property holds for the polynomials $\pi_s$.
We therefore continue under the assumption that these polynomials do have the good fibers property.

By Theorem~\ref{thm:hier_partition_12}(ii), choosing $g$ and $h$ to satisfy
$\left(\frac{n}{g}\right)^{1/2} = \frac{n}{h}$, 
or $h = n^{1/2}g^{1/2}$, we ensure that $\pi_{s}$ reaches
$O\left( \frac{n^{1+\eps}}{g^{1+\eps}} \right)$ cells $\tau\times\tau'$.
Summing over all the $n$ possible values of $s$, the number 
of crossings between the surfaces $\pi_s$ and the cells $\tau\times\tau'$ 
is $O\left(\frac{n^{2+\eps}}{g^{1+\eps}}\right)$.
Denoting by $n_{\tau,\tau'}$ the number of surfaces
$\pi_s$ that cross $\tau\times \tau'$, we have
${\displaystyle \sum_{\tau,\tau'} n_{\tau,\tau'} = O\left(\frac{n^{2+\eps}}{g^{1+\eps}}\right)}$ and
we can enumerate all such crossings in~$O(n^{2+\eps}/g^{1+\eps})$~time.

The cost of searching with any specific $s$ in the structure of a cell $\tau\times\tau'$ crossed by~$\pi_s$,
is~$O(\log g)$ (it is simply a binary search over the sorted list of the values $\varphi(a,t)$ 
in each such cell, where these lists are prepared free of charge,
in the algebraic decision-tree model, from the complete bipartite graph representation
obtained at the preprocessing point-location stage).
Hence the overall cost of searching with the elements 
of $C$ through the structure is~$O(n^{2+\eps}/g^{1+\eps})$, where $\eps$ is slightly
larger than the originally prescribed one. 

Combining this cost with that of the construction of the hierarchical polynomial partitioning,
and the point-location preprocessing stage, we get overall expected time of 
\[
  O\left(n\log{n} + n^{10/7+\eps}g^{6/7+\eps}h^{4/7} + \frac{n^{2+\eps}}{g^{1+\eps}} \right) =
  O\left(n\log{n} + n^{12/7+\eps}g^{8/7+\eps} + \frac{n^{2+\eps}}{g^{1+\eps}} \right) .
\]
We roughly balance the two last terms by choosing $g = n^{2/15}$,
making the overall cost of the procedure
${\displaystyle O\left( \frac{n^{2+\eps}}{g^{1+\eps}} \right) = O\left( n^{28/15+\eps} \right)}$.
Again, $\eps$ here is slightly larger than the prescribed value.
 
%%%%%%%%%%%%%%%%%%%%%%%%%%%%%%%%%%%%%%%%%%%%%%%%%%%%%%%%%%%
\subsection{Testing and ensuring the good-fibers property}
\label{sec:good}

To complete the analysis, we next present a general technique for
testing and ensuring the good-fibers property.
% \esther{Just write ``testing and ensuring the good-fibers property''.}
This technique will also be used in the solution of the general single polynomial 
vanishing problem, presented in the next subsection.

%%%%%%%%%%%%%%%%%%%%%%%%%%%%%%%%%
\begin{proposition} \label{thm:2x1x1-one}
Let $F(x,y,z)$ (with $x,y,z\in \reals^2$) be a real 6-variate polynomial. 
Let $\gamma_B(t)$, $\gamma_C(s)$ be polynomial or rational parametric representations of 
respective algebraic curves $\gamma_B,\gamma_C$ in the plane.  
Let $A$ be a set of $n$~points in the plane, and let $T$ and $S$ 
be two sets of $n$ real parameters, with corresponding point sets 
$B\coloneqq\gamma_B(T)$ on~$\gamma_B$ and $C\coloneqq\gamma_C(S)$ on~$\gamma_C$.

\noindent
Then, after some preprocessing of $F$, $\gamma_B$, and~$\gamma_C$, 
we can transform the problem, in~$O(n\log n)$ time, into one where 
$H(x,t,s)\coloneqq F(x,\gamma_B(t),\gamma_C(s))$ satisfies the good fibers property, 
for every value of $s$, excluding $O(1)$ exceptional values (which we process
separately), as formulated in Section~\ref{sec:prelim}, 
or detect a triple~$(a,b,c) \in A\times B\times C$ 
such that $F(a,b,c)=0$, or conclude that no such triple exists.
\end{proposition}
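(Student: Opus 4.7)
The plan is to perform an $O(1)$-time algebraic preprocessing on the constant-degree polynomial $H(x,t,s)\coloneqq F(x,\gamma_B(t),\gamma_C(s))$ that pinpoints a set $E\subset\reals$ of at most $O(1)$ ``exceptional'' $s$-values at which the surface $\pi_s\coloneqq\{(x,t)\in\reals^2\times\reals\mid H(x,t,s)=0\}$ can fail the good-fibers property, and then, in an $O(n\log n)$-time sweep of the input, either to find a zero of $F$ on $A\times B\times C$ through some $s\in S\cap E$, or to discard $S\cap E$ from $S$ and return the remaining instance, which will then have good fibers for every surviving value of~$s$.

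The first step is to factor $H$ over $\reals[x_1,x_2,t,s]$ into its $O(1)$ irreducible factors and analyze each independently: the exception sets of individual factors combine by union (still of size $O(1)$), and bad fibers at a non-exceptional $s$ sum over factors to a constant. If a factor is independent of one of its would-be variables (for instance of $t$, or of $s$, or of a coordinate of $x$), the zero-finding question for that factor reduces to a two-variable bichromatic incidence problem on two of $A$, $T$, $S$, which I would solve in $O(n\log n)$ time by extracting for each element of one set the $O(1)$ real roots of the resulting constant-degree univariate polynomial and binary-searching them in the (sorted) other set. A discovered zero terminates the procedure; otherwise the factor is dropped. Henceforth we may assume $H$ itself is irreducible and genuinely depends on every variable.

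Now expand $H(x,t,s)=\sum_i h_i(x,s)\,t^i = \sum_{\mathbf{j}} k_{\mathbf{j}}(t,s)\,x^{\mathbf{j}}$. Irreducibility of $H$ forces $\gcd_i h_i = 1$ in $\reals[x_1,x_2,s]$ and $\gcd_{\mathbf{j}} k_{\mathbf{j}} = 1$ in $\reals[t,s]$ --- otherwise the common factor would split off $H$ --- so the ``bad-$x$'' variety $V^{(x)}\coloneqq\bigcap_i Z(h_i)\subset\reals^3$ is at most one-dimensional and the ``bad-$t$'' variety $V^{(t)}\coloneqq\bigcap_{\mathbf{j}} Z(k_{\mathbf{j}})\subset\reals^2$ is zero-dimensional, i.e., a finite set. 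The first clause of the good-fibers definition fails at $s_0$ precisely when the $s_0$-slice of $V^{(x)}$ is positive-dimensional, and this requires $V^{(x)}$ to contain an irreducible component sitting inside the plane $\{s=s_0\}$; there are only $O(1)$ such $s_0$. The second clause accumulates at most $|V^{(t)}|=O(1)$ bad $t$'s at any $s_0$, augmented by a symmetric $O(1)$ contribution from the locus where $H(\cdot,t,s)$ collapses to a non-zero constant (cut out, by the same GCD argument, by a lower-dimensional subvariety of $\reals^2$). Altogether $|E|=O(1)$, and an explicit description of $E$ drops out of a constant-size resultant/elimination computation on $H$.

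Finally, on the input side I would sort $T$ in $O(n\log n)$ time, identify the $O(1)$ elements of $S\cap E$, and for each such $s_0$ solve the reduced two-variable problem ``does some $(a,t)\in A\times T$ satisfy $H(a,t,s_0)=0$?'' in $O(n\log n)$ time by the same root-enumeration-plus-binary-search approach as before. A zero found this way is reported immediately; otherwise the exceptional $s$-values are removed from $S$, and the resulting instance has good fibers for every remaining $s$, as required. The hard part will be packaging the algebraic geometry cleanly --- in particular, verifying that irreducibility of $H$ together with non-trivial dependence on every variable really does force the claimed dimensional bounds on $V^{(x)}$ and $V^{(t)}$, and confirming that the entire description of $E$ (including the contributions from condition~(b)) can be extracted by constant-degree symbolic computation, in time independent of $n$, within the algebraic decision-tree model.
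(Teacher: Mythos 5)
Your approach essentially replicates the paper's: factor $H$ into irreducible components and treat each separately; eliminate degenerate cases where $H$ omits a block variable via an $O(n\log n)$ root-enumeration-plus-search test; and show that the ``bad'' loci $V^{(x)}$ and $V^{(t)}$ have codimension at least~$2$ via the gcd-is-trivial/B\'ezout argument, so that only $O(1)$ exceptional $s_0$ remain, which are handled separately in $O(n\log n)$ time. Your $V^{(x)}$ and $V^{(t)}$ are precisely the paper's $Z(G)$ (where $G=\sum_i g_i^2$) and the common zero set of the $h_{ij}(t,s)$, and your dimension bound by coefficient-gcd is the same reasoning the paper phrases as ``no nontrivial common factor, by B\'ezout, on pain of violating irreducibility.'' So this is the paper's route, not a genuinely different one.

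One step is imprecisely stated and, taken literally, false: you claim that if a factor is independent of ``a coordinate of $x$'' (among other cases), the zero-finding problem for that factor reduces to a \emph{two}-variable bichromatic incidence problem on two of $A$, $T$, $S$. If $H$ depends on $x_1$, $t$, $s$ but not on $x_2$, this is still a three-variable ($1{\times}1{\times}1$) problem over $A$, $T$, $S$, and your $O(n\log n)$ reduction does not apply. The paper is careful to make the non-degeneracy assumption only block-wise ($H$ cannot be written as a polynomial in a proper subset of the three block variables $x,t,s$), and its reductions are for the three cases ``independent of $x$,'' ``independent of $t$,'' ``independent of $s$.'' You should state the reduction that way too. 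The slip is harmless for the rest of your argument, since the codimension bounds for $V^{(x)}$ and $V^{(t)}$ need only that the coefficient gcd in $t$ (respectively in $x$) be $1$, which follows from irreducibility plus block-wise dependence; neither requires $H$ to involve both coordinates of $x$. A smaller nit: the ``augmented'' contribution to Condition~(2) from the locus where $H(\cdot,t,s)$ collapses to a nonzero constant is unnecessary, since an empty solution set in $x$ trivially has dimension at most~$1$ and so creates no exception.
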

%%%%%%%%%%%%%%%%%%%%%%%%%%%%%%%%%

We prove Proposition~\ref{thm:2x1x1-one} in several stages, by a sequence of reductions, where
each reduction, when implemented, either (i) detects a triple~$(a,b,c) \in A\times B\times C$ 
such that $F(a,b,c)=0$, or (ii) guarantees that no such triple exists,
or (iii) identifies $O(1)$ exceptional points in one of the sets, which we process separately
or put aside before applying further reductions.
Consider, as in the proposition statement, the 4-variate real polynomial $H(x,t,s) \coloneqq F(x,\gamma_B(t),\gamma_C(s))$.  
Henceforth, we focus on the polynomial $H$, looking for a triple~$(a,t,s)\in A \times T \times S$ with $H(a,t,s)=0$.
Each reduction will allow us to make a simplifying assumption on~$H$ in order to enforce the good fibers condition,
or to abort the process, either positively, by directly determining the presence of a 
desired triple $(a,t,s)\in Z(H)$, or negatively, by determining that no such triple exists.

First, observe that we can safely assume that $H$ is not the zero polynomial, 
as otherwise every triple $(x,t,s)$ qualifies and no comparisons are needed.
In addition, we make the following sequence of assumptions, justifying each of them as we go.
%%%%%%%%%%%%%%%%%%%%%%%%%%%%%%%%%
\begin{assumption}[Irreducibility]
$H$ is an irreducible polynomial.
\end{assumption}
%%%%%%%%%%%%%%%%%%%%%%%%%%%%%%%%%
% This assumption involves no loss of generality:
% If $H(x,t,s)=H_1(x,t,s)H_2(x,t,s)$, with $H_1,H_2$ non-constant, then $Z(H)=Z(H_1)\cup Z(H_2)$, 
% so we can separately test the triples $(x,t,s)$ of $A\times T\times S$ for membership in $Z(H_1)$ and in $Z(H_2)$.
% Algorithmically, as already noted, determining whether $H$ is irreducible over the reals, 
% and computing its irreducible factors when it is reducible, can be done in $O(1)$ time~\cite{vzg}.
% \micha{Moreover, in the algebraic decision-tree model this costs nothing.}
% 
Since $H$ is a given constant-degree polynomial that does not depend on the input sets~$A$,~$B$,~$C$, testing for its irreducibility, and factoring it when it is reducible, 
are tasks that do not depend on $A$, $B$ and $C$, and can therefore be ignored in the 
algebraic decision-tree model that we use. 
The actual algorithmic issues in testing for irreducibility and of factoring multivariate
polynomials have been covered in several works, such as \cite{Kal,vzg}. The applicability
of these algorithms depends on the model of computation and on certain assumptions on the 
polynomials. For simplicity of presentation, we finesse these issues and assume, as stated,
that $H$ is irreducible. 

%%%%%%%%%%%%%%%%%%%%%%%%%%%%%%%%%
% \medskip
% \noindent\textbf{Note.}
% $H$ cannot be assumed to be irreducible \emph{a~priori}. For example,
% take the collinearity-testing polynomial and suppose that $\gamma_B$ and $\gamma_C$
% are the same line. Then, with a suitable parameterization of this line, we have
% \[
% H((x_1,x_2),t,s) = 
  % \begin{vmatrix}
    % 1 & x_1 & x_2 \\
    % 1 & t & at+b \\
    % 1 & s & as+b 
  % \end{vmatrix} ,
% \]
% which is divisible by $t-s$. (Geometrically, as in the set of simplifications
% made in Section~\ref{sec:line}, this tells us that if $B$ and $C$ share 
% a point then this double point forms a collinear triple with any other point in the plane.)
% \esther{The comment in brackets is irrespective of reducibility, right?
  % once the second and third lines in the matrix above are identical, the determinant is 0.}
%%%%%%%%%%%%%%%%%%%%%%%%%%%%%%%%%

We now recall and extend the good fibers condition, expanded and recast in our present context:

%%%%%%%%%%%%%%%%%%%%%%%%%%%%%%%%%
\begin{definition}[Good fibers condition for $H(x,t,s)$]
  For each $s_0\in S$, excluding $O(1)$ {exceptional values}, both of the following conditions hold:
  \begin{enumerate}
  \item For any $a_0\in\reals^2$ (not necessarily from $A$),
    excluding $O(1)$ exceptional values, the equation
    $H(a_0,t,s_0) = 0$ in the variable $t$ has $O(1)$ solutions.
  \item For any $t_0\in\reals$ (not necessarily from $T$),
    excluding $O(1)$ exceptional values, the equation
    $H(x,t_0,s_0)=0$ in $x$ defines a set in $\reals^2$ whose dimension is at most $1$.
  \end{enumerate}
\end{definition}
%%%%%%%%%%%%%%%%%%%%%%%%%%%%%%%%%
Note that this definition differs from the one given in Section~\ref{sec:prelim},
in that it adds the variable $s$ as another independent variable, and formulates
the good fibers property as a uniform condition that holds for all but 
$O(1)$ exceptional values of $s$.

We first argue that it is safe to make the following non-degeneracy assumption:
%%%%%%%%%%%%%%%%%%%%%%%%%%%%%%%%%
\begin{assumption}[Non-degeneracy]
  $H(x,t,s)$ depends on all three variables, in the sense that it cannot be 
  written as a polynomial in a proper subset of these variables.
\end{assumption}
%%%%%%%%%%%%%%%%%%%%%%%%%%%%%%%%%

To see how to test for and enforce this assumption, suppose that $H(x,t,s)$ 
does not depend on $s$, i.e., $H(x,t,s)=h(x,t)$ for some real trivariate polynomial 
$h(x,t)$, for all $x$, $t$, and~$s$ (recall that $x$ is a point in the plane). 
Then the problem reduces to finding $a \in A$, $t \in T$ so that $h(a,t)=0$.  
For each $a \in A$ for which $h(a,t)$ is not identically zero (as a polynomial in $t$), 
we compute the $O(1)$ roots $t_{a,0}$, $t_{a,1}, \ldots$ of $h(a,t)=0$, in $O(1)$ time
(recall the model of computation that we have assumed in Section~\ref{sec:prelim}),
and collect them into a set~$T'$. We then check, in $O(n\log n)$ time, whether the overall set $T'$ and $T$ 
have any common elements, thereby detecting the existence of a pair $(a,t) \in A \times T$ 
with $h(a,t)=0$ and solving the problem (appending any real number $s$ to $(a,t)$
gives us a triple satisfying $H(a,t,s)=0$). There may also be $O(1)$ values of $a$ for which
$h(a,t)\equiv 0$. If such an $a$ belongs to $A$, any pair $(t,s)$ of real numbers
gives us a triple $(a,t,s)$ at which $H$ vanishes. If neither of the two kinds of tests
succeeds, we conclude that $H(x,t,s)=h(x,t)$ has no zeros on $A\times T\times S$.

The case of $H$ not depending on $t$ is handled symmetrically.  
If $H$ does not depend on $x$, a similar and slightly easier argument suffices, 
thereby completing the reduction.  

\medskip
We now write $H$ as
\[
  H(x,t,s) = \sum_{i,j,k} h_{ijk}(s)x_1^i x_2^j t^k ,
\]
for suitable polynomial coefficients $h_{ijk}$.
For a fixed $s_0$, $H(x,t,s_0)$ is identically zero as a polynomial in $x$ and $t$ 
if and only if $H^*(s)\coloneqq \sum_{i,j,k} h_{ijk}^2(s)$ vanishes at $s = s_0$. 
$H^*(s)$ is a never-negative univariate real polynomial; it is not the zero polynomial, 
for otherwise $H$ would have been the zero polynomial too. Hence $H^*(s)$ vanishes only 
at a finite (and constant) number of values of $s$. If such a value $s=s_0$ exists, every $h_{ijk}$ 
vanishes at $s_0$, so every~$h_{ijk}(s)$ is divisible by $s-s_0$,
and therefore $H$ is divisible by $s-s_0$. This violates either the irreducibility 
or the non-degeneracy assumption, depending on whether $H(x,t,s)$ is a constant or 
a non-constant multiple of $s-s_0$.

Therefore we henceforth assume that $H(x,t,s_0)$ is not the zero polynomial in 
$x$ and $t$, for any value $s_0$.
%
% We also make the following assumption:
% %%%%%%%%%%%%%%%%%%%%%%%%%%%
% \begin{assumption}
  % $Z(H)$ is three-dimensional. 
% \end{assumption}
% %%%%%%%%%%%%%%%%%%%%%%%%%%%
% \micha{Boris and Esther claim that this assumption is not used at all. Check and destroy if so.}
% To see that we can make this assumption, we note that $\dim Z(H)\neq 4$, as $H$~is not identically zero.  
% If $\dim Z(H)<3$, we can use the methods of \boris{two vanishing polynomials} to obtain a stronger result.
% \micha{I sympathize, but please spell this out.}

%%%%%%%%%%%%%%%%%%%%%%%%%%%%%%%%%%%%%%%%%%%%%%%%%%%%
\begin{proof}[Proof of Proposition~\ref{thm:2x1x1-one}]
Having made and justified these assumptions, we now proceed to the proof of the proposition.

\medskip
\noindent\textbf{Condition (2).}
Suppose the solution set of $H(x,t_0,s_0)=0$, for some choice of real numbers~$t_0,s_0$, is of dimension larger than one. 
Since $H(x,t_0,s_0)$ is a real polynomial in~$x$, it must then be the zero polynomial.
Writing $H(x,t,s)=\sum_{ij}h_{ij}(t,s) x_1^ix_2^j$, for suitable polynomial coefficients 
$h_{ij}(t,s)$, we see that $H(x,t_0,s_0)$ is identically zero as a function of~$x$ 
precisely when $h_{ij}(t_0,s_0)=0$, for all $i,j$.
If the solution to this system of equations (in which we replace the fixed values 
$t_0$, $s_0$ by variables $t$, $s$) is a discrete set, then there are only 
a constant number of exceptional pairs $(b,c)$, parameterized by the roots $(t_0,s_0)$,
for which the solution set of $H(x,t_0,s_0)$ has dimension higher than one, 
and Condition~(2) is satisfied. Otherwise, by B\'ezout's theorem, 
all the polynomials $h_{ij}$ have a nontrivial common factor~$h(t,s)$.
(It is impossible for all $h_{ij}$ to be identically zero, as so 
would be~$H$.) That is, we have $H(x,t,s)=h(t,s)f(x,t,s)$, for some suitable
polynomial $f$, which violates irreducibility, 
unless $f$ is a constant polynomial and then it violates the non-degeneracy assumption.
This finishes the argument for the solutions of $H(x,t_0,s_0)=0$, namely for Condition~(2) of the good fibers definition.

\medskip
\noindent\textbf{Condition (1).}
Now consider the equation $H(a_0,t,s_0)=0$. 
Suppose $H(a_0,t,s_0)$ is identically zero, for some pair~$(a_0,s_0)$, as a polynomial in~$t$.
This means that the surface~$Z(H)$ contains a line
parallel to the $t$-axis. To see when this can happen, write $H(x,t,s) = \sum_i g_i(x,s) t^i$, 
for suitable polynomial coefficients $g_i(x,s)$, and define $G(x,s)\coloneqq\sum_i g_i^2(x,s)$ 
($G$ cannot be the zero polynomial, as otherwise $H$ would be too). 
The line $x=a_0,s=s_0$ is contained in $Z(H)$ iff all the~$g_i$'s are zero at $(a_0,s_0)$, or, 
equivalently, if $(a_0,s_0)\in Z(G)$.

Since $G$ is not the zero polynomial, $Z(G)$ cannot be the entire $as$-space.
If $Z(G)$ has a two-dimensional irreducible component, the product of
this component with the $t$-axis in the $ats$-space (which is $\reals^4$) 
would be (by definition of $G$) contained in~$Z(H)$,
% \esther{according to our observation above concerning the values $(a_0,s_0)$,}
and thus consist of one or several irreducible components of $Z(H)$. 
This in turn would either violate the irreducibility assumption, 
or show that $H$ does not depend on $t$, contradicting non-degeneracy. 

We can therefore assume that $\dim Z(G)<2$.

If $\dim Z(G)=0$, there is only a finite list of pairs $(a_0,s_0)$ 
for which this happens; these are the exceptional points and Condition~(1) is satisfied.

Suppose then that $\dim Z(G)=1$. The set $Z(G)$ either contains a line $\ell$ 
parallel to the $s$-axis as one of its irreducible components or it does not.

If it does, the product of $\ell$ with the $t$-axis is a 2-plane of the form 
$(x_1,x_2) = (a_1,a_2)$ contained in $Z(H)$. That is, the
point $a=(a_1,a_2)$ (not necessarily from $A$) is such that
$H(a,t,s)$ is identically zero in $t$, $s$. There is only a constant number 
of such lines (at most the degree of $G$); they correspond to exceptional 
points~$a$ in Condition~(1) of the good fiber definition. 
Algorithmically, there is a finite number of points $a \in A$ 
for which we do not need to look at $t_0$ or $s_0$ to check whether $H(a,t_0,s_0)=0$. 
Again, since $H$ is given as part of the problem statement, this means just
checking the points of~$A$ against a precomputed list (of length~$O(1)$),
of these exceptional points, obtained by comparing the coefficients of $G$ 
to zero, which takes linear time. This yields a constant number of exceptional 
pairs $(a,c)$ parameterized by the corresponding roots.

% \esther{The par below is actually addressing the case where $Z(G)$ does not contain a line parallel to the $s$-axis.
  % Why don't we say that explicitly? e.g., replace the first sentence with ``Otherwise''?}
Now remove from $Z(G)$ all irreducible components that are lines parallel to the $s$-axis and, 
slightly abusing the notation, still refer to the remaining set as $Z(G)$; 
if it is empty, we are done, so assume otherwise. That means that the 
intersection of $Z(G)$ with any line of the form~$x={\rm const}$ (in the $xs$-space) is finite.  
In other words, for any fixed $a$, there are only $O(1)$~values~$s_0$, such that $H(a,t,s_0)=0$ 
is identically zero as a function of~$t$; these are the exceptional points
($(a,c)$, where $c$ is parameterized by $s_0$) in Condition~(1) of the good fibers definition. 
For all other values of $a$, $H(a,t,s_0)=0$ is not 
identically zero and therefore has only $O(1)$ solutions, as desired.

To complete the analysis, we need to check, for each exceptional value $c_0$ in $C$, 
whether the corresponding polynomial $H(x,t,s_0)$ (where $s_0$ is the parameter value for $c_0$)
vanishes on~$A\times T$. This can be done exactly as in the argument following the 
non-degeneracy assumption above, and takes $O(n\log n)$ time.

This completes the proof of Proposition~\ref{thm:2x1x1-one}.
\end{proof}

To summarize, unless the machinery presented in this subsection produces a zero of~$F$ on~$A\times B\times C$, 
or asserts that no such zero exists, we have the good fibers property for every~$s$ 
(that is,~$c$) in~$C$, excluding $O(1)$ exceptional values, and the analysis can proceed, as detailed above.
% (If $C$ contains some of these exceptional values, we handle each such value $c_0$ separately, 
% reducing the problem to that of testing whether the trivariate polynomial $F(x,\gamma_B(t),c_0)$
% vanishes on $A\times T$. This can be handled by a much simpler and less expensive procedure,
% which, for each $a_0\in A$, finds the roots of the $t$-polynomial $F(a_0,\gamma_B(t),c_0)$,
% and checks whether any of these roots belongs to $T$. This takes a total of $O(n\log n)$ time,
% since there are only $O(1)$ exceptional values $s_0$.)
% 
% \esther{The sentence above in brackets is repeating what has already been said before.}

%%%%%%%%%%%%%%%%%%%%%%%%%%%%%%%%%%%%%%%%%%%%%%%%%%%%%%%%%%%
\subsection{The general case}
\label{sec:1poly}

A similar analysis, albeit somewhat more complicated, can handle the case 
where $C$ is contained in an arbitrary constant-degree algebraic curve (given in parametric form as above), 
rather than a line, and the general case, where we want to test for the vanishing of a single
arbitrary constant-degree irreducible polynomial $F$ on $A\times B\times C$. More precisely, we test
for the vanishing of $H(x,t,s) \coloneqq F(x,\gamma_B(t),\gamma_C(s))$ on $A\times T\times S$, where
$T$ and $S$ are the sets of real numbers that represent $B$ and $C$, respectively.

We follow the notation and analysis in Section~\ref{sec:line}.
Most of the analysis carries over, but certain nontrivial technical modifications are needed.
A major technical issue, however, is the need to modify the definitions of the
surfaces $\sigma_{t,t'}$ and the curves $\delta_{a,a'}$ (see the respective 
equations~\eqref{eq:sigma} and~\eqref{eq:delta}), and the way we manipulate them,
which is needed because, in general, the equation $H(x,t,s)=0$ no longer admits an explicit 
solution of the form $s = \varphi(x,t)$ that we had in the preceding case, so the equation
may have several $s$-roots for given $x$ and $t$. The analysis in~\cite{BCILOS-17} had to 
face a similar challenge (for the one-dimensional case), and our analysis is based in part 
on the ideas developed there, extended to our setting.
%\esther{and use the structure of ``cylindrical algebraic decomposition''---see below.}

As a consequence, for a point $a_0\in\reals^2$ and a number $t_0\in\reals$, there are 
in general finitely many (but more than one) roots of the equation $H(a_0,t_0,s) = 0$. 
(We treat the case where the equation has infinitely many roots below.)
The idea is to collect all these roots, over all pairs $(a_0,t_0)\in A\times T$, sort them
into a list $\Lambda$, and search with each $s\in S$ in $\Lambda$. If any such~$s$ is 
found to belong to $\Lambda$, we have found a triple in $A\times B\times C$ 
on which $F$ vanishes.  In addition, we need to check whether $A\times T$ contains
any pair $(a_0,t_0)$ for which $H(a_0,t_0,s) = 0$ has infinitely many $s$-roots 
(i.e., $H(a_0,t_0,s)$ is the identically zero polynomial in $s$), in which case
$H$ vanishes at every triple $(a_0,t_0,s)$, for $s\in S$. If none of these steps
succeeds, $H$ does not vanish on $A\times T\times S$. We denote the sorted sequence 
of the $s$-roots of the equation $H(a,t,s) = 0$, for a fixed pair $(a,t)$ for 
which the equation is not identically zero, as $\Xi(a,t)$.

We first dispose of criticalities that occur at pairs $(a,t)$ for which $H(a,t,s)$ is identically 
zero as a polynomial in $s$. Writing $H(a,t,s) = \sum_{i\ge 0} h_i(a,t)s^i$, this 
happens when all the polynomials~$h_i(a,t)$ vanish simultaneously. The locus~$\psi$ of this criticality is a variety in the $at$-space (that is, in~$\reals^3$)
of dimension at most two (otherwise all $h_i$ are identically zero and therefore so is $H$, a case that is easy to detect). 
Recall the definition of having good fibers, as applied to $\psi$ (refer to Section~\ref{sec:prelim}). 
% \micha{Boris was unhappy about the way good fibers are used here. Not sure what's bothering him.}
% \boris{If I understand things correctly, there is no reason whatsoever to believe that $\psi$ has good fibers. SHOUT IF I AM WRONG. So what we are really doing is considering two cases: if $\psi$ does have good fibers, we do X, if not, we do Y.  I do not think it's clearly stated.}
If $\psi$ has this property, for each $a\in A$ (with~$O(1)$~exceptions) there are 
only $O(1)$ values $t$ such that $(a,t)\in\psi$ (we denote this set of points as $\hat{\eta}_a$), 
% \boris{Do $\eta$ and $\zeta$ with hats have anything to do with $\eta$ and $\zeta$ without hats mentioned below?}
and for each $t\in T$ (with~$O(1)$~exceptions) the points~$a$ that satisfy $(a,t)\in\psi$ lie 
on a curve, denoted as $\hat{\zeta}_t$. It is easy (and inexpensive) to test, in the spirit of 
Section~\ref{sec:good}, whether exceptions exist, or whether $\psi$ fails to satisfy the 
good-fibers property. If any of these cases is detected,
we find either a point $a\in A$ such that $H(a,t,s)$ is identically zero as 
a polynomial in $t$ and $s$ (because all the coefficients $h_i(a,t)$ are 
identically zero for all $t$), or a point $t\in T$ such that $H(a,t,s)$ is 
identically zero as a polynomial in $a$ and $s$. In either case we get 
(many) triples of $A\times T\times S$ at which $H$ vanishes.

We may thus assume that $\psi$ has good fibers and disregard the exceptions, if they exist. 
% \boris{what happens with exceptions from the good-fibers definition?}
As in Section~\ref{sec:line}, we traverse all pairs of bottom-level cells $(\tau,\tau')$,
in the joint hierarchical partition of~$A$ and~$T$
that are crossed by $\psi$. In each of these cells we check,
by brute force, for all pairs $(a,t)$ whether they lie on $\psi$. If such a pair 
$(a_0, t_0)$ is found, $H$ vanishes at all the triples $(a_0, t_0, s)$, for
any real number $s$, so we report that there exists a vanishing triple. Otherwise, 
we continue the algorithm under the assumptions that for any point $a_0\in A$ 
and any number $t_0\in T$, there are only finitely many $s$-roots of $H(a_0,t_0,s)$.
Using the parameters $g$, $h$ of Section~\ref{sec:line}, and recalling 
Theorem~\ref{thm:hier_partition_12}(ii), the number of cells~$\tau\times\tau'$
crossed by~$\psi$ is~$O(n^{1+\eps}/g^{1+\eps})$, and they can be found
within a comparable time bound. Since each such cell contains $gh = n^{1/2}g^{3/2}$
points, the overall cost of this step is~$O(n^{3/2+\eps}g^{1/2-\eps})$, well 
below the overall time bound of the entire algorithm.

We thus proceed as follows.
In order to sort the list $\Lambda$, we need to know the outcome of comparisons between pairs of its 
elements. If the pair involves two roots of distinct sequences $\Xi(a,t)$, $\Xi(a',t')$, 
then the roots coincide when $(a,a')$ lies on the surface $\sigma^{(0)}_{t,t'}$, 
or, in an equivalent dual context, when $(t,t')$ lies on the dual curve $\delta^{(0)}_{a,a'}$, where
\begin{align*}
\sigma^{(0)}_{t,t'} &\coloneqq \{ (a,a') \in\reals^4 \mid \exists s\in\reals : H(a,t,s) = H(a',t',s) = 0 \} \\
\shortintertext{and}
\delta^{(0)}_{a,a'} &\coloneqq \{ (t,t') \in\reals^2 \mid \exists s\in\reals : H(a,t,s) = H(a',t',s) = 0 \} .
\end{align*}
By eliminating $s$ from these expressions (see, e.g.,~\cite{CLO} for details), we 
replace them by more involved semi-algebraic formulas that depend only on $a$, $a'$, $t$, and $t'$.
As a matter of fact, $\sigma^{(0)}_{t,t'}$ is defined by the equation 
$\res(H(a, t, s), H(a', t', s); s) = 0$ (where $\res(\cdot,\cdot;s)$ denotes the resultant with respect to~$s$), 
which is a polynomial in $a$ and $a'$ (for the fixed pair $t,t'$), whose degree 
depends on the degree of $H$, and which can be computed in constant time (see, e.g.,~\cite{CLO}).

If the pair involves two roots of the same sequence $\Xi(a,t)$, then the roots coincide
(that is, become a double root) when $a$ lies on the curve $\zeta_t$, 
or, equivalently, when $t$ belongs to the set $\eta_a$, where
\begin{align*}
  \zeta_t &\coloneqq \{ a\in\reals^2 \mid \exists s\in\reals : H(a,t,s) = \frac{\bd H}{\bd s}(a,t,s) = 0 \} \\
\shortintertext{and}
  \eta_a &\coloneqq \{ t\in\reals \mid \exists s\in\reals : H(a,t,s) = \frac{\bd H}{\bd s}(a,t,s) = 0 \} .
\end{align*}
That is, if we fix $t$ and consider the two-dimensional surface $H_t \coloneqq  \{(a,s) \mid H(a,t,s) = 0\}$,
within the three-dimensional $as$-space, then the $a$-coordinates of the points on $H_t$, 
% \boris{Confused. $H_t$ is expected to be a two-surface in three-space.  Did you mean tangent PLANE?} 
that are either singular or have an $s$-vertical tangent line, are those that comprise the curve $\zeta_t$.
Similarly, if we fix $a$ and consider the curve $H_a \coloneqq \{(t,s) \mid H(a,t,s) = 0\}$
in the $ts$-plane, then the $t$-coordinates 
of the points on $H_a$ that are either singular or have an $s$-vertical tangency, comprise the set $\eta_a$.
Here too, the computation of $\zeta_t$ and $\eta_a$ depends on the degree of $H$, and thus takes constant time.

%We now expand each surface $\sigma^{(0)}_{t,t'}$, by appending to it the surfaces $(\zeta_t\cup \hat{\zeta}_t)\times \reals^2$ and $\reals^2\times (\zeta_{t'}\cup \hat{\zeta}_{t'})$. We denote the resulting surface as $\sigma_{t,t'}$. Similarly, we expand each curve $\delta^{(0)}_{a,a'}$, by appending to it the lines in $(\eta_a\cup \hat{\eta}_a)\times \reals$ and $\reals\times (\eta_{a'}\cup \hat{\eta}_{a'})$.
%We denote the resulting curve as $\delta_{a,a'}$. 

% Denote by $\Sigma_0$ the set of all surfaces $\sigma^{(0)}_{t,t'}$, for $(t,t')\in \bigcup_{\tau'} (T_{\tau'}\times T_{\tau'})$,
% where $T_{\tau'} \coloneqq T\cap\tau'$, and denote by $\Gamma_0$ the set of all curves 
% $\delta^{(0)}_{a,a'}$, for $(a,a')\in \bigcup_{\tau} (A_{\tau}\times A_{\tau})$, where $A_{\tau} \coloneqq A\cap\tau$.
% We form the surfaces $\zeta_t \times \reals^2$ and $\reals^2\times \zeta_{t'}$, for each $(t,t')$ as above.
% % and denote this set by $\Sigma_1$. 
% Similarly, in the dual setting, we form the lines $\eta_a\times \reals$ 
% and $\reals\times \eta_{a'}$, for each $(a,a')$ as above.
% % and denote this set by~$\Gamma_1$.

The high-level view of the analysis is as follows. The surface $\sigma^{(0)}_{t,t'}$ is the
locus of all $(a,a')$ for which a root of $\Xi(a,t)$ coincides with a root of $\Xi(a',t')$.
There are many such possible coincidences, between the $i$th root of one sequence and
the $j$th root of the other one, for various pairs $i,j$ of indices. To keep control of
these coincidences, we do two things. First, we add to $\sigma^{(0)}_{t,t'}$ the
pair of surfaces $\zeta_t \times \reals^2$ and $\reals^2\times \zeta_{t'}$, and denote
the new surface as $\sigma_{t,t'}$. Second, we decompose $\reals^4$ into the faces,
of various dimensions, of the arrangement formed by $\sigma_{t,t'}$. 
A formal, algebraically well-defined procedure for obtaining (a refinement of) this 
decomposition is to construct the \emph{cylindrical algebraic decomposition (CAD)} 
of $\sigma_{t,t'}$ (see \cite{Col,SS2} for details), and take the cells of the CAD 
(also known as \emph{strata}) to form the desired decomposition of space.

One can show that, for each stratum $\Delta$, each of the sorted sequences $\Xi(a,t)$, 
$\Xi(a',t')$, as well as their merged sequence, which we denote as $\Lambda(a,t;a',t')$
has a fixed combinatorial structure, over all points $(a,a') \in \Delta$.
Roughly speaking, this follows from the fact that we have coinciding $s$-roots over 
the surfaces $\sigma_{t,t'}$, and the points of singularity of such a surface
(when $\sigma_{t,t'}$ intersects itself) are the loci of points with multiple 
repeating $s$-roots. The CAD of $\sigma_{t,t'}$ captures all such types of singularity, 
and therefore in each stratum $\Xi(a,t)$, $\Xi(a',t')$, and $\Lambda(a,t;a',t')$ are invariant.
That is, the following properties hold.

\medskip
\noindent (1)
The number $k_{a,t}$ of distinct roots in~$\Xi(a,t)$ (resp., the number $k_{a',t'}$ 
of distinct roots in~$\Xi(a',t')$) remains the same, over $(a,a')\in\Delta$, and 
the root order within each sequence does not change. In particular, if one or both 
sequences have roots with multiplicity, which can happen when $\Delta$ is of dimension 
three or smaller, each multiple root retains its multiplicity over all $(a,a') \in \Delta$.
We also have the property that each root varies continuously as $(a,a')$ varies in $\Delta$.\footnote{%
  We do not give a formal proof of this property, but, e.g., for four-dimensional cells 
  it is a consequence of the implicit function theorem (see, e.g.,~\cite{BCR-98}), and 
  follows by an easy continuity-based argument when $\xi$ is of lower dimension.}

\medskip
\noindent (2)
The set of pairs of indices $(i,j)$, for $i=1,\ldots, k_{a,t}$, $j=1,\ldots, k_{a',t'}$,
so that the $i$th root of $\Xi(a,t)$ coincides with the $j$th root of $\Xi(a',t')$, 
is fixed throughout $\Delta$. This implies that, for each pair of indices $i$, $j$,
the relative order (i.e., larger/smaller/equal) between the $i$th root of $\Xi(a,t)$ 
and the $j$th root of $\Xi(a',t')$, is fixed throughout $\Delta$.

\medskip
Given these properties, we can explicitly compute the combinatorial structure of the 
sorted sequences $\Xi(a,t)$, $\Xi(a',t')$, and of their merged sequence $\Lambda(a,t;a',t')$,
for any $(a,a') \in \reals^4$. To do so, we pick an arbitrary point $(a_0, a_0')$ inside 
each stratum $\Delta$ of $\sigma_{t,t'}$, and then extract the $s$-roots of $H(a_0,t,s)$
and of $H(a_0',t',s)$, sort the resulting sequences and then merge them together. 
This gives us the desired fixed combinatorial structure of the three resulting sequences,
over all $(a,a') \in \Delta$. This operation can be performed in overall constant time 
(which depends on the degree of~$H$), using properties of the CAD and our model of 
computation---see Section~\ref{sec:prelim}.

A symmetric (and simpler) situation occurs in the dual. That is, we take each curve
$\delta^{(0)}_{a,a'}$, add to it the lines in $\eta_a \times \reals$ and 
$\reals\times \eta_{a'}$, call the resulting curve $\delta_{a,a'}$, construct
its CAD (in the plane), and denote the cells of the CAD as the \emph{strata} of the
desired planar decomposition. Then properties~(1) and~(2) hold in this dual
context too, now for the fixed pair $(a,a')$, and for all $(t,t')$ in any of
the strata $\Delta$ of that decomposition.

The preceding arguments are summarized in the following lemma.

%%%%%%%%%%%%%%%%%%%%%%%%%%%%%%%%%
\begin{lemma} \label{inv-sigma-delta}
Let $\Delta$ be a stratum, that is, a cell, of any dimension, 
of the CAD formed by the surface $\sigma_{t,t'}$. 
Then, as $(a,a')$ varies continuously within $\Delta$, the following properties hold. \\
(i) Each of the sequences $\Xi(a,t)$ and $\Xi(a',t')$ has a fixed combinatorial structure. \\
(ii) The sorted merged sequence $\Lambda(a,t;a',t')$ also has a fixed 
combinatorial structure, and any coincidence of roots of one sequence with roots of the 
other sequence does not change. More generally, for each $i=1,\ldots,k_{a,t}$, 
$j=1,\ldots,k_{a',t'}$, the relative order of the $i$\emph{th} root of $\Xi(a,t)$ and 
the $j$\emph{th} root of $\Xi(a',t')$ (including possible equality between them) remains invariant. \\
(iii) Each of the roots of either sequence $\Xi(a,t)$, $\Xi(a',t')$, or a common 
root of both in $\Lambda(a,t;a',t')$, varies continuously with $(a,a')\in\Delta$.

\medskip
\noindent
Symmetrically, if $(t,t')$ varies continuously within a dual stratum $\Delta$, of any dimension, 
of the decomposition formed by the curve $\delta_{a,a'}$, the
properties~(i)--(iii) hold for~$(a,t)$ and~$(a',t')$, with obvious modifications.
\end{lemma}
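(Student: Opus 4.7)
The plan is to exploit the defining property of the cylindrical algebraic decomposition: on each stratum $\Delta$ the sign of every polynomial in the defining set (and of every derived polynomial controlling the stratification) is constant, and every zero set in the defining family is either contained in $\Delta$ or disjoint from it. I would apply this principle to the small collection of auxiliary polynomials that govern the $s$-root structure of $H(a,t,\cdot)$ and $H(a',t',\cdot)$, and to the resultant $\res(H(a,t,s), H(a',t',s); s)$ that cuts out $\sigma^{(0)}_{t,t'}$. Since $\sigma_{t,t'}$ already contains $\zeta_t\times\reals^2$, $\reals^2\times\zeta_{t'}$, and $\sigma^{(0)}_{t,t'}$, the CAD refines each of these loci by construction.

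For part~(i), I would observe that the number $k_{a,t}$ of distinct $s$-roots of $H(a,t,\cdot)$ can change only when two distinct roots coalesce---i.e., when $H$ and $\bd H/\bd s$ share a root in $s$---or when a root escapes to infinity, which happens on the zero set of the leading coefficient of $H(a,t,\cdot)$ in $s$. The former locus, projected to $a$ for fixed $t$, is exactly $\zeta_t$, hence is contained in $\sigma_{t,t'}$; the latter is a proper subvariety in $a$ defined by a polynomial of constant degree, which I would include in the CAD at no asymptotic cost. Inside any stratum $\Delta$ both loci then have constant incidence, so $k_{a,t}$ is constant on $\Delta$ and any repeated root retains its multiplicity. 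Any transposition of two roots of $\Xi(a,t)$ along a path in $\Delta$ would have to pass through $\zeta_t$, which is excluded, so the root order is invariant on $\Delta$. A symmetric argument handles $\Xi(a',t')$.

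For part~(ii), the crux is that $\sigma^{(0)}_{t,t'}$ is by definition the locus at which some root of $\Xi(a,t)$ coincides with some root of $\Xi(a',t')$. Hence, as $(a,a')$ moves along any path inside a single stratum $\Delta$ of the CAD of $\sigma_{t,t'}$, no coincidence between a root of $\Xi(a,t)$ and a root of $\Xi(a',t')$ is created or destroyed; and the sign of each pairwise difference between the $i$th root of $\Xi(a,t)$ and the $j$th root of $\Xi(a',t')$ can change only through a coincidence, which is forbidden inside $\Delta$. Combined with (i), this gives the invariant combinatorial structure of the merged sequence $\Lambda(a,t;a',t')$. Claim~(iii), continuity, follows from the implicit function theorem at simple roots (using $\bd H/\bd s \neq 0$, guaranteed on any stratum disjoint from $\zeta_t\times\reals^2$) and, at a root of constant multiplicity $m$ on $\Delta$, from the standard continuous dependence of roots on coefficients combined with the multiplicity invariance established in~(i). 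The dual statement is obtained by a verbatim argument, replacing $\sigma_{t,t'}$, $\zeta_t$, $\zeta_{t'}$ by $\delta_{a,a'}$, $\eta_a$, $\eta_{a'}$ and swapping the roles of $(a,a')$ and $(t,t')$.

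The main obstacle I anticipate is making sure that the CAD we construct truly refines every locus at which the root structure could change: beyond the defining polynomial of $\sigma_{t,t'}$, one may need to adjoin to the CAD input the discriminants of $H(a,t,\cdot)$ and $H(a',t',\cdot)$ in $s$, together with their $s$-leading-coefficient polynomials, so that the stratification also registers every transition in multiplicity or escape to infinity. Since these are a constant-sized family of polynomials of constant degree, the CAD is still computable in constant time per $(t,t')$, and every stratum of this refined CAD satisfies (i)--(iii). A minor secondary point is that strata of dimension less than four may lie inside $\sigma_{t,t'}$ itself; there the roots of the two sequences can share a fixed coincidence pattern throughout $\Delta$, and one should verify that the continuity in~(iii) still holds for these coinciding branches, which again follows from constancy of multiplicity on $\Delta$.
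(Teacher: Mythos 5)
Your argument is essentially the paper's: the paper precedes the lemma with an informal justification that invokes CAD sign-invariance, the fact that $\sigma_{t,t'}$ already encodes the critical loci (coincident roots via $\sigma^{(0)}_{t,t'}$, repeated roots via $\zeta_t$, $\zeta_{t'}$), and the implicit function theorem for continuity, which is exactly your plan. The one place where you are more careful is the ``escape to infinity'' caveat, and it is a genuine refinement: the existential definition of $\zeta_t$ in the paper does \emph{not} literally include the locus where the leading $s$-coefficient of $H(a,t,\cdot)$ vanishes, yet that locus can change $k_{a,t}$. The paper handles this only implicitly, by computing $\zeta_t$ as the zero set of $\res\bigl(H,\partial H/\partial s;s\bigr)$, which contains the $s$-leading coefficient as a factor (since $\res(p,p')=\pm a_d\cdot\mathrm{disc}(p)$), so the CAD is in fact already refined at that locus; your proposal to adjoin the leading-coefficient polynomial explicitly to the CAD input is an equivalent and cleaner way to guarantee the same thing.
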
 
%%%%%%%%%%%%%%%%%%%%%%%%%%%%%%%%%

A similar technique is used in a more involved context in Section~\ref{sec:fg}.

\paragraph*{Back to point location.}
The analysis can then proceed following the approach in Section~\ref{sec:line}.
That is, using Fredman's trick, we locate the points $(a,a')$ in the arrangement 
of the set of strata of the surfaces~$\sigma_{t,t'}$, 
% \boris{Something more like: in the arrangement formed by the overlay of 
% MUMBO JUMBO (sorry, lost track of what the object was called, something with a bar), 
% for each surface $\sigma_{t,t'}$?  Or is $\sigma_{t,t'}$ no longer a single object?}
for $(a,a')\in \bigcup_\tau (A_\tau\times A_\tau)$ and for
$(t,t')\in \bigcup_{\tau'} (B_{\tau'}\times B_{\tau'})$, or locate the 
dual points $(t,t')$ in the arrangement of the strata of the curves $\delta_{a,a'}$, 
for the same pairs $(a,a')$, $(t,t')$, with the same performance cost as 
in Section~\ref{sec:line}. 

Concretely, we take the set $\Sigma$ of all strata of the surfaces $\sigma_{t,t'}$, 
of dimension at most $3$, and construct a suitable cutting of $\Sigma$. Since
some strata of $\Sigma$ are of dimension $0$, $1$, or $2$, one has to modify the
original construction (as presented in \cite{CEGS-91,Koltun-04}) to accommodate such surfaces 
too. This is an easy technical modification which we omit here.
% \boris{Should we discuss in person/over Zoom what it is that we are hiding here?  If it's nothing too scary, I am OK with this wording.}
A similar cutting is constructed in the plane at any dual step.

If $\kappa$ is a full-dimensional primal cell of the cutting, constructed
during some stage of the recursion, then each surface $\sigma_{t,t'}$
% \boris{Aha!  So there's a notational issue here.  Sometimes it's the UNION 
% of these pieces and sometimes (when you overlay) its just a set of pieces...}
(which, as we recall, is the union of its strata) either intersects~$\kappa$ 
or is disjoint from $\kappa$. If $\kappa$ is of lower dimension, then it may also 
be fully contained in, or partially overlap, some strata of the surfaces $\sigma_{t,t'}$.

% \boris{I cannot find an earlier definition of ``crossing'' even though we have used the term dozens of times already.  Now we seem to redefine it to mean something else...}

Similar to what is done in standard applications of cuttings, we say that a stratum $\phi$ 
\emph{crosses} a cell $\kappa$ of the cutting if $\phi\cap\kappa \ne \emptyset$
but $\kappa$ is not contained in $\phi$. In particular, if $\kappa$ is of lower 
dimension, strata that partially overlap $\kappa$ (but do not fully contain it) 
are also regarded as crossing $\kappa$.
% \esther{Since we use CAD, if we have a partial overlap, doesn't it mean the full-dimensional stratum that has $\phi$ as one of its faces must intersect $\kappa$?}

The standard theory of cuttings (see, e.g., \cite{CF-90}) asserts that the 
number of strata that cross a cell $\kappa$, including those that partially 
overlap it (for lower-dimensional cells), is at most $|\Sigma|/r$, where $r$ 
is the parameter of the cutting.

With this observation, the recursive point-location mechanism
can proceed as before, and its output is, as before, a union of 
complete bipartite graphs, each of the form $\Sigma_\kappa\times P_\kappa$, 
% \micha{The notation $\Sigma_\kappa$, $P_\kappa$ is not symmetric, and suffers
  % when we go to the dual. To be fixed.}
% \boris{Why do you need to fix it?  Just say you do the argument for the primal 
% and this notation is in primal and the argument for the dual is analogous.}
where $\kappa$ denotes some primal or dual cell of the cutting constructed at some stage of 
the recursion, where in the primal, $P_\kappa$ is the set of primal points contained in $\kappa$, 
and $\Sigma_\kappa$ is a set of primal surfaces whose CAD has a stratum that fully contains $\kappa$
% all of whose strata miss $\kappa$, or surfaces that have one stratum that fully contains $\kappa$ 
(there can be at most one such stratum per surface). In this definition we also consider
full-dimensional strata (which tile up the complement of the surface). Thus if $\kappa$
is fully contained in such a stratum, it means that the surface is disjoint from $\kappa$.
Symmetric definitions apply in the dual context.

Lemma~\ref{inv-sigma-delta} implies that for each 
bipartite graph $\Sigma_\kappa\times P_\kappa$ the following properties hold.
If the graph was constructed in a primal phase and $\kappa$ is full-dimensional
then, for each~$\sigma_{t,t'}\in\Sigma_\kappa$ (which misses $\kappa$), each of 
the sequences $\Xi(a,t)$, $\Xi(a',t')$ has a fixed combinatorial structure,
and each comparison between a root in $\Xi(a,t)$ and a root in $\Xi(a',t')$ has
a fixed outcome (which is a strict inequality), for all $(a,a')\in\kappa$. 
If $\kappa$ is of lower dimension and $\sigma_{t,t'}$ misses it, the same
argument applies, but it also holds when $\sigma_{t,t'}$ has a stratum that 
fully contains~$\kappa$.
Indeed, any change in any of these invariants
must occur when (in the primal) $(a,a')$ crosses~$\sigma_{t,t'}$, or, when 
$\kappa$ is not full-dimensional, when $(a,a')$ varies in some fixed stratum of
$\sigma_{t,t'}$ that partially overlaps $\kappa$, and reaches the 
boundary of this stratum. In either case, by construction, $\sigma_{t,t'}$ 
does not participate in the graph $\Sigma_\kappa\times P_\kappa$.
Symmetric properties hold when $\kappa$ is constructed in the dual phase. 

% \esther{I think the approach of constructing a compact representation of the form $\Sigma_\kappa\times P_\kappa$ is slightly wrong, since the outcomes (like we say later in the text) are different in different portions of each $\sigma_{t,t'}$. What we actually need is to locate every point $(a,a')$ in every stratum of $\sigma_{t,t'}$, this is enough to obtain the answers to all comparisons among the $s$-roots. So why don't we do the following (which is somewhat similar to what is written below): for every cell $\kappa$ of the cutting, pick an arbitrary point $(a_\kappa,a'_\kappa)$ and locate it in every stratum that contains $\kappa$, the complexity is linear in the overall number of strata, over all $t,t'$, and the number of cells $\kappa$. In short, what we really need to know is which stratum contains which cell $\kappa$.}
% \boris{I really do not understand the difference between what you wrote and what is written below.}

Note the order of quantifiers in our description of $\Sigma_\kappa \times P_\kappa$: We are not claiming that these properties hold
uniformly for every surface in $\Sigma_\kappa$ and every point in $P_\kappa$, but
rather that they hold for each surface uniformly over the points. Different
surfaces may have different outcomes, but these outcomes are fixed for each surface,
over the entire cell $\kappa$.

We still need to determine these outcomes. We use the following technique.
Suppose we are in the primal, and consider a cell $\kappa$ of the partition
at some recursive step. As argued, for each $\sigma_{t,t'}\in\Sigma_\kappa$, the outcomes
of all the comparisons between the roots of $\Xi(a,t)$ and those of $\Xi(a',t')$,
as well as the combinatorial structure of each of these two sequences,
are independent of the point~$(a,a')\in\kappa$. We therefore pick an
arbitrary representative point~$(a_\kappa,a'_\kappa)\in\kappa$, not necessarily from~$A\times A$,
and compute, explicitly and by brute force, the required outcomes (for the pairs
$(a_\kappa,t)$ and $(a'_\kappa,t')$), which therefore give us all the outcomes 
for all the pairs arising from $\{\sigma_{t,t'}\}\times P_\kappa$. By repeating
this for all $\sigma_{t,t'}\in\Sigma_\kappa$, we resolve all the comparisons
encoded in $\Sigma_\kappa\times P_\kappa$. A fully symmetric procedure applies
in the dual context. Repeating this step for all cells $\kappa$, both primal and dual,
we resolve all comparisons needed to sort the $s$-roots. The total time for this step
is proportional to the total size of the vertex sets of the complete bipartite graphs,
and is thus within the time bound of the algorithm in Section~\ref{sec:line}.

Once we have all those sorted sequences, searching with the values $s\in S$ 
is done exactly as in Section~\ref{sec:line}.

We thus obtain: 
% \micha{Make sure that all the assumptions that we need are `disposable'.}

% \micha{Read seriously up to here. The rest should be ok more or less.}

%%%%%%%%%%%%%%%%%%%%%%%%%%%%%%%%%%%%%
\begin{theorem}
  \label{thm:single_pol}
  Let $A$, $B$, $C$ be $n$-point sets in the plane, 
  where $B$ and $C$ are each contained in some respective
  constant-degree algebraic curve $\gamma_B$, $\gamma_C$, with parametric representations
  $\gamma_B(t) = (x_B(t),y_B(t))$ and $\gamma_C(s) = (x_C(s),y_C(s))$,
  for $t,s\in \reals$, where $x_B(t)$, $y_B(t)$, $x_C(s)$, $y_C(s)$ are constant-degree 
  polynomials (or rational functions). Let $T$ and $S$ be the 
  respective sets of parameter values that represent the points of $B$ and of $C$.
  Let $F$ be a constant-degree 6-variate real polynomial.
  Then one can test, in the algebraic decision-tree model, whether there exists a triple
  $(a,t,s)\in A\times T\times S$, such that $H(a,t,s) \coloneqq F(a,\gamma_B(t),\gamma_C(s)) = 0$, 
  using only $O\left(n^{28/15 + \eps}\right)$ polynomial sign tests (in expectation), for any $\eps > 0$,
  where the constant of proportionality depends on $\eps$ and on the degree of~$H$.
\end{theorem}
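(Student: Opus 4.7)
The plan is to follow the blueprint of Section~\ref{sec:line}, but with the main difference that the equation $H(a,t,s)=0$ no longer admits a single explicit solution $s=\varphi(a,t)$; instead, for each $(a,t)\in A\times T$ (with finitely many exceptions), it has a constant number of $s$-roots, forming a sorted sequence $\Xi(a,t)$. First I would invoke Proposition~\ref{thm:2x1x1-one} to either directly detect a vanishing triple or reduce to a situation in which $H$ is irreducible, non-degenerate, and the fiber surfaces $\pi_s=\{(a,t):H(a,t,s)=0\}$ have good fibers for all but $O(1)$ exceptional values of $s$ (which are handled separately in $O(n\log n)$ time). I would then dispose of the criticality locus $\psi=\{(a,t)\in\reals^3:H(a,t,\cdot)\equiv 0\}$ by verifying that $\psi$ itself has good fibers (any failure immediately yields an element of $A$ or of $T$ generating many zero triples), and using Theorem~\ref{thm:hier_partition_12}(ii) to enumerate all pairs of $A\times T$ lying on $\psi$ in subquadratic time well below the final bound.

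Next I would build the hierarchical polynomial partitions of $A$ and $T$ with parameters $g$ and $h$, exactly as in Section~\ref{sec:line}, and replace the single-valued equation $\varphi(a,t)=\varphi(a',t')$ with two composite primal surfaces. For each pair $(t,t')$ in a common bottom-level cell of the partition of $T$, I would define $\sigma_{t,t'}$ as the union of $\{(a,a'):\res(H(a,t,s),H(a',t',s);s)=0\}$ with $\zeta_t\times\reals^2$ and $\reals^2\times\zeta_{t'}$, where $\zeta_t$ captures the $a$-projections of the singular points and $s$-vertical tangencies of the fiber $\{H(a,t,s)=0\}$. Symmetrically in the dual, I would attach to each $\delta^{(0)}_{a,a'}$ the lines $\eta_a\times\reals$ and $\reals\times\eta_{a'}$. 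The key structural fact, recorded as Lemma~\ref{inv-sigma-delta}, is that on each stratum of the CAD of $\sigma_{t,t'}$ (respectively of $\delta_{a,a'}$) the combinatorial structure of $\Xi(a,t)$, $\Xi(a',t')$, and their merged sequence $\Lambda(a,t;a',t')$ is invariant; this is what lets a single brute-force evaluation at one representative point of each stratum resolve every relevant root comparison in constant time.

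With these primal surfaces and dual curves in hand, I would run the same primal-dual flip-flop batched point-location procedure as in Lemma~\ref{lem:107}: a vertical $(1/r_1)$-cutting in $\reals^4$ followed by a planar $(1/r_2)$-cutting in the dual, recursing to constant-size subproblems. Because the sign invariants hold per surface over each cell $\kappa$ produced by the cuttings, the output is a disjoint union of complete bipartite graphs $\Sigma_\kappa\times P_\kappa$ within the same $O(n^{10/7+\epsilon}g^{6/7+\epsilon}h^{4/7})$ time and size bound, where picking one representative $(a_\kappa,a'_\kappa)\in\kappa$ and computing once per surface labels all comparisons. This sorts the root lists $\Xi(a,t)$ restricted to each bottom-level pair of cells. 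I would then search with each $s\in S$ by locating it in the cell complexes induced by the surfaces $\pi_s$; the good-fibers property guarantees that $\pi_s$ reaches $O((n/g)^{1+\epsilon})$ cells, giving a total search cost of $O(n^{2+\epsilon}/g^{1+\epsilon})$ after setting $h=n^{1/2}g^{1/2}$. Balancing the partitioning term $n^{12/7+\epsilon}g^{8/7+\epsilon}$ against $n^{2+\epsilon}/g^{1+\epsilon}$ by choosing $g=n^{2/15}$ yields the claimed $O(n^{28/15+\epsilon})$ bound.

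The main obstacle, and the place where this proof departs most substantially from the line case, is engineering $\sigma_{t,t'}$ and $\delta_{a,a'}$ (together with their stratifications) so that the entire merged root sequence $\Lambda(a,t;a',t')$, including multiplicities and coincidences between $\Xi(a,t)$ and $\Xi(a',t')$, has genuinely invariant structure on each stratum. This is what forces the inclusion of the singularity/vertical-tangency loci $\zeta_t$ and $\eta_a$ into the partitioning objects, and it also forces the preliminary removal of the criticality locus $\psi$ and of the exceptional $s$-values ruled out by the good-fibers reduction, since otherwise the $\Xi(a,t)$'s would not be finite sequences of fixed cardinality, and the CAD-based invariance in Lemma~\ref{inv-sigma-delta} would collapse.
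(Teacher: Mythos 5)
Your proposal is correct and mirrors the paper's own proof of Theorem~\ref{thm:single_pol} essentially step for step: the same reduction via Proposition~\ref{thm:2x1x1-one}, the same treatment of the criticality locus $\psi$, the same augmented surfaces $\sigma_{t,t'}$ and $\delta_{a,a'}$ (resultant plus $\zeta_t$/$\eta_a$ loci) with CAD stratification and the invariance Lemma~\ref{inv-sigma-delta}, the same primal-dual point-location recursion of Lemma~\ref{lem:107}, the same choice $h=n^{1/2}g^{1/2}$, and the same balancing at $g=n^{2/15}$. No meaningful deviation or gap.
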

%%%%%%%%%%%%%%%%%%%%%%%%%%%%%%%%%%%%%
% \medskip

\noindent\textbf{Remark.} 
As already noted, Theorem~\ref{thm:single_pol} can be extended to the case where
the parameterization functions $x_B(t)$, $y_B(t)$, $x_C(s)$, $y_C(s)$ are arbitrary
constant-degree algebraic functions, but we do not treat this extension in this work,
so as to avoid the extra technical complications that this extension involves.

As an immediate special case, we obtain:
\begin{theorem}[General collinearity testing in $2{\times}1{\times}1$ dimensions]
  \label{thm:coll}
  Let $A$, $B$, $C$ be $n$-point sets in the plane as in Theorem~\ref{thm:single_pol}. 
  Then one can test whether $A\times B\times C$ contains a collinear triple, in
  the algebraic decision-tree model, using only
  $O\left(n^{28/15+\eps}\right)$ polynomial sign tests (in expectation), for any $\eps>0$,
  where the constant of proportionality depends on $\eps$ and on the 
  degrees of $\gamma_B$, $\gamma_C$.
\end{theorem}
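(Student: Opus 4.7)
The plan is to derive this theorem as an immediate specialization of Theorem~\ref{thm:single_pol}, taking $F$ to be the orientation-test determinant displayed in~\eqref{eq:matrix}. First I would observe that collinearity of a triple $(a,b,c)\in A\times B\times C$ is equivalent to the vanishing of this $3\times 3$ determinant, which, when expanded, is a quadratic $6$-variate polynomial in the coordinates of $a$, $b$, $c$ (and is irreducible, since the orientation form factors nontrivially only over very degenerate substitutions). Substituting $b = \gamma_B(t) = (x_B(t),y_B(t))$ and $c = \gamma_C(s) = (x_C(s),y_C(s))$ yields the polynomial $H(a,t,s) \coloneqq F(a,\gamma_B(t),\gamma_C(s))$ whose degree is a constant depending only on $\deg\gamma_B$ and $\deg\gamma_C$. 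All hypotheses of Theorem~\ref{thm:single_pol} are therefore satisfied, and its conclusion—$O(n^{28/15+\eps})$ polynomial sign tests—transfers verbatim.

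A small amount of additional care is needed to dispose of degenerate configurations of the curves before invoking the general machinery. Concretely, I would first run the preprocessing already described at the beginning of Section~\ref{sec:line}: factor $\gamma_B$ and $\gamma_C$ into irreducible components and handle each pair of components separately; test whether $B$ and $C$ are disjoint; identify the $O(1)$ points of $\gamma_B\cap\gamma_C$ and sort $A\cup B\cup C$ angularly around each to detect any collinearity involving such a point in $O(n\log n)$ time; and handle the special case in which $\gamma_B$ and $\gamma_C$ coincide and are a line, by a direct scan of~$A$. All of this takes $O(n\log n)$ time and is free in the algebraic decision-tree model. After this preprocessing, the irreducibility and non-degeneracy assumptions on $H$ imposed in Section~\ref{sec:good} can be arranged, and Proposition~\ref{thm:2x1x1-one} (embedded inside the proof of Theorem~\ref{thm:single_pol}) either produces a collinear triple outright or ensures the good-fibers property required for the main argument.

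The only thing worth double-checking is the bookkeeping of the constants: the degree of $H$, and therefore the degrees of the resultants $\res(H(a,t,s),H(a',t',s);s)$, of the CADs built on the surfaces $\sigma_{t,t'}$ and curves $\delta_{a,a'}$, and of the partitioning polynomials supplied by Theorems~\ref{thm:hier_partition_CP} and~\ref{thm:hier_partition_12}, all depend on $\deg\gamma_B$ and $\deg\gamma_C$. Since these degrees are constants by hypothesis, the implicit constants in $O(n^{28/15+\eps})$ depend only on $\eps$ and on $\deg\gamma_B$ and $\deg\gamma_C$, exactly as stated. I do not see any genuine obstacle beyond these routine verifications; the real work has already been done in Theorem~\ref{thm:single_pol} and in the good-fibers reduction of Section~\ref{sec:good}.
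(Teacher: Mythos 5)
Your proposal is correct and matches the paper's own argument: the paper states Theorem~\ref{thm:coll} directly as ``an immediate special case'' of Theorem~\ref{thm:single_pol}, applied to the orientation determinant of~\eqref{eq:matrix}, with the degenerate configurations of $\gamma_B,\gamma_C$ and the good-fibers property handled by the preprocessing of Sections~\ref{sec:line}--\ref{sec:good} exactly as you describe. (One minor nit: irreducibility is not actually required of $F$ itself by Theorem~\ref{thm:single_pol}; it is an assumption imposed and, if necessary, repaired on $H(x,t,s)=F(x,\gamma_B(t),\gamma_C(s))$ by Proposition~\ref{thm:2x1x1-one}.)
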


As a more general corollary of Theorem~\ref{thm:single_pol}, we obtain: 

%%%%%%%%%%%%%%%%%%%%%%%%%%%%%%%%%%%%%
\begin{corollary}[Unit area triangles and related problems]
  \label{cor:unit_area}
  Let $A$, $B$, $C$ be $n$-point sets in the plane as in Theorem~\ref{thm:single_pol}. 
  Then one can test whether $A\times B\times C$ contains a triple that form a unit-area triangle,
  or a unit-perimeter triangle, or a triangle with a prescribed circumradius or inradius,
  in the algebraic decision-tree model, using only
  $O\left(n^{28/15+\eps}\right)$ polynomial sign tests (in expectation), for any $\eps>0$,
  where the constant of proportionality depends on $\eps$ and on the 
  degrees of $\gamma_B$, $\gamma_C$.
\end{corollary}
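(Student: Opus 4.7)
The plan is to reduce each of the listed geometric conditions to the vanishing of a single constant-degree 6-variate real polynomial $F(a,b,c)$ in the coordinates of the triple, and then invoke Theorem~\ref{thm:single_pol} essentially verbatim. In all four cases, after suitable rationalization, the desired predicate on a triple $(a,b,c)\in A\times B\times C$ becomes $F(a,b,c)=0$ for some~$F$ whose degree depends only on the geometric property in question (and not on~$n$). Since Theorem~\ref{thm:single_pol} already accommodates an arbitrary constant-degree real $F$, the only per-case work is to write down~$F$ explicitly and verify its constant degree; the good-fibers precondition is taken care of generically by Proposition~\ref{thm:2x1x1-one}, which either reduces to an instance satisfying the property or solves the problem directly in near-linear time.

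For the unit-area case, the signed area $\tfrac12\det\bigl[b-a\,,\,c-a\bigr]$ is a quadratic polynomial in $(a_1,a_2,b_1,b_2,c_1,c_2)$, so the condition ``the triangle has area~$1$'' becomes $F_{\rm area}(a,b,c)\coloneqq 4(\text{signed area})^2-1=0$, a polynomial of degree~$4$. For the unit-perimeter case, write $|ab|^2$, $|bc|^2$, $|ca|^2$ as quadratic polynomials $P_{ab}$, $P_{bc}$, $P_{ca}$; the condition $\sqrt{P_{ab}}+\sqrt{P_{bc}}+\sqrt{P_{ca}}=1$ can be rationalized by repeated isolation and squaring, yielding a constant-degree polynomial~$F_{\rm per}$ in the six coordinates. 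For the circumradius (respectively, inradius) case, one uses the identities $R=\frac{\ell_1\ell_2\ell_3}{4K}$ (respectively $r=K/s$), where $\ell_i$ are the side lengths, $K$ is the area and $s$ the semi-perimeter; since $K$ and $\ell_i^2$ are polynomial in the coordinates, squaring out the remaining square roots again gives a single constant-degree polynomial equation $F_R(a,b,c)=0$ or $F_r(a,b,c)=0$. In each case the degree depends only on the property, not on the input.

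With $F$ in hand, we apply Theorem~\ref{thm:single_pol} with $H(a,t,s)\coloneqq F(a,\gamma_B(t),\gamma_C(s))$. The composition $H$ inherits constant degree from $F$ and from the constant-degree parameterizations $\gamma_B$, $\gamma_C$. Before invoking the combinatorial/algorithmic machinery, we run the preprocessing of Proposition~\ref{thm:2x1x1-one}, which either (i) exhibits an explicit triple with $F(a,b,c)=0$, (ii) certifies that no such triple exists, or (iii) removes an $O(1)$ list of exceptional elements from $A$, $T$, or $S$ and reduces to a version of $H$ satisfying the good-fibers condition. The exceptional elements are handled in $O(n\log n)$ time by direct inspection, while the reduced instance is fed into the $O(n^{28/15+\eps})$ algorithm of Theorem~\ref{thm:single_pol}.

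The only aspect requiring any care, and the step I expect to be the main obstacle, is the rationalization in the perimeter, circumradius, and inradius cases: one must verify that isolating and squaring away the square roots produces a genuine polynomial equation equivalent to the original geometric condition (up to spurious roots that are ruled out by sign information which can be checked in $O(1)$ per candidate triple), and that the resulting degree stays constant. Once this is done, the corollary follows directly, with the same bound $O(n^{28/15+\eps})$ and the constant of proportionality depending on $\eps$ and on the degrees of $\gamma_B$, $\gamma_C$ (now together with the small constant degree of the geometric polynomial~$F$).
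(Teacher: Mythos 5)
Your proposal is correct and follows exactly the (implicit) route the paper intends: the corollary is stated without proof as an immediate consequence of Theorem~\ref{thm:single_pol}, and your reduction of each geometric condition to a single constant-degree $6$-variate polynomial $F$ is precisely the intended argument. You also correctly note that the good-fibers requirement is dispatched generically by Proposition~\ref{thm:2x1x1-one} rather than by ad-hoc reasoning for each instance.

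One small point worth flagging, since you raise it yourself. For the unit-area and circumradius cases the resulting $F$ is an \emph{exact} algebraic description of the predicate: the signed area is already a quadratic polynomial, and the identity $16R^2K^2 = \ell_1^2\ell_2^2\ell_3^2$ (with $K$ the signed area and $\ell_i^2$ the squared side lengths) gives a clean polynomial with no spurious zeros for non-degenerate triangles. In contrast, for unit perimeter and for the inradius one genuinely needs to rationalize by taking a product over sign choices of the square roots, and the resulting $Z(F)$ strictly contains the desired semi-algebraic set. You are right that this calls for filtering candidate triples by sign conditions; the paper does not address this technicality, since it presents the corollary without proof. It is a real but minor issue (and one that can be absorbed by passing to the counting/reporting variants that the introduction alludes to, or by adding the sign constraints to the decomposition used in the search phase), but it is good that you identified it as the only place where the reduction is not completely routine.

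(Minor nit: your $F_{\mathrm{area}}$ corresponds to area $\tfrac12$, not $1$; the correct normalization is $\det^2-4=0$ with $\det$ the orientation determinant. This does not change the degree or the conclusion.)
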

%%%%%%%%%%%%%%%%%%%%%%%%%%%%%%%%%%%%%

%%%%%%%%%%%%%%%%%%%%%%%%%%%%%%%%%%%%%%%%%%%%%%%%%
\subsection{An Alternative Approach}
\label{sec:alternative}

Recall that we are given $n$-point sets $A$, $B$, and $C$ in the plane, with $A$ unrestricted, 
while each of $B$ and $C$ is contained in its respective constant-degree algebraic curve  
(which we regard, for simplicity, as an injective image of the real line, via our
parameterizations). In what follows 
we present a different approach to handling the $2{\times}1{\times}1$-dimensional
collinearity testing, in which we flip the roles of $A$ and $C$.
That is, we preprocess $B\times C$ and then search with the points of~$A$.
This approach yields a subquadratic bound inferior
to the one presented in Theorem~\ref{thm:coll}, %
but its methodology is interesting in its own right, 
and we hope that it will find applications in other problems of this kind.
In particular, we use this approach in a follow-up work in progress (see the introduction for a brief description thereof).
It is also perhaps closer to the approach of Barba \etal~\cite{BCILOS-17} and of Chan~\cite{Chan-18}.

The preprocessing of $B\times C$ is easy, as these sets are one-dimensional.
We choose some parameter $g$ (this time, it is the same %
for~$B$ and~$C$), whose value will be fixed later, and partition $B$ (resp., $C$)
by subdividing $\gamma_B$ (resp., $\gamma_C$) into $n/g$ arcs, each 
containing $g$~points of~$B$ (resp., $C$). We denote the resulting blocks
of $B$ (resp., $C$) as~$B_1,\ldots,B_{n/g}$ (resp.,~$C_1,\ldots,C_{n/g}$).

% \esther{I switched $F$ with $H$ to be consistent with the preceding subsections.}
Let $H$ be the $4$-variate (i.e., $2{\times}1{\times}1$-variate) constant-degree
polynomial representing the collinearity testing function in~(\ref{eq:collinear}).
Our goal is to test whether it vanishes at some point of %$A\times B\times C$.
$A\times T\times S$, where
$T$ and $S$ are the sets of real numbers that represent $B$ and $C$, as above.
For each $t\in T$, $s\in S$, we define the curve
\[
\delta_{t,s} \coloneqq \{a\in \reals^2 \mid H(a,t,s) = 0\} ,
\]
which is in fact a line (refer once again to~(\ref{eq:collinear})). 
Let $\Gamma$ be the collection of these lines. %curves.
In principle, we want to construct the arrangement~$\A(\Gamma)$ of these lines, %curves,
and locate the points of~$A$ in the arrangement, aiming to detect points that lie on a line %curve
of $\Gamma$. This is too expensive, even in the algebraic decision-tree model,
so we only aim, following, as above, the usual spirit of Fredman's trick,
to construct the subarrangements~$\A(\Gamma_{i,j})$, over all
pairs $(i,j) \in [1,n/g]^2$, where 
\[
\Gamma_{i,j} \coloneqq \{\delta_{t,s} \mid t\in T_i,\; s\in S_j \} ,
\]
and $T_i$ (resp., $S_j$) is the subset of numbers representing the points of $B$ (resp., $C$)
in the block $B_i$ (resp., $C_j$).
Even this more modest goal is still too expensive in the uniform model, but
we can make it efficient (i.e., significantly subquadratic) in the algebraic 
decision-tree model. To do so, we play the following variant of Fredman's trick. 

Ignoring for the moment efficiency of the procedure, performing point location, 
with some point $a=(a_1,a_2)\in\reals^2$, in $\A(\Gamma_{i,j})$ is 
accomplished by performing two types of operations:
(i)~Determine whether $a_1$ lies (lexicographically) to the left or to the right of some
vertex of~$\A(\Gamma_{i,j})$.
%\esther{Should we sort them lexicographically in case the $x$-coordinates coincide?}
(ii)~Test whether $a$ lies above or below some line %curve
of $\Gamma_{i,j}$.
%(To simplify the presentation, we assume that each of these curves is $x$-monotone and unbounded; handling the general case is only slightly more involved.)
We focus on operations of type (i) and will later argue that we can implement the procedure 
so that operations of type~(ii) are fairly easy to perform. To speed up the procedure, we perform 
the point location, and the preceding preprocessing, in the ``combined'' arrangement $\A(\cup_{i,j} \Gamma_{i,j})$.

In order to be able to perform operations of type (i), we simply sort
the vertices of~$\A(\cup_{i,j}\Gamma_{i,j})$ lexicographically in their $x$-order (that is, 
first according to their $x$-coordinates, and then according to their $y$-coordinates if two vertices lie on a vertical line),
and then locate each query point $a\in A$ amid them by binary search (with the $x$-coordinate 
of $a$, and then with its $y$-coordinate) through this sequence.
To sort the vertices, we need to (lexicographically) compare the $x$-coordinates of pairs 
of vertices. In general, each such pair $(u,v)$ is determined by four lines, %curves,
with two lines, %curves,
$\delta_{t_1,s_1}$, $\delta_{t_2,s_2}$ intersecting at $u$, and the other two lines, %curves,
$\delta_{t_3,s_3}$, $\delta_{t_4,s_4}$ intersecting at $v$.
To simplify the presentation, assume that the test of whether $u$ lies (lexicographically) to the 
left of $v$ amounts to testing the sign of some fixed-degree polynomial
$\Phi(t_1,t_2,t_3,t_4;s_1,s_2,s_3,s_4)$.
(In full generality this may involve deciding whether some Boolean predicate involving 
polynomial equalities and inequalities in these variables is satisfied.)
%The possibility that a pair of curves intersect more than once, in which case specifying the pair does not completely determine the arrangement vertex, can be handled by encoding the number of preceding intersections (that lie more to the left) into a more involved Boolean formula with quantifiers, and then by using quantifier elimination; see, e.g., \cite{BPR} for details.

%

We can now play Fredman's trick combined with duality. In the primal space
we regard $(t_1,t_2,t_3,t_4)$ as a point in $\reals^4$, and associate with 
each quadruple $(s_1,s_2,s_3,s_4)$ the three-dimensional surface 
\[
\sigma_{s_1,s_2,s_3,s_4} \coloneqq \{(t_1,t_2,t_3,t_4) \mid
\Phi(t_1,t_2,t_3,t_4;s_1,s_2,s_3,s_4) = 0 \} .
\]
Let $\Sigma$ denote the collection of these surfaces, gathered from
all quadruples $(s_1,s_2,s_3,s_4) \in \cup_j S_j^4$, and let $P$ 
denote the set of all points $(t_1,t_2,t_3,t_4) \in \cup_i T_i^4$.
By locating the points of $P$ in the arrangement $\A(\Sigma)$, we obtain
the answers to all the relevant sign tests involving $\Phi$.
The symmetry of the setup allows us to dualize the problem,
treating the quadruples $(s_1,s_2,s_3,s_4)$ as points in $\reals^4$, and 
associating with the quadruples $(t_1,t_2,t_3,t_4)$ three-dimensional surfaces,
defined in a fully symmetric manner.  We have 
\[
\abs{P},\; \abs{\Sigma} = O((n/g)\cdot g^4) = O(ng^3) .
\]
Applying suitable partitionings both in the primal and dual space,
as we did in Lemma~\ref{lem:107} (see also Lemma~\ref{lem:85}),
we can perform the point locations in time $O((ng^3)^{8/5+\eps})$, for any~$\eps>0$.
So far, we have managed to (lexicographically) sort the vertices of each of the
arrangements $\A(\Gamma_{i,j})$ by their $x$-coordinates. 
In the next step of the point-location mechanism, we partition $\A(\Gamma_{i,j})$ into vertical slabs by drawing
a vertical line through each vertex, and note that within each slab the lines
%curves
of $\Gamma_{i,j}$ are pairwise disjoint, so they have a fixed
vertical order, allowing us to locate any point $a$ in the slab amid them
by a simple binary search. Fortunately, this part of the preprocessing
hardly costs anything in the algebraic decision-tree model. 
Indeed, we only need to sort the lines %curves
in each $\Gamma_{i,j}$ by their vertical order at $x=-\infty$, namely
in reverse order of their slopes, which we can do in
$O(ng^3\log(ng))$ steps. We then simply sweep the arrangement
from left to right, and update the sorted sequence of the lines % curves
in each new slab that we reach, which is obtained from the sequence of the previous 
slab by swapping the order of the two (or more) lines %curves
that meet at the vertex that we sweep through.
This sweep and updates cost nothing in our model.
%\footnote{%
%  Here we assume that each vertex is formed by a simple crossing of the two curves, 
%  so there are no tangencies. Handling tangencies, as well as points of local 
%  $x$-extremum and singular points, can be accomplished by a further refinement of the argument.}

As before, we next want to search in the arrangements $\A(\Gamma_{i,j})$
with the points of $A$, and we argue that each point needs to be located
in only a small number of arrangements. For this, for each $a\in A$,
we look at the curve $\delta_a = \{(t,s) \mid H(a,t,s) = 0 \}$, and note
the easily verified property that it crosses only $O(n/g)$ blocks $B_i\times C_j$ (that is, $T_i \times S_j$). 
Hence, the overall number of searches that we have to perform is 
$O(n\cdot(n/g)) = O(n^2/g)$. Each search takes $O(\log g)$ steps. 
Hence the overall cost of the procedure (in the algebraic decision-tree model) is
\[
O\left((ng^3)^{8/5+\eps} + \frac{n^2\log g}{g} \right) .
\]
Once again, roughly balancing the two terms, we choose $g = n^{2/29}$ and obtain
a total cost of~$O(n^{56/29+\eps})$, for any $\eps>0$.
This, albeit subquadratic, is a weaker bound than the one in Theorem~\ref{thm:coll}.

%%%%%%%%%%%%%%%%%%%%%%%%%%%%%%%%%%%%%%%%%%%%%%%%%%%%%%%%%%%%%%%%%%%%%%%%%%%%%%%%%%%%%%%%%%%%%%
\section{Testing for the Vanishing of a Pair of Polynomials, and \\
Collinearity Testing in the Complex Plane}
\label{sec:fg}

%\boris{I think this section badly needs some subsections.}
In this section we study problems of type~(b), % namely problems
where we are given
three arbitrary sets $A$,~$B$, and~$C$ of $n$~points in the plane, and we seek a triple  
$(a,b,c)\in A\times B\times C$ that satisfies two polynomial equations of the form
$F(a,b,c) = 0$, $G(a,b,c) = 0$, where $F$ and $G$ are two
non-zero constant-degree 6-variate real polynomials without a common factor. 

As in Section~\ref{sec:col}, handling the general case, of arbitrary polynomials 
$F$ and $G$, raises technical issues that involve various degenerate scenarios.
Here, though, treating these degeneracies in full generality either leads to subproblems 
that are rather complicated to handle, or that we do not know how to solve efficiently.
As a consequence, we will be making several assumptions on~$F$ and~$G$, and some of them
will have to be checked by ad-hoc techniques for any concrete instance of the problem. 
Complex collinearity testing, as posed in the introduction, is one of these instances
(our flagship instance as a matter of fact), and we will note, after each assumption 
that we make, how it specializes to collinearity testing, and, if applicable,
what simpler assumptions are needed to make it work. (Most of these specializations and 
alternative assumptions turn out to be relatively simple for complex collinearity testing.)
This is somewhat different from the treatment in Section~\ref{sec:col}, where, 
for most of the assumptions made, we had a general procedure that either verifies them
or solves the problem, positively or negatively, in an efficient straightforward manner
when the assumptions do not hold. Such procedures are also available for some, 
but not for all, the assumptions made in this section.

\paragraph*{$F$ and $G$ have good fibers.}
Similar to Section~\ref{sec:col}, one of our main assumptions is that the polynomials $F$ and $G$
have \emph{good fibers}, in the sense that, for every $c_0\in C$, the surface 
$\pi_{c_0} \coloneqq \{(a,b)\in \reals^4 \mid F(a,b,c_0) = G(a,b,c_0) = 0 \}$ 
is two-dimensional
%  also \boris{WHAT? Over the complexes the solution of $H(a,b,c_0)=0$ is a one-complex-dimensional curve in the two-complex-dimensional  plane. In any case, what does it even MEAN? Oh, did you just mean ``is (real) two-dimensional'' or ``is two-dimensional when viewed as a surface in $\reals^4$'' or some such? I am confused.} when viewed as a surface over the complex field,
% \esther{delete ``when viewed as a surface over the complex field''? somewhat confusing.}
and has good fibers, meaning (see Section~\ref{sec:prelim}) that, for each $c_0\in C$,
the system $F(a,b_0,c_0) = G(a,b_0,c_0) = 0$ has only $O(1)$ solutions in $a$ for each $b_0\in B$, 
possibly excluding $O(1)$ exceptional values of $b_0$, and the system 
$F(a_0,b,c_0) = G(a_0,b,c_0) = 0$ has only $O(1)$ solutions in $b$ for each $a_0\in A$, 
possibly excluding $O(1)$ exceptional values of~$a_0$.

In Section~\ref{sec:good} we had a procedure that tests whether the single polynomial $F$ has good fibers,
and solves the problem in a simple manner when $F$ does not have this property. Here, in contrast,
if $F$ and $G$ do not have the good fibers property, we may find ourselves in situations where we do not
know how to find a solution efficiently. We will comment on this issue later in the section.

Note that, for simplicity, and unlike the setup in Section~\ref{sec:col}, 
we do not allow exceptional values of~$c_0$, at which the above
properties do not hold. If we have an instance in which the number of such exceptional values
is finite (and therefore constant), and we have an efficient procedure for enumerating them, we could then solve the problem for each of these values separately. In the worst case,
such a subproblem could become that of having a single polynomial that we want to test whether
it vanishes at some point of $A\times B$ (this happens when one polynomial is identically zero or the 
zero sets of the two polynomials have a common component), but even this problem is easier to solve than 
the main one we face in this section. However, when there are infinitely many exceptional values 
$c_0$, handling them may be problematic---see the promised discussion later in the section.

Observe that the good fibers assumption implies that 
$F$ and $G$ do not have a nontrivial common factor. 
Indeed, the existence of such a factor $\phi$ would imply that, for each $c_0\in C$, 
the surface $\pi_{c_0}$ contained the surface $\{(a,b)\in \reals^4 \mid \phi(a,b,c_0) = 0 \}$, 
which would be three-dimensional, contrary to the assumption.

(For complex collinearity testing, we first assume that the sets $A$, $B$, $C$ are pairwise disjoint,
for otherwise any pair of coinciding points form a collinear triple with any point from the third set; this can be efficiently tested for.
We first need to rule out exceptional values of $c_0\in C$ for which the surface 
$\{(a,b)\in \gamma_A\times\gamma_B \mid ab \text{ is collinear with } c_0\}$ has real dimension at least three.
This can happen only when $\gamma_A$ and $\gamma_B$ are a pair of coinciding lines that pass 
through $c_0$. (In any other case, we can locally parameterize the surface by a point $z$ on
one of these curves, for which the line through $c_0$ and $z$ intersects the other curve in only 
$O(1)$ points (unless the other curve is a line that passes through $c_0$ and $z$,
a situation that we will shortly filter out), showing that the surface has real dimension two.)
% \esther{Do we need to justify this statement?}.
We therefore assume that this does not occur. As a matter of fact, we assume the
stronger property that if $\gamma_A$ (resp., $\gamma_B$) is a line then it is different from 
$\gamma_B$ and $\gamma_C$ (resp., from~$\gamma_A$ and~$\gamma_C$). Under this assumption, the above
surface has real dimension two for every $c_0\in C$. The non-exceptionality of $b_0$, for a fixed $c_0$, 
means that the complex line through $c_0$ and $b_0$ contains only $O(1)$~points of $A$ 
(symmetrically, the non-exceptionality of $a_0$ means that the complex line through 
$c_0$ and $a_0$ contains only $O(1)$~points of $B$). 
This holds, by our assumption, unless $\gamma_A$ is a line that contains both $c_0$ and $b_0$
(or $\gamma_B$ is a line that contains both $c_0$ and $a_0$). Testing for the existence of
any such pair of points $(c_0,b_0)\in C\times B$ or $(c_0,a_0)\in C\times A$ is easy to do
in linear time, and the existence of any such pair implies that there are (many) collinear triples.
It is therefore safe to assume that, if $\gamma_A$ (resp., $\gamma_B$) is a line, it does not 
meet both $B$ and $C$ (resp., $A$ and $C$). With all these assumptions, we have the desired
good-fibers property. We also note that the case (which we want to rule out) where a pair of 
the curves are coinciding lines is easy to handle efficiently, since all collinearities must occur on 
that line, because our sets are assumed to be pairwise disjoint, 
and they occur if and only if the third set intersects that line.)

\paragraph*{A high-level description of the algorithm.}

We fix a (single) parameter $g\ll n$ (whose value will be set later), 
and apply Theorem~\ref{thm:hier_partition_CP}(i) to the sets $A$, $B$,
with $r = g$, to construct, implicitly and in expected $O(n\log n)$ time, 
a hierarchical polynomial partitioning 
for~$A \times B$, obtained from hierarchical planar polynomial partitionings for $A$ and for $B$,
so that each bottom-level cell is associated with a Cartesian product $A'\times B'$ of
subsets $A'\subset A$, $B'\subset B$, satisfying $|A'|$, $|B'| \le g$ 
(so the cell is associated with at most $g^2$ points of $A \times B$), 
and the overall number of cells is~$O((n/g)^{2 + \eps})$, for any prescribed $\eps > 0$.

Let $\tau$ (resp., $\tau'$) be a bottom-level cell in the hierarchical partition of~$A$
(resp., of~$B$). Let $A_\tau$ be the set of points of $A$ associated with $\tau$, and
$B_{\tau'}$ be the set of points of $B$ associated with $\tau'$.
The high-level idea of the algorithm is as follows. For most pairs 
$(a,b)\in A_\tau \times B_{\tau'}$, the system $F(a,b,c) = G(a,b,c) = 0$
has a constant number of distinct roots in $c$ (that is, in the two 
coordinates $c_1$, $c_2$ of $c$), which we can enumerate in sorted lexicographic order
as $\rho_1(a,b) < \ldots < \rho_k(a,b)$, where $k = k_{a,b}$ also depends on $a$ and $b$.
Similar to Section~\ref{sec:col}, we denote this sequence as~$\Xi(a,b)$.

The locus $\Omega^\infty$ of those $(a,b)$ for which the system has infinitely many solutions in $c$
is a lower-dimensional surface, which is obtained by eliminating one of the variables $c_1$,~$c_2$.
Specifically, for at least one of these variables, say $c_2$, its elimination must 
yield the identically zero polynomial in $c_1$, so all its coefficients (polynomials 
in $a,b$) must vanish, and their common zero set is the desired locus (see~\cite{CLO}).
% (As a matter of fact, this locus of exceptional $(a,b)$-values is easily obtained from
% the cylindrical algebraic decomposition that we will construct later---see 
% Section~\ref{sec:col} and  below for details.)
We collect all these discrete roots, over all pairs~$(a,b)\in A_\tau \times B_{\tau'}$
for which the roots are indeed discrete, and sort them into a single sequence $\Lambda_{\tau,\tau'}$. 
% \esther{For consistancy with the previous section, shall we denote it by $\Lambda_{\tau,\tau'}$ ? }

The exceptional pairs $(a,b)\in\Omega^\infty\cap (A\times B)$ are handled as follows. We assume 
that, for each $c\in C$, the two-dimensional surface $\pi_c$ meets $\Omega^\infty$ (which is of 
dimension at most~$3$) in a one-dimensional curve, denoted as $\delta_c$.\footnote{%
  This is one of our assumptions listed towards the end of this section, which we need to make in order to have our algorithm work properly---see below.} 
% \micha{Do you see what to do when the assumption fails (in the spirit of the previous section)?
% In this case $\pi_c$ is contained (or has a component that is contained) in $\Omega^\infty$. But then what?}
% \boris{Reprase the footnote to the effect that ``this is one of the assumptions we need to make to have our algorithm work properly, see \dots '' [edit to taste]}
% \esther{done.}
By Theorem~\ref{thm:hier_partition_CP} (see also Appendix~\ref{app:hierarchical}),
$\delta_c$ crosses only $O((n/g)^{1/2+\eps})$ cells $\tau\times\tau'$, and these cells can be 
computed in time $O((n/g)^{1/2+\eps})$;
see Corollaries~\ref{app:cor:hier_partition_1dim} and~\ref{app:cor:crossing_time}.
When we search with $c$, we take each of these cells and check all of its $g^2$ 
elements $(a,b)$ for the vanishing of $F$ and $G$ at $(a,b,c)$. The overall cost,
for all points of $c \in C$ and for all the cells that contain exceptional pairs $(a,b)$ 
and are crossed by $\pi_c$, 
% \boris{emphasize ``over all $c \in C$'' or some such?}
% \esther{done.}
is therefore $O(n(n/g)^{1/2+\eps}g^2) = O(n^{3/2+\eps}g^{3/2-\eps})$, 
which is subsumed in the bound on the complexity of the other steps of the algorithm.

(For complex collinearity testing, $(a,b)\in \Omega^\infty$, i.e., $(a,b)$ is a pair for which
the system $F(a,b,c) = G(a,b,c) = 0$ has a continuum of roots in $c$,
if and only if the line $ab$ coincides with $\gamma_C$. By our previous assumptions,
this scenario does not occur.)

Consider next the case where $(a,b)\notin\Omega^\infty$, so $\Xi(a,b)$ is finite and
forms part of the corresponding sequence $\Lambda_{\tau,\tau'}$.
We then search with each $c\in C$ through those sorted sequences
$\Lambda_{\tau,\tau'}$ that might contain $c$. We show, under the good-fibers 
assumption, that each~$c\in C$ has to be searched for in only $O((n/g)^{1+\eps})$
sequences. We succeed if we find~$c\in C$ for which one of these searches identifies 
$c$ as a member of the corresponding sequence.

As already noted in the previous section, sorting the union of the sets $\Xi(a,b)$ 
explicitly is too expensive, as their overall size is $O(n^2)$. 
We overcome this issue by considering the problem in the 
algebraic decision-tree model, and by using an algebraic variant of Fredman's trick
in the same spirit as in Section~\ref{sec:col}.

In sorting $\Lambda_{\tau,\tau'}$, we have to perform various (lexicographic) comparisons 
of pairs of roots $\rho_i(a,b)$ and $\rho_j(a',b')$, for $a, a'\in A_\tau$, $b, b'\in B_{\tau'}$,
and indices~$i$,~$j$. The lexicographical order can change when
the $x$-coordinates of $\rho_i(a,b)$ and $\rho_j(a',b')$ become equal 
(and then switch their order), for $a, a'\in A_\tau$, $b, b'\in B_{\tau'}$,
or when the two roots actually coincide. The former situation imposes one equality
and thus is expected to yield a three-dimensional surface, whereas the latter
situation imposes two equalities, and thus is expected to yield a two-dimensional surface.

We thus transform these comparisons to the following setup.
We consider $B_{\tau'}\times B_{\tau'}$ as a set of~$g^2$~points in~$\reals^4$, 
and associate with each pair $(a,a') \in A_{\tau}\times A_{\tau}$
%  and each pair of indices $i,j \le \deg(F)\cdot\deg(G)$ (by B\'ezout's theorem for 
% real algebraic space curves, this is the maximum possible number of distinct 
% \boris{it's also the maximum number of discrete roots with multiplicty, which may become relevant if you allow roots to collide in lower-dimensional situations... not sure it's relevant}
% discrete $c$-roots of the system $F(a,b,c)=G(a,b,c)=0$), 
the three-dimensional surfaces
% \boris{Should we add $y\neq y'$ and $x\neq x'$ respectively?}
% \esther{I am adding this, since we say these surfaces are 3-dimensional.}
\begin{equation}
  \begin{aligned}
    \sigma^x_{a,a'} \coloneqq \{ & (b,b') \in\reals^4 \mid \exists x,y,y'\in\reals :
    %% BA: add??
    y\neq y' \text{ and } 
    \\ 
    & F(a,b,(x,y)) = G(a,b,(x,y)) = F(a',b',(x,y')) = G(a',b',(x,y')) = 0 \} , \\
    \sigma^y_{a,a'} \coloneqq \{ & (b,b') \in\reals^4 \mid \exists x,x',y\in\reals :
    %% BA: add??
    x\neq x' \text{ and } 
    \\ 
    & F(a,b,(x,y)) = G(a,b,(x,y)) = F(a',b',(x',y)) = G(a',b',(x',y)) = 0 \} . 
  \end{aligned}\label{eq:all-sigma}
\end{equation}
% \micha{Do we want to say that $\sigma^x_{a,a'}$ is `more significant' than $\sigma^y_{a,a'}$,
% in the sense that crossing $\sigma^x_{a,a'}$ implies a change in the order, but 
% crossing $\sigma^y_{a,a'}$ is significant only if we also cross (or lie on) $\sigma^x_{a,a'}$?}
% \boris{It is a true statement, but I am not sure it's needed.  Might change my opinion later.}
% \esther{At this point, I would not say anything.}

Extending the technique in Section~\ref{sec:col}, a three-step elimination of $x$, $y$, and $y'$,
or of $x$, $x'$, and~$y$, respectively, yields quantifier-free semi-algebraic expressions for these surfaces, which are
of more involved form, but still of constant complexity, and which show that, typically, these
are three-dimensional surfaces.
(This step costs nothing in our model.)
% \micha{In principle, a suitable CAD does this elimination for us, but we need two CADs and $7$ variables
% ($b$, $b'$, and $x$, $y$, $y'$ or $x$, $x'$, $y$). We could note this if we want, or not.} 

As in Section~\ref{sec:col}, a technical issue that arises here is that these surfaces 
are defined only in terms of $a$ and $a'$, and so they are ``disentangled'' from the points
$b$ and $b'$. As $b$~and $b'$~vary, the \emph{combinatorial structure} of
$\Xi(a,b)$ or of $\Xi(a',b')$ may change. As this notion is somewhat more involved in
the present setup, we spell it out in more detail.

Before doing so, we decompose each curve $F(a,b,\cdot)=0$, and each curve 
$G(a,b,\cdot)=0$, into \emph{strata}, using the CAD construction (recall 
the setup discussed in Section~\ref{sec:1poly}). \footnote{%
  We could also construct the full six-dimensional CAD of $F$ or of $G$,
  and extract the information from there, but this does not affect the
  present analysis.}

We then have:
%%%%%%%%%%%%%%%%%%%%%%%%%%%%%%%
\begin{definition}[Fixed combinatorial structure of $\Xi(a,b)$]
We say that the sequence $\Xi(a,b)$ of $c$-roots 
has a \emph{fixed combinatorial structure}, as $a$ and $b$ vary,
if the number $k_{a,b}$ of distinct roots (the size of~$\Xi(a,b)$) is fixed, 
and each of the roots lies on a fixed stratum of the curve $F(a,b,\cdot)=0$ 
and on a fixed stratum of the curve $G(a,b,\cdot)=0$.
In addition, no pair of roots that are not 
co-vertical become co-vertical, every co-vertical pair of roots remains 
co-vertical, and the $y$-order of the roots in such a pair does not change.
Roots that have multiplicity greater than one retain their multiplicity.
\end{definition}
%%%%%%%%%%%%%%%%%%%%%%%%%%%%%%%

We note that, for a pair $a,b$, the combinatorial structure of $\Xi(a,b)$ can 
change only at values~$(a,b)$, for which the system $F(a,b,c)=G(a,b,c)=0$ has 
either (i)~a double root in $c$, (ii)~a root that lies at a singular point 
of one of the curves, (iii)~infinitely many roots, or (iv) two co-vertical roots. 
The first type of criticality can happen when the 
curves $F(a,b,c) = 0$ and $G(a,b,c) = 0$ (in the $c$-plane, with~$a$,~$b$~fixed) 
are tangent to each other. The second type of criticality happens when one 
curve passes through a singular point of the other (this also covers cases where 
two roots that lie on different strata of one or two of the curves coincide, 
or a double root splits in the reverse manner). The third type of criticality,
namely where $(a,b) \in \Omega^\infty$, happens when the curves coincide, or come to 
have a common component, or one curve degenerates to the entire plane. 
Finally, the fourth type of criticality happens when a pair of roots become 
co-vertical, or stop being co-vertical, or when two co-vertical roots coincide 
or split (these latter situations are also criticalities of the first type). 

We note that in some of these criticalities (e.g., of the first type), the
\emph{Jacobian} 
$$
\begin{vmatrix}
\frac{\bd F}{\bd c_1} & \frac{\bd F}{\bd c_2} \\
& \\
\frac{\bd G}{\bd c_1} & \frac{\bd G}{\bd c_2} 
\end{vmatrix}
$$
of the system $F=G=0$ vanishes at $(a,b,c)$, which is an easy consequence of 
the implicit function theorem (see \cite{BCR-98}). We therefore also introduce
criticalities of type (v), of points $(a,b)$ for which the Jacobian vanishes
at one of the roots. This addition will also be useful for establishing the
continuity of the roots as $(a,b)$ vary; see below for details.

% \micha{To ensure that each root varies continuously with $(a,b)$, we want
% the Jacobian of the system $F = G = 0$ not to vanish at any of the roots
% (see below for details). I suspect that (the negation of) this condition 
% is implicit in some of the criticalities listed above. 
% Shall we mention it as another type of criticality (only to be told by 
% a referee that we have it already)?}
% \boris{To be honest I already somewhat lost track of the different weirdnesses that can happen.  The list sounds plausible, but that's all I can say.  Also, it seems to be a bit redundant: ``two roots have the same $x$-coordinate when two roots have the same $x$-coordinate''.  Eh?  I am in no position to criticize, though.}
% \esther{I don't have a strong opinion about it. The condition where the Jacobian is zero should include all these criticalties, perhaps we should mention this property here?}

We take the locus of those $(a,b)$ for which any of the criticalities of 
types (i)--(v) occurs, which includes $\Omega^\infty$, and denote the 
resulting set as $\Omega$. It is a semi-algebraic set 
in the four-dimensional $ab$-space, of dimension at most $3$. The
smaller dimensionality follows from the fact that in these critical points
we have additional constraints (such as the vanishing of the Jacobian), which reduce the dimension.
We assume that $\Omega$ has good fibers, in the sense that for each $a\in A$,
with the possible exception of $O(1)$ values, the set 
$\Omega_{(a,\cdot)} \coloneqq \{b\in\reals^2 \mid (a,b)\in\Omega\}$ is 
a one-dimensional curve or a discrete set, and for each $b\in B$, 
with the possible exception of $O(1)$ values, the set 
$\Omega_{(\cdot,b)} \coloneqq \{a\in\reals^2 \mid (a,b)\in\Omega\}$ is 
a one-dimensional curve or a discrete set.
Since $\Omega$ is of constant complexity, we can test whether this 
property holds (as a global statement, over all $a$, $b$), in the 
algebraic model that we follow, in constant time. (It actually costs nothing
in the algebraic decision-tree model.)
% \boris{No need to change the text, just need to confirm I am understanding it right: We can check in constant time for a fixed $a$?  Or we can check in constant time that this property holds for ALL $a$?  [The latter seems weird, as we seem to want to check it only for $a \in A$ and I do not see how that can be done in total constant time.]}
% \esther{I think it means: `` we can test whether this property holds, for any $a \in A$ (resp., $b \in B$), in the algebraic model that we follow, in constant time.}

We remark that, if the assumption that $\Omega$ has good fibers fails,
then, up to reversal of the roles 
of~$A$ and~$B$, there is a curve $\omega$ in the $a$-plane such that
$\omega \times \reals^2 \subseteq \Omega$. One of several degeneracies
that may arise in such a case is that one of our polynomials, say~$G$,
becomes identically zero in~$c$ over~$\omega \times \reals^2$
%\esther{can we also have the scenario where $F=F=0$ is a three-dimensional surface if, say, $A \subset \omega$?}
(this arises, e.g., at criticalities of type (iii)), and then the problem reduces to 
that of a single vanishing polynomial in $1\times 2\times 2$ dimensions
(when the majority of the points in $A$ lie on $\omega$),
a problem that we do not know how to solve in subquadratic time.
Hence this assumption is indeed needed.

(For complex collinearity testing, $(a,b)\in \Omega$ means that the line
through $a$ and $b$ is either tangent to $\gamma_C$ or passes through a 
singular point of that curve, or overlaps $\gamma_C$ in case $\gamma_C$ is a line.
The assumptions already made imply that the latter scenario does not happen,
and the former scenarios can happen, for each $a\in A$, for only $O(1)$ points $b\in B$,
and it can happen, for each $b\in B$, for only $O(1)$ points $a\in A$,
showing that the property holds in this problem.)

Next, returning to the surfaces $\sigma^x_{a,a'}$ and $\sigma^y_{a,a'}$ 
(see~\eqref{eq:all-sigma}), we unite them, for each pair $(a,a')\in A\times A$,
and add to them the surfaces $\Omega_{(a,\cdot)} \times \reals^2$ and
$\reals^2 \times \Omega_{(a',\cdot)}$. We denote the resulting surface
as $\sigma_{a,a'}$. As in Section~\ref{sec:col}, we construct the CAD of
$\sigma_{a,a'}$, and call the cells of the CAD,
of any dimension, the \emph{strata} of $\sigma_{a,a'}$.
We denote by $\Sigma$ the set of the strata of all these surfaces.
% \boris{Just for my understanding: $\Sigma$ depends on $a,a'$, right, and nothing else?  Should we use some notation like $\Sigma=\Sigma(a,a')$?}
% \esther{I think $\Sigma$ takes the union over all $(a,a')$.}

As in Section~\ref{sec:col}, we will also be working in a dual setup,
in which the roles of $A$ and $B$ are switched. In the dual context
we define, in complete analogy, for each pair $(b,b')\in B\times B$, the dual surfaces
% \boris{do we need to add the extra inequalities here too?}
% \esther{I think so. I am adding them.}
\begin{equation}
  \begin{aligned}
    (\sigma^*)^x_{b,b'} \coloneqq & \{ (a,a') \in\reals^4 \mid \exists x,y,y'\in\reals : 
    y\neq y' \text{ and } \\
    & F(a,b,(x,y)) = G(a,b,(x,y)) = F(a',b',(x,y')) = G(a',b',(x,y')) = 0 \} \\
    (\sigma^*)^y_{b,b'} \coloneqq & \{ (a,a') \in\reals^4 \mid \exists x,x',y\in\reals : 
    x\neq x' \text{ and } \\
    & F(a,b,(x,y)) = G(a,b,(x,y)) = F(a',b',(x',y)) = G(a',b',(x',y)) = 0 \} , 
  \end{aligned}
  \label{eq:all-sigma-dual}
\end{equation}
and form the surfaces $\Omega_{(\cdot,b)} \times \reals^2$ and
$\reals^2 \times \Omega_{(\cdot,b')}$. We unite all these surfaces, 
for each fixed pair $(b,b')$, and denote the resulting surface as $\sigma^*_{b,b'}$.
Here too we construct the CAD of this surface, take the cells of the CAD to be
the (dual) \emph{strata} of $\sigma^*_{b,b'}$, and denote by $\Sigma^*$ the
collection of these dual strata, over all surfaces $\sigma^*_{b,b'}$.
% \boris{Same comments for $\Sigma^*$ as for $\Sigma$.}
% \esther{Again, the union is taken over all $(b,b')$.}

As in Section~\ref{sec:col}, the following lemma shows that the strata of the
surfaces $\sigma_{a,a'}$, or, equivalently, the strata of the dual surfaces 
$\sigma^*_{b,b'}$, capture all the possible criticalities that quadruples
$(a,a',b,b')$ may exhibit. More precisely we have:

%%%%%%%%%%%%%%%%%%%%%%%%%%%%%%%%%
\begin{lemma} \label{inv-sigma-sigmastar}
Let $\Delta$ be a stratum, that is, a cell, of any dimension (including four-dimensional
strata which tile up the complement of $\sigma_{a,a'}$), 
of the CAD formed by the surface $\sigma_{a,a'}$. 
Then, as $(b,b')$ varies continuously within $\Delta$, the following properties hold. \\
(i) Each of the sequences $\Xi(a,b)$ and $\Xi(a',b')$ has a fixed combinatorial structure. \\
(ii) The sorted merged sequence of $\Xi(a,b)$ and $\Xi(a',b')$ (which, as in 
Section~\ref{sec:1poly}, we denote as $\Lambda(a,b;a',b')$) also has a fixed 
combinatorial structure, and any coincidence of roots of one sequence with roots of the 
other sequence remains invariant. In particular, for each $i=1,\ldots,k_{a,b}$, 
$j=1,\ldots,k_{a',b'}$, the relative lexicographic order of the $i$\emph{th} root of~$\Xi(a,b)$ and 
the $j$\emph{th} root of~$\Xi(a',b')$ (including possible equality between them) remains invariant.
(iii) Each of the roots of either sequence $\Xi(a,b)$, $\Xi(a',b')$, or a common 
root of both in $\Lambda(a,b;a',b')$, varies continuously with $(b,b')\in\Delta$.

\medskip
\noindent
Symmetrically, if $(a,a')$ varies continuously within a dual stratum $\Delta$, of any dimension, 
of the decomposition formed by the surface $\sigma^*_{b,b'}$, then
properties~(i)--(iii) hold for~$(a,b)$ and~$(a',b')$, with obvious modifications.
\end{lemma}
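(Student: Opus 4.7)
The plan is to argue, by a contradiction-by-continuity argument, that every possible event that could alter the claimed invariants must force $(b,b')$ to cross the surface $\sigma_{a,a'}$, and therefore to leave its CAD stratum $\Delta$. Recall that, by construction, $\sigma_{a,a'}$ is the union of four pieces: $\sigma^x_{a,a'}$ and $\sigma^y_{a,a'}$, which record cross-sequence coincidences of, respectively, $x$- or $y$-coordinates of roots of $\Xi(a,b)$ and $\Xi(a',b')$; and the two cylinders $\Omega_{(a,\cdot)}\times\reals^2$, $\reals^2\times\Omega_{(a',\cdot)}$, which record the intra-sequence criticalities (types (i)--(v)) at which the combinatorial structure of a single sequence can change. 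Since a CAD stratum $\Delta$ is, by its very construction, a connected, relatively open, semi-algebraic set on which every defining polynomial has an invariant sign, the set $\Delta$ is disjoint from $\sigma_{a,a'}\setminus \Delta$; so any continuous path inside $\Delta$ avoids all of these loci.

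Fix any continuous arc $\gamma\colon[0,1]\to\Delta$ and consider how the sequences $\Xi(a,\gamma_1(u))$ and $\Xi(a',\gamma_2(u))$ (with $\gamma=(\gamma_1,\gamma_2)$) evolve with $u$. First, since $\gamma(u)\notin\Omega_{(a,\cdot)}\times\reals^2$, none of the criticalities of types (i)--(v) occurs for the system $F(a,\gamma_1(u),c)=G(a,\gamma_1(u),c)=0$: the cardinality $k_{a,\gamma_1(u)}$, the strata of the curves $F(a,\gamma_1(u),\cdot)=0$ and $G(a,\gamma_1(u),\cdot)=0$ on which the roots lie, multiplicities, and the pattern of intra-sequence co-verticalities are all constant along the arc. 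Hence $\Xi(a,\gamma_1(u))$ has fixed combinatorial type; an analogous argument with $\Omega_{(\cdot,a')}$ (which sits inside $\sigma_{a,a'}$ via the symmetric cylinder) applies to $\Xi(a',\gamma_2(u))$. This gives~(i).

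For (ii), we now argue that the cross-sequence lexicographic pattern cannot change along $\gamma$. Any change would occur at some parameter $u_0$ at which, for some indices $i,j$, the $x$-coordinates of the $i$-th root of $\Xi(a,\gamma_1(u_0))$ and the $j$-th root of $\Xi(a',\gamma_2(u_0))$ coincide. If the corresponding $y$-coordinates at that instant differ, then by definition $\gamma(u_0)\in\sigma^x_{a,a'}\subseteq\sigma_{a,a'}$, contradicting $\gamma(u_0)\in\Delta$. Otherwise the $y$-coordinates coincide as well, i.e., the two roots are equal at $u=u_0$. Considering an approach direction from slightly before $u_0$, the roots are at distinct points of the plane, and either their $x$- or their $y$-coordinates must differ on a sequence approaching $u_0$; passing to a subsequence, one of these coordinate-equalities holds with the complementary inequality, placing the approaching points on $\sigma^x_{a,a'}$ or on $\sigma^y_{a,a'}$. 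Since $\Delta$ is relatively open in its ambient stratum and disjoint from these surfaces, we again reach a contradiction. Thus no equality becomes a strict inequality (or vice versa), and no strict order flips, along $\gamma$, proving (ii).

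For (iii), continuity of each root as $(b,b')$ varies in $\Delta$ follows from the implicit function theorem applied to the system $F=G=0$: since $(a,\gamma_1(u))\notin\Omega$, the Jacobian of this system with respect to $c$ is nonsingular at every root (this is precisely what criticality~(v) was designed to rule out), so each root extends to a local analytic branch in $b$; uniqueness of these branches together with the constancy of $k_{a,\gamma_1(u)}$ and connectedness of $\Delta$ allows the local branches to be glued into continuous (in fact analytic) functions on $\Delta$, and similarly for $\Xi(a',b')$; common roots inherit continuity from either side. The dual statement, for $(a,a')$ varying in a stratum of $\sigma^*_{b,b'}$, follows by swapping the roles of $A$ and $B$ throughout, since the defining equations~\eqref{eq:all-sigma-dual} and the associated $\Omega$-cylinders are symmetric. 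The main subtlety, and the one place where care is required, is in the second case of the argument for (ii), where roots fully coincide at $u_0$: one must ensure that the limit point is approached through $\sigma^x_{a,a'}\cup\sigma^y_{a,a'}$ rather than through some exotic locus not captured by $\sigma_{a,a'}$; this is guaranteed by the completeness of the criticality list (i)--(v) absorbed into $\Omega$.
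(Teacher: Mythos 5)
Your proof follows essentially the same path as the paper's: show that any event that could alter the claimed invariants would force $(b,b')$ onto $\sigma_{a,a'}$ and hence out of the stratum $\Delta$, with the Jacobian/implicit-function-theorem argument handling continuity in~(iii). Your treatments of~(i) and~(iii) match the paper's.

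For~(ii), you correctly flag a subtlety the paper's own proof glosses over: the definitions of $\sigma^x_{a,a'}$ and $\sigma^y_{a,a'}$ carry the clauses $y\ne y'$ and $x\ne x'$, respectively, so a point where two cross-sequence roots \emph{coincide completely} is not formally a member of either set. Your resolution of this subcase, however, does not go through. You argue that, approaching $u_0$, the two roots are distinct, so one coordinate differs, and then assert that ``passing to a subsequence, one of these coordinate-equalities holds with the complementary inequality.'' This last step is unjustified: distinctness of the roots guarantees only that \emph{at least one} coordinate differs; generically \emph{both} coordinates differ at every $u<u_0$, so no subsequence lands on $\sigma^x_{a,a'}$ or $\sigma^y_{a,a'}$. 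The correct way to close the gap is to invoke the CAD's sign-invariance directly rather than a limiting argument: the quantifier-free descriptions of $\sigma^x_{a,a'}$ and $\sigma^y_{a,a'}$ (obtained by resultant elimination of $x,y,y'$, resp.\ $x,x',y$) consist of sign conditions on a finite list of polynomials, the full-coincidence locus $\{(b,b') : \exists\, c \text{ with } F(a,b,c)=G(a,b,c)=F(a',b',c)=G(a',b',c)=0\}$ is contained in the zero sets of some of these polynomials, and the CAD of $\sigma_{a,a'}$ is, by construction, sign-invariant for all of them. Hence the full-coincidence locus lies on a different (lower-dimensional) stratum of the CAD, and cannot be met from inside $\Delta$. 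With that patch, your argument coincides with what the paper's terse ``easily follows from the definition of this surface'' is relying on.
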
 
%%%%%%%%%%%%%%%%%%%%%%%%%%%%%%%%%
\begin{proof}
We prove only the first part of the lemma; the second part is handled in a fully symmetric manner.
The proof follows the arguments in Section~\ref{sec:1poly}. That is,
we argue that, as long as $(b,b')$ varies within $\Delta$, none of the 
properties (i)--(iii) is violated. Indeed, for Property (i) to be violated, 
$(b,b')$ must reach another (at most three-dimensional) stratum within the surfaces $\Omega_{(a,\cdot)}$
or $\Omega_{(a',\cdot)}$, so in particular it reaches the boundary of $\Delta$.
% the surface $\sigma_{a,a'}$. For example, Property (i) fails to hold when $\Xi(a,b)$
% has a double root where either two existing roots coincide or a new (double) root appears,
% which happens when $(b,b')$ is on $\Omega_{(a,\cdot)}\times\reals^2 \subset \sigma_{a,a'}$.
% The property may also fail to hold when one of the polynomials $F(a,b,c)$, $G(a,b,c)$ is 
% identically zero in $c$ or their $c$-zero sets have a common component, in which case we 
% also have $(b,b') \in \Omega_{(a,\cdot)}\times\reals^2 \subset \sigma_{a,a'}$.

For Property (ii), a coincidence between a pair of roots, one from each sequence 
$\Xi(a,b)$, $\Xi(a',b')$, can disappear or newly appear only when $(b,b')$ reaches another
stratum of $\sigma_{a,a'}$, as easily follows from the definition of this surface.

Concerning Property (iii), each root $c$, say of $\Xi(a,b)$, 
varies continuously as long as the Jacobian 
$$
\begin{vmatrix}
\frac{\bd F}{\bd c_1} & \frac{\bd F}{\bd c_2} \\
& \\
\frac{\bd G}{\bd c_1} & \frac{\bd G}{\bd c_2} 
\end{vmatrix}
$$
does not vanish at $(a,b,c)$, as implied by the implicit function theorem (see \cite{BCR-98}),
and, by construction, any point at which this is violated lies in $\Omega_{(a,\cdot)}\times\reals^2$.
% \micha{Is this true? Check / or else revise the def of $\Omega$, adding the vanishing 
  % of the Jacobian as a new constraint. See also a previous comment.}
% \esther{I don't have a formal proof for this. Does it make sense to replace $\Omega$ with the surface obtained by comparing the Jacobian to $0$? (that is, we obtain an identically zero-polynomial in $c$, parameterized with $a$ and $b$, from which we derive the conditions on $a,b$, this yields the desried surface.}

The argument for the dual surfaces is fully symmetric.
\end{proof}

% \micha{The heretics might ask why we need this continuity of roots. I don't really know,
% but I feel it's a good feature to have.}
% 
% \esther{I was wondering about the same thing, since what we actually care about are properties (i)--(ii), on the other hand, it feels right to put (iii).}

\paragraph*{Back to point location.}
The analysis can then proceed following the approach in Section~\ref{sec:1poly},
except that now the point location is done in four dimensions, both in the primal
and in the dual (see below for more details).
That is, using Fredman's trick, we locate the points $(b,b')$ in the arrangement 
of the set of strata of all the surfaces~$\sigma_{a,a'}$,
% \boris{``of $\Sigma(a,a')$ for ...'' OR ``of $\bigcup \Sigma(a,a')$ where $(a,a') \dots$'' ??}
for $(a,a')\in \bigcup_\tau (A_\tau\times A_\tau)$ and for
$(b,b')\in \bigcup_{\tau'} (B_{\tau'}\times B_{\tau'})$, or locate the 
dual points $(a,a')$ in the arrangement of the strata of all the dual surfaces $\sigma^*_{b,b'}$, 
for the same pairs $(a,a')$, $(b,b')$.

Concretely, we take the set $\Sigma$ of all strata of all the surfaces $\sigma_{a,a'}$, 
of dimension at most $3$, and construct a suitable cutting of $\Sigma$, modified as in
Section~\ref{sec:1poly}, so that it also takes care of strata of dimension $0$, $1$, or $2$.
%\esther{In a second thought, I don't know why we need to take strata of dimension $0$, $1$, or $2$, we take strate of dimension $3$, and we include their boundary, so we already have strate of lower dimensions there. Same thing for Section~\ref{ec:col}.}
A similar cutting is constructed at any dual step.

If $\kappa$ is a full-dimensional primal cell of the cutting, constructed
during some stage of the recursion, then each surface $\sigma_{a,a'}$
(which, as we recall, is the union of its strata) either intersects~$\kappa$ 
or is disjoint from $\kappa$. If $\kappa$ is of lower dimension, then it may also 
be fully contained in, or partially overlap, some strata of the surfaces $\sigma_{a,a'}$.
As in Section~\ref{sec:1poly}, we unify these possibilities by saying that each cell
$\kappa$ is either fully contained in a single stratum of $\sigma_{a,a'}$, including
also full-dimensional strata (which tile up the complement of the surface), or is 
crossed by $\sigma_{a,a'}$, meaning that $\sigma_{a,a'}$ intersects $\kappa$ but no stratum of it contains $\kappa$.
%\esther{In other words, if $\kappa$ does not cross $\sigma_{a,a'}$, then there is a stratum of the complement $\sigma_{a,a'}$ that fully contains $\kappa$.}
Surfaces that cross $\kappa$ are handled recursively.

Similar and symmetric situations arise in the dual setup.

The recursive point-location mechanism can now proceed following 
standard techniques (see below for details). 
Its output is, similar to Section~\ref{sec:1poly}, a union of complete 
bipartite graphs, each of the form $\Sigma_\kappa\times P_\kappa$, 
where $\kappa$ denotes some primal or dual cell of the cutting constructed at some stage of 
the recursion, where in the primal, $\Sigma_\kappa$ is a set of primal surfaces 
that have one (possibly full-dimensional) stratum that fully contains $\kappa$
(there can be at most one such stratum per surface),
and $P_\kappa$ is the set of primal points contained in $\kappa$, 
and where symmetric definitions apply in the dual context.

Lemma~\ref{inv-sigma-sigmastar} implies, similar to the arguments in
Section~\ref{sec:1poly}, that for each bipartite graph 
$\Sigma_\kappa\times P_\kappa$ the following properties hold.
If the graph was constructed in a primal phase then, 
for each~$\sigma_{a,a'}\in\Sigma_\kappa$, each of the sequences 
$\Xi(a,b)$, $\Xi(a',b')$ has a fixed combinatorial structure,
and each comparison between a root in $\Xi(a,b)$ and a root in $\Xi(a',b')$ has
a fixed outcome, for all $(b,b')\in\kappa$. Hence the merged sequence 
$\Lambda(a,b;\;a',b')$ also has a fixed combinatorial structure.
Indeed, any change in any of these invariants
must occur when (in the primal) $(b,b')$ crosses~$\sigma_{a,a'}$, or, when 
$\kappa$ is not full-dimensional, also when $(b,b')$ varies in some fixed 
stratum of $\sigma_{a,a'}$ that partially overlaps $\kappa$, and reaches the 
boundary of this stratum. In either case, by construction, $\sigma_{a,a'}$ 
does not participate in the graph $\Sigma_\kappa\times P_\kappa$.

Symmetric properties hold when $\kappa$ is constructed in the dual phase. 

Recall again the order of quantifiers: We are not claiming that the outcomes
of the comparisons between the roots are the same
for every surface in $\Sigma_\kappa$ and every point in $P_\kappa$, but
rather that they are the same for each fixed surface, uniformly over the points.
That is, different surfaces may have different outcomes, but these outcomes 
are fixed for each surface, over the entire cell~$\kappa$.

To determine these outcomes, we use the following technique.
Suppose we are in the primal, and consider a cell $\kappa$ of the partition
at some recursive step. As argued, for each $\sigma_{a,a'}\in\Sigma_\kappa$, the outcomes
of all the comparisons between the roots of $\Xi(a,b)$ and those of $\Xi(a',b')$,
as well as the combinatorial structure of each of these two sequences,
are independent of the point $(b,b')\in\kappa$. We therefore pick an
arbitrary representative point $(b_\kappa,b'_\kappa)\in\kappa$, not 
necessarily from $B\times B$, and compute, explicitly and by brute force, 
the required outcomes (for the pairs $(a,b_\kappa)$ and $(a,b'_\kappa)$), 
which therefore give us all the outcomes for all the pairs arising from 
$\{\sigma_{a,a'}\}\times P_\kappa$. By repeating
this for every $\sigma_{a,a'}\in\Sigma_\kappa$, we resolve all the comparisons
encoded in $\Sigma_\kappa\times P_\kappa$. A fully symmetric procedure applies
in the dual context. Repeating this step for all cells $\kappa$, both primal and dual,
we resolve all comparisons needed to sort the $c$-roots. The total time for this step
is proportional to the total size of the vertex sets of the complete bipartite graphs,
and we will bound this quantity below.

%%%%%%%%%%%%%%%%%%%%%%%%%%%%%%%%%%%%%%%%%%%%%%
\paragraph*{Searching with the points of $C$.}
We next search the structure with every $c\in C$. 
We only want to visit subproblems $(\tau,\tau')$ where there 
might exist $a\in\tau$ and $b\in\tau'$ (not necessarily from $A_\tau$ 
or from $B_{\tau'}$), such that $F(a,b,c) = G(a,b,c) = 0$. To find these cells, 
and to bound their number, we consider the two-dimensional surface 
\[
  \pi_{c} \coloneqq \{(a,b)\in \reals^4 \mid F(a,b,c) = G(a,b,c) = 0 \} ,
\]
and our goal is to enumerate, and bound the number of the bottom-level cells $\tau\times\tau'$ 
in the hierarchical partition of $A\times B$ that are crossed by $\pi_c$.

Recall that we have assumed that $\pi_c$ has good fibers, for each $c \in C$
(and that we have already argued that this is indeed the case for complex collinearity testing).
Thus, by Theorem~\ref{thm:hier_partition_CP}(ii) (with~$r=g$), the number of these 
cells is $O((n/g)^{1+\eps})$, and we can find all of them in time $O((n/g)^{1+\eps})$.

Summing over all the $n$ values of $c$, and denoting by~$n_{\tau,\tau'}$ 
the number of surfaces $\pi_c$ that cross~$\tau\times \tau'$, we have
${\displaystyle \sum_{\tau,\tau'} n_{\tau,\tau'} = O(n^{2+\eps}/g^{1+\eps})}$,
for any $\eps>0$. Thus computing all such surface-cell crossings, over all 
$c \in C$, costs $O(n^{2+\eps}/g^{1+\eps})$ time. The cost of searching 
with any specific $c$ in any specific sequence is $O(\log g)$, unless the relevant
cell $\tau\times\tau'$ is met by the exceptional surface $\zeta$, in which case
we inspect all its $O(g^2)$ elements. Hence, using the preceding arguments, the 
overall cost of searching with the elements of $C$ through the structure is 
$$
O\left( n^{3/2+\eps}g^{3/2-\eps} + \frac{n^{2+\eps}}{g^{1+\eps}} \right) ,
$$
with a slightly larger $\eps$.

%%%%%%%%%%%%%%%%%%%%%%%%%%%%%%%%%%%%%%%%%%%%%%%%%%%%%%%
\paragraph*{Preprocessing: Sorting all the root sequences.}
% \boris{Did we ever define $\rho$ in Section 5?}

In order to construct the sorted sequences $\Lambda_{\tau,\tau'}$ of 
the roots $\rho_i(a,b)$, over all pairs $(a,b) \in \tau\times\tau'$, and
over all pairs $(\tau,\tau')$ of bottom-level cells, we use a batched 
point-location strategy, similar to the one in~\cite{BCILOS-17} and in the previous section.
That is, we perform $O(n^{1+\eps}g^{1-\eps})$ point-location queries in an arrangement
of $O(n^{1+\eps}g^{1-\eps})$ piecewise-algebraic 3-surfaces of constant degree in~$\reals^4$,
using a primal-dual approach in which we also swap the roles of points and surfaces, using
the duality described above.

% \boris{Sorry, I ran out of steam here.  Either I am totally confused, or the stuff below makes no sense. Don't $\rho$'s correspond to roots and aren't roots two-dimensional now, so we have to write done the $x$- and $y$-coordinates?  Too tired to go on reading...}
The output of this algorithm is a compact representation 
for the signs of the differences $x(\rho_i(a,b)) - x(\rho_j(a',b'))$ and $y(\rho_i(a,b)) - y(\rho_j(a',b'))$ (where $a, a'\in A_\tau$, 
$b, b'\in B_{\tau'}$, over all pairs of cells $\tau$, $\tau'$, for\footnote{%
  Note that the information collected so far also determines the values $k_{a,b}$ and $k_{a',b'}$.}
$i=1,\ldots,k_{a,b}$, $j=1,\ldots,k_{a',b'}$), given as an edge-disjoint union of 
complete bipartite graphs of the form $\Sigma_{\kappa}\times P_{\kappa}$,
where $\kappa$ is a primal cell, $\Sigma_{\kappa}$ is a set
of surfaces such that either all their (at most three-dimensional) strata miss $\kappa$ or one of
%\esther{Rephrase: `` such that all their strata miss $\kappa$ except one stratum that fully contains $\kappa$.''}
their strata fully contains $\kappa$ (when $\kappa$ is lower-dimensional), 
and $P_{\kappa} \subseteq P$ is the set of points in $\kappa$. 
Symmetric properties apply to graphs constructed at the dual stages.
% $\rho_i(a,b) > \rho_j(a',b')$ (resp., $\rho_i(a,b) = \rho_j(a',b')$, $\rho_i(a,b) < \rho_j(a',b')$).

We show, in the following lemma, that the overall complexity of this representation, measured by the
total size of the \emph{vertex sets} of these graphs, as well as the time
to construct it, are only $O\left((ng)^{8/5+\eps}\right)$, where the $\eps > 0$ 
here is slightly larger than the prescribed $\eps$. 
% Interestingly, as the proof of the lemma shows, this bound also holds in the uniform model. 

%%%%%%%%%%%%%%%%%%%%%%%%
\begin{lemma}
  \label{lem:85}
  One can perform batched point location of the points of $P$ within the arrangement
  $\A(\Sigma)$, and obtain the above complete bipartite graph representation of the output,
  in $O\left((ng)^{8/5+\eps}\right)$ randomized expected time and storage (where the storage
  here is measured by the overall size of the vertex sets of the graphs) in the algebraic
  decision-tree model, for any prescribed $\eps > 0$, where the constant of proportionality 
  depends on $\eps$ and on the degrees of $F$ and $G$.
\end{lemma}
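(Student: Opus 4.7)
The plan is to adapt the primal-dual cutting recursion from the proof of Lemma~\ref{lem:107} to a fully symmetric four-dimensional setting. Set $L \coloneqq n^{1+\eps} g^{1-\eps}$, so that $|P| = O(L)$ and the total number of strata comprising $\Sigma$ is also $O(L)$, since each of the $O(L)$ surfaces $\sigma_{a,a'}$ decomposes, via its CAD, into a constant number of strata. Unlike the situation in Lemma~\ref{lem:107}, the dual of $P \times \Sigma$, obtained by swapping the roles of the pairs $(a,a')$ and $(b,b')$, again consists of $O(L)$ points in $\reals^4$ and $O(L)$ strata of 3-surfaces (the strata of the dual surfaces $\sigma^*_{b,b'}$ from~\eqref{eq:all-sigma-dual}) in $\reals^4$. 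Hence four-dimensional cuttings are used at every recursive step, on both the primal and the dual sides, and the recursion is symmetric under the primal-dual swap.

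A single primal step applies a $(1/r)$-cutting of $\Sigma$ in $\reals^4$, implemented via vertical decomposition as in~\cite{CEGS-91,Koltun-04}, modified, as sketched in Section~\ref{sec:1poly}, to handle strata of all dimensions from $0$ to $3$. This produces $O(r^{4+\eta})$ pseudo-prisms; after further slicing, each prism $\kappa$ contains at most $L/r^4$ points of $P$, is crossed by at most $L/r$ strata of $\Sigma$, and every other stratum either misses $\kappa$ entirely or contains $\kappa$ in one of its own strata. For each $\kappa$ we emit the complete bipartite graphs $\Sigma_\kappa \times P_\kappa$ (three of them, one per possible sign, using Lemma~\ref{inv-sigma-sigmastar} to guarantee that the sign information is uniform over $P_\kappa$ for each surface in $\Sigma_\kappa$), and the crossing strata are handed off to a dual step of exactly analogous form, in which the roles of points and surfaces are swapped. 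After one full primal-dual pair using cutting parameter $r$ on both sides, we obtain $O(r^{2(4+\eta)})$ subproblems, each involving at most $L/r^5$ points and at most $L/r^5$ surfaces.

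Iterating for $j$ primal-dual pairs and choosing $r^{5j} = L$, i.e., $r^j = L^{1/5}$, drives the leaf subproblems down to constant size; each is solved by brute force in $O(1)$ time, producing the trivial bipartite graphs at the leaves. The total number of subproblems, which dominates both the expected running time and the total size of the vertex sets of the output bipartite graphs, is
\[
  O\!\left(c^{j}\bigl(r^{j}\bigr)^{2(4+\eta)}\right)
  = O\!\left(L^{8/5 + 2\eta/5}\right)
  = O\!\left((ng)^{8/5 + \eps'}\right),
\]
for a suitable constant $c$, where $\eps'$ depends on $\eps$ and $\eta$ and, by an appropriate choice of $\eta$, is only slightly larger than the originally prescribed $\eps$. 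In the last step we use $L = n^{1+\eps} g^{1-\eps} \le (ng)(n/g)^{\eps}$ together with $g \le n$ to absorb the original $\eps$ into the exponent of $ng$.

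The main obstacle is the technical verification that the vertical-decomposition cutting machinery of~\cite{CEGS-91,Koltun-04} extends to the families of strata produced by the CADs of the surfaces $\sigma_{a,a'}$ and $\sigma^*_{b,b'}$. Since these strata may have any dimension from $0$ to $3$, and since lower-dimensional prisms of the cutting may be either fully contained in, or only partially overlap with, some strata, the ``crossing'' relation (in the sense of Section~\ref{sec:1poly}) and the recursive hand-off of crossing strata must be set up with care, so that at every recursive step the emitted complete bipartite graphs correctly encode, via Lemma~\ref{inv-sigma-sigmastar}, the signs of all the differences $x(\rho_i(a,b)) - x(\rho_j(a',b'))$ and $y(\rho_i(a,b)) - y(\rho_j(a',b'))$. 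Beyond this, the analysis is a routine rerun of the one in Lemma~\ref{lem:107}, with the only numerical change being the symmetric $r^{4+\eta} \cdot r^{4+\eta}$ (instead of $r_1^{4+\eta} \cdot r_2^{2}$) cost per primal-dual pair, which is what drives the exponent from $10/7$ up to $8/5$.
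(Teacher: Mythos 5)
Your proof is correct and follows essentially the same route as the paper's: the same quantity $m = L = n^{1+\eps}g^{1-\eps}$, the same $(1/r)$-cutting with $O(r^{4+\eta})$ vertical pseudo-prisms, the same further slicing to balance points per prism, the same symmetric primal-dual alternation yielding $O(r^{8+2\eta})$ subproblems each with at most $m/r^5$ points and surfaces, and the same termination and recurrence giving $O(m^{8/5+\eps})$. You simply make the iterated-$j$ calculation explicit where the paper appeals to "simple recurrences," and you correctly identify the symmetric $r^{4+\eta}\cdot r^{4+\eta}$ cost (versus the asymmetric $r_1^{4+\eta}\cdot r_2^2$ in Lemma~\ref{lem:107}) as the source of the $8/5$ exponent.
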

%%%%%%%%%%%%%%%%%%%%%%%%
\begin{proof}
The problem is symmetric in the roles of $(a,a')$ and $(b,b')$,
and therefore has also a symmetric dual version, already discussed above, in which
the relevant pairs $(a,a')$ become points in $\reals^4$, and the relevant pairs
$(b,b')$ become 3-surfaces in $\reals^4$ (that we have already defined earlier).

Put $m \coloneqq n^{1+\eps} g^{1-\eps}$. Recall that 
$\abs{P}$, $\abs{\Sigma} = O(n^{1+\eps} g^{1-\eps}) = O(m)$.
Choose a sufficiently large constant parameter $r > 0$,   
and construct a \emph{$(1/r)$-cutting} for the surfaces in $\Sigma$ 
(which, as we recall, are strata, of dimension at most $3$, that are contained 
in the original surfaces), which is a decomposition of $\reals^4$ into relatively 
open vertical constant-complexity pseudo-prisms (or \emph{prisms}, for short) of 
dimensions~$0,\ldots,4$, each of which is crossed by at most $m/r$~surfaces 
of~$\Sigma$. (For lower-dimensional prisms~$\Delta$, there may be strata that 
fully contain~$\Delta$, and in general we have no control over their number, 
but these surfaces are only used in forming our output graphs, and are not
processed any further---see below.)
% \micha{Mention here partially overlapping surfaces.}
Using standard properties of $(1/r)$-cuttings~\cite{CF-90}, combined with 
the analysis of vertical decomposition in four dimensions, as given 
in~\cite{Koltun-04} (see also~\cite{SA-95}), it follows that such a 
decomposition can be constructed in randomized expected time $O(m \poly(r))$ 
(where $\poly(\cdot)$ denotes a polynomial function), and the overall 
number of prisms of all dimensions is $O(r^{4+\eta})$, for any $\eta > 0$, 
where the constant of proportionality depends on $\eta$ and on the degree of the surfaces in $\Sigma$
(which is determined by the degrees of~$F$ and $G$). Each prism, of any dimension, is crossed
(i.e., intersected by but not contained in) at most $m/r$ strata.
% \micha{This is just to remind us that this is what is missing for lower-dimensional prisms.}
%\micha{Do we need to state the following text?}
%\esther{I would add it.}
We comment that, in the process of constructing the cutting, we take a
random sample of the original surfaces (and not of their strata). We then take each sampled
surface and break it into its strata, and then form the vertical decomposition of the arrangement
of these strata.

For each prism $\kappa$ of the decomposition, let $P_\kappa \subseteq P$ be 
the subset of points of $P$ contained in $\kappa$. If $\abs{P_{\kappa}} > m/r^4$, we
further partition $\kappa$ into subcells, say by slicing it by
hyperplanes orthogonal to the $x_1$-axis, so that each subcell contains at most $m/r^4$ points.
With a slight abuse of notation, we continue to denote these subcells by $\kappa$
and the corresponding subsets by $P_\kappa$. It is easy to verify that the total 
number of these subcells and subsets, over all original prisms, is still
at most $O(r^{4+\eta})$.
% \micha{A minor technical issue: When we slice a lower-dimensional cell, some surfaces
% that previously partially overlapped it may now fully contain the subcell. This is no
% problem, but maybe we should mention it?}

For each (refined) cell $\kappa$ of the decomposition, we pass to the dual, with a set 
$P_\kappa^*$ of at most $m/r^4$ 3-surfaces (that is, we first form the original dual surface,
and then take its strata of dimension at most $3$) and a set $\Sigma_\kappa^*$ of at most $m/r$ points (if we have in the primal one or more strata from the same original surface that cross $\kappa$, the dual point corresponds to this entire original surface).
We apply a similar partitioning to these sets, obtaining $O(r^{4+\eta})$ dual prism cells, each
containing at most $(m/r)/r^4 = m/r^5$ dual points and crossed by at most $(m/r^4)/r = m/r^5$ 
dual surfaces (and prisms can be fully contained in any number of strata). 
Altogether there are $O(r^{8+2\eta})$ subproblems, each involving at most $m/r^5$ 
points and surfaces. We now pass back to the primal, and solve each 
of these subproblems recursively. The recursion terminates at subproblems of size 
(number of points and number of surfaces) smaller than $r$.

At each step of the recursion, whether in the primal or in the dual,
for each cell $\kappa$ and each surface $\sigma_{a,a'}$ that has a 
(possibly full-dimensional) stratum that fully contains $\kappa$,
all the points in $\kappa$ have the same sign with respect to $\sigma_{a,a'}$, 
in the sense that the signs (positive, zero, or negative) of all comparisons 
that involve roots of $\Xi(a,b)$ and of $\Xi(a',b')$, are invariant over all 
points $(b,b')\in\kappa$. We recall however that we do not know a priori 
what are those signs, and different surfaces may have different signs. We use the procedure, 
outlined above, of using one sample point in each cell, to determine all these signs.
% The same sign-invariance property (with the zero sign) holds for lower-dimensional 
% cells $\kappa$ and surfaces $\sigma$ that have a stratum that fully contain $\kappa$; 
% recall that these surfaces are not passed down the recursion at $\kappa$, as every 
% point on $\kappa$ is incident to all these surfaces.

This yields, for each cell $\kappa$, say a primal cell,
a complete bipartite graph~$\Sigma_\kappa\times P_\kappa$,
with the above properties. At the bottom of the recursion, we simply
produce a collection of trivial graphs, by a brute-force enumeration,
each consisting of two vertices and one edge.
The collection of all these graphs, produced at
all nodes of the recursion, both primal and dual, constitutes the output of the algorithm.

This leads to simple recurrences, one for the overall size of
the vertex sets of the graphs, and one for the actual cost of 
the procedure. Both recurrences solve to the same asymptotic bound
$O\left(m^{8/5+\eps}\right) = O\left((ng)^{8/5+\eps}\right)$,
for any $\eps > 0$, which is slightly greater (by a small constant factor)
than our prescribed $\eps$, provided that $\eta$ is chosen sufficiently small. 
That is, the total expected time bound to locate the points of $P$ within the 
arrangement of~$\Sigma$ is~$O\left((ng)^{8/5+\eps}\right)$, for any prescribed $\eps > 0$.
\end{proof}
%%%%%%%%%%%%%%%%%%%%%%%%

To recap, we note, again, that with the output graphs of Lemma~\ref{lem:85}
available, and the signs that they induce, each of the sets 
$\bigcup\{\Xi(a,b) \mid (a,b)\in A_\tau\times B_{\tau'} \}$, 
over all pairs of cells $\tau$, $\tau'$, can be sorted at no additional cost, 
in the algebraic decision-tree model.

\paragraph*{Putting everything together,}
combining the cost of this preprocessing stage with that of the construction of 
the hierarchical partitions for $A$ and $B$, as well as of searching 
with the elements of $C$ in the sorted order obtained (for free) from the
complete bipartite graph representation, we get total expected running time of
${\displaystyle O\left(n\log{n} + (ng)^{8/5+\eps} + \frac{n^{2+\eps}}{g^{1+\eps}} + n^{3/2+\eps}g^{3/2-\eps} \right)}$.
We now choose $g=n^{2/13}$, and obtain expected running time
of $O\left(n^{24/13+\eps}\right)$, where the implied constant of proportionality depends 
on the degrees of $F$ and $G$ and on $\eps$, and the final $\eps$ is a (small) constant multiple 
of the initially prescribed~$\eps$.

\paragraph*{In summary,}
we now state the main results of this section.  Before doing so, for the convenience of the reader,
we aggregate the assumptions made during the analysis into the following list.

\medskip
\noindent\textbf{Assumptions} (for the general case).

\medskip
\noindent\textbf{(i)} 
The polynomials $F$ and $G$ have \emph{good fibers}, in the sense that, for every $c_0\in C$, 
the surface $\pi_{c_0} \coloneqq \{(a,b)\in \reals^4 \mid F(a,b,c_0) = G(a,b,c_0) = 0 \}$
is two-dimensional and has good fibers, meaning that, for each $c_0\in C$,
the system $F(a,b_0,c_0) = G(a,b_0,c_0) = 0$ has only $O(1)$ solutions in $a$ for each $b_0\in B$, 
possibly excluding $O(1)$ exceptional values of $b_0$, and the system 
$F(a_0,b,c_0) = G(a_0,b,c_0) = 0$ has only $O(1)$ solutions in $b$ for each $a_0\in A$, 
again possibly excluding $O(1)$ exceptional values of $a_0$.
As already discussed, this implies that $F$ and $G$ do not have any nontrivial common factor.

\medskip
\noindent\textbf{(ii)} 
We assume that the singular (three-dimensional) locus $\Omega$, of points $(a,b)$ at which
the system $F(a,b,c) = G(a,b,c) = 0$ has either a double root in $c$, or a root that lies 
at a singular point of one of the curves, or infinitely many roots, or two co-vertical roots, 
has good fibers, in the sense that for each $a\in A$,
with the possible exception of $O(1)$ values, the set 
$\Omega_{(a,\cdot)} \coloneqq \{b\in\reals^2 \mid (a,b)\in\Omega\}$ is 
a one-dimensional curve or a discrete set, and for each $b\in B$, 
with the possible exception of $O(1)$ values, the set 
$\Omega_{(\cdot,b)} \coloneqq \{a\in\reals^2 \mid (a,b)\in\Omega\}$ is 
a one-dimensional curve or a discrete set.

\medskip
\noindent\textbf{(iii)} 
For each $c\in C$, the two-dimensional surface $\pi_c$ meets $\Omega^\infty$ %$\Omega$
in a one-dimensional curve or a discrete set of points.
(This allows us to handle efficiently the problem for $(a,b)\in \Omega^\infty$, as described above.)

\medskip
\noindent\textbf{Remark.}
Assumption~(i) is essential for our analysis. In particular, if $\pi_{c_0}$ is three-dimensional, 
for infinitely many values of $c_0$ (as already discussed, if we have only $O(1)$ such values, 
we could test each of them in turn, by a cheaper procedure), we could face an instance of the single vanishing polynomial 
problem, where $A$, $B$ are arbitrary two-dimensional sets, which would then result in
an instance of the problem in either $2{\times}2{\times}1$ or $2{\times}2{\times}2$ dimensions,
problems that we currently are unable to solve in subquadratic time. (To see such an example,
take $G(a,b,c ) = 1 - c_1^2 - c_2^2$, and place all the points of $C$ on the circle
$c_1^2 + c_2^2 = 1$.)
Assumptions (ii) and (iii) are mainly technical, and serve the purpose of resolving the scenario 
involving pairs $(a,b)\in \Omega$ (and, in particular pairs $(a,b)\in \Omega^\infty$).
We do not know how to handle in general such points, in subquadratic time, when these assumptions do not hold.

We then obtain the following main results of this section. 
%%%%%%%%%%%%%%%%%%%%%%%%%%%5
\begin{theorem}
  \label{thm:3POL}
  Let $A$, $B$, $C$ be three $n$-point sets in the plane, and let $F$, $G$ be 
  a pair of real constant-degree 6-variate polynomials that satisfy assumptions (i)--(iii)
  made above. Then one can test, in the algebraic decision-tree model, whether there exists 
  a triple $(a,b,c)\in A\times B\times C$, such that $F(a,b,c) = G(a,b,c) = 0$,
  using only $O\left(n^{24/13 + \eps}\right)$ polynomial sign tests (in expectation), for any $\eps > 0$.
\end{theorem}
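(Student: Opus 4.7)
The plan is to combine a hierarchical polynomial partition of $A\times B$ with a Fredman-style batched point location that sorts the $c$-roots of the system $F=G=0$ block by block, and then to search the resulting structure with each $c\in C$. First I would invoke Theorem~\ref{thm:hier_partition_CP}(i) on $A$ and $B$ with parameter $r=g$ (to be chosen at the end), producing $O((n/g)^{2+\eps})$ bottom-level cells $\tau\times\tau'$ with at most $g^2$ associated pairs. For each pair $(a,b)$ outside the exceptional locus $\Omega^{\infty}$, the system has $O(1)$ roots $\Xi(a,b)$ in $c$; the goal is to produce, cell-pair by cell-pair, a lexicographically sorted list $\Lambda_{\tau,\tau'}$ of all these roots \emph{without} evaluating them explicitly, which would cost $\Theta(n^2)$.

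Second, to carry out these comparisons in the algebraic decision-tree model, I would apply Fredman's trick in a four-dimensional setting. The lexicographic order of two roots $\rho_i(a,b)$ and $\rho_j(a',b')$ can change only when $(b,b')$ crosses one of the surfaces $\sigma^x_{a,a'}$ or $\sigma^y_{a,a'}$ of~\eqref{eq:all-sigma}, augmented by the lifted criticality surfaces $\Omega_{(a,\cdot)}\times\reals^2$ and $\reals^2\times\Omega_{(a',\cdot)}$; united, these form $\sigma_{a,a'}$. Decomposing each $\sigma_{a,a'}$ via its CAD yields strata on which the combinatorial structure of $\Xi(a,b)$, $\Xi(a',b')$, and their merge is invariant, as recorded in Lemma~\ref{inv-sigma-sigmastar}. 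I then perform batched point location on the pairs in $\bigcup_{\tau'}(B_{\tau'}\times B_{\tau'})$ against the arrangement $\A(\Sigma)$, alternating primal and dual $(1/r)$-cuttings in $\reals^4$ exactly as in Lemma~\ref{lem:85}, producing an edge-disjoint union of complete bipartite graphs $\Sigma_\kappa\times P_\kappa$ of total vertex-set size $O((ng)^{8/5+\eps})$. For each such graph I pick one representative $(b_\kappa,b'_\kappa)\in\kappa$ and compute the $O(1)$ signs once by brute force; by invariance these signs hold for every point in $P_\kappa$ against every surface in $\Sigma_\kappa$. The dual phase is fully symmetric.

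Third, I would search with every $c\in C$. Assumption~(i) guarantees that the fiber surface $\pi_c$ has good fibers, so Theorem~\ref{thm:hier_partition_CP}(ii) gives $O((n/g)^{1+\eps})$ bottom-level cells $\tau\times\tau'$ reached by $\pi_c$, and a binary search in each sorted $\Lambda_{\tau,\tau'}$ costs $O(\log g)$. For pairs $(a,b)\in\Omega^{\infty}$ (whose roots are not discrete), I would separately enumerate the cells crossed by the one-dimensional curve $\pi_c\cap\Omega^{\infty}$, which, thanks to assumption~(iii) and Theorem~\ref{thm:hier_partition_CP}(iii), number only $O((n/g)^{1/2+\eps})$, and brute-force over the $O(g^2)$ pairs in each, contributing $O(n^{3/2+\eps}g^{3/2-\eps})$ in total. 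Combining everything yields
\[
O\!\left(n\log n + (ng)^{8/5+\eps} + \frac{n^{2+\eps}}{g^{1+\eps}} + n^{3/2+\eps}g^{3/2-\eps}\right),
\]
and setting $g=n^{2/13}$ balances the dominant second and third terms at $O(n^{24/13+\eps})$, with the fourth term subsumed.

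The main obstacle, and the reason assumptions~(ii)--(iii) are made, is securing \emph{uniform} combinatorial invariance of the root sequences across every output bipartite graph, especially when the cutting cell $\kappa$ is lower-dimensional and can lie \emph{inside} strata of several surfaces simultaneously. This is precisely what forces $\sigma_{a,a'}$ to include the lifted $\Omega$-surfaces and what makes the CAD (not merely the zero set) the correct decomposition: any change of combinatorial type—tangency, singular incidence, merging of roots, vertical coincidence, or blow-up into $\Omega^{\infty}$—must register as a boundary crossing of some stratum in $\sigma_{a,a'}$, so that only surfaces genuinely missing or fully containing $\kappa$ remain in $\Sigma_\kappa$. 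Assumption~(ii) ensures $\Omega$ itself is sliced cheaply via good fibers, and assumption~(iii) keeps the $\Omega^{\infty}$-contribution within budget; without them the exceptional-pair bookkeeping would blow up past the target exponent.
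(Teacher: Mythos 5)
Your proposal follows the paper's own proof essentially step for step: the hierarchical partition of $A\times B$ via Theorem~\ref{thm:hier_partition_CP}(i) with $r=g$, the CAD-based stratification of the augmented surfaces $\sigma_{a,a'}$ and the invariance statement of Lemma~\ref{inv-sigma-sigmastar}, the primal--dual batched point location of Lemma~\ref{lem:85} yielding $O((ng)^{8/5+\eps})$, the search with each $c\in C$ over the $O((n/g)^{1+\eps})$ cells reached by $\pi_c$, the separate $\Omega^{\infty}$-handling via Theorem~\ref{thm:hier_partition_CP}(iii) contributing $O(n^{3/2+\eps}g^{3/2-\eps})$, and the final balance at $g=n^{2/13}$. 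The reasoning about why assumptions (i)--(iii) are needed also mirrors the paper's discussion, so this is a correct proof taking the same route.
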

%%%%%%%%%%%%%%%%%%%%%%%%%%%5

% \noindent{\bf Remark.}
% All steps of the algorithm extend to the uniform model, at the same asymptotic
% cost, except for the actual sorting of the roots $\rho_i(a,b)$.

%%%%%%%%%%%%%%%%%%%%%%%%%
\begin{corollary}
  \label{cor:complex}
  Let $A$, $B$, $C$ be three sets, each of $n$ complex numbers,
  and let $H$ be a constant-degree bivariate polynomial defined over 
  the complex numbers, so that the real and the imaginary parts of 
  $H$ are a pair of real constant-degree 6-variate polynomials that
  satisfy assumptions (i)--(iii) made above.
  Then one can determine, in the algebraic decision-tree model, whether
  there exists a triple $(a,b,c)\in A\times B\times C$ such that $H(a,b,c) = 0$,
  with only $O\left(n^{24/13 + \eps}\right)$ real-polynomial sign 
  tests, in expectation, for any $\eps > 0$.
\end{corollary}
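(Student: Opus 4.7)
The plan is to reduce the corollary directly to Theorem~\ref{thm:3POL} via the standard identification of $\cplx$ with $\reals^2$. First, I would identify each complex number $z \in \cplx$ with the real point $(\operatorname{Re}(z), \operatorname{Im}(z)) \in \reals^2$, so that the three sets $A$, $B$, $C$ of complex numbers become $n$-point sets in the real plane, which is exactly the input format expected by Theorem~\ref{thm:3POL}.

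Next, I would split the complex equation $H(a,b,c) = 0$ into its real and imaginary parts. Setting $F(a,b,c) \coloneqq \operatorname{Re}(H(a,b,c))$ and $G(a,b,c) \coloneqq \operatorname{Im}(H(a,b,c))$, and regarding each as a function of the six real coordinates $(a_1,a_2,b_1,b_2,c_1,c_2)$, we obtain a pair of real $6$-variate polynomials whose degrees are bounded by that of $H$, and hence are constant. A triple $(a,b,c)$ satisfies $H(a,b,c)=0$ in $\cplx$ if and only if $F(a,b,c) = G(a,b,c) = 0$ as real equations, so the complex single-equation problem is converted into precisely the real polynomial pair vanishing problem analyzed in Theorem~\ref{thm:3POL}.

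With this reduction in hand, the corollary follows immediately by invoking Theorem~\ref{thm:3POL}, since the hypothesis of the corollary is exactly that the resulting $F$ and $G$ satisfy assumptions (i)--(iii). The bound of $O(n^{24/13+\eps})$ real-polynomial sign tests carries over verbatim, with the constant of proportionality depending on $\eps$ and on the degree of $H$ (which controls the degrees of $F$ and $G$). There is essentially no obstacle here, since the heavy lifting has already been done in the proof of Theorem~\ref{thm:3POL}; the only aspect that requires some care is a conceptual one, namely that verifying assumptions (i)--(iii) for concrete complex instances (such as collinearity testing along algebraic curves) is a problem-specific task, but this verification is outside the scope of the corollary itself, which takes those assumptions as given.
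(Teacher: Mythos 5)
Your proposal is correct and matches the paper's intended argument exactly: the paper states this corollary without a separate proof, relying on the standard identification $\cplx\cong\reals^2$ and the splitting of the complex equation $H=0$ into its real and imaginary parts $F=G=0$ (as already flagged in a footnote in the introduction), after which Theorem~\ref{thm:3POL} applies directly. Nothing is missing; the hypothesis of the corollary is precisely that the resulting $F,G$ satisfy assumptions~(i)--(iii), so no further verification is needed here.
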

%%%%%%%%%%%%%%%%%%%%%%%%%

\medskip
\noindent\textbf{Assumptions} (for complex collinearity testing).

We have made several assumptions for this special case, but most of them do not have to be
required a priori, as they can be easily and efficiently tested for, and in case they are violated
the problem can be solved efficiently. These were the assumptions that 
(i)~$A$, $B$ and $C$ are pairwise disjoint,
(ii)~if any of $\gamma_A$, $\gamma_B$, $\gamma_C$ is a line then it is different from the other two curves,
and (iii)~if any of $\gamma_A$, $\gamma_B$, $\gamma_C$ is a line then it does not contain any point from
the other two sets. 

It is easy (and efficient) to test whether these assumptions hold.
If any of them does not hold, it is trivial to test whether a collinearity exists: 
Indeed, for Assumption (i), any coinciding pair forms a collinear
triple with any point of the third set. 
For (iii), it is easy, and efficient, to collect the constant number of exceptional points,
and to check whether they are involved in any collinearity, and to remove them from their 
sets if no such a collinearity is found.
Finally, for~(ii), assuming (i)~and~(iii), 
if any pair of the curves are coinciding lines, collinearity can occur only if this
common line intersects the third set, which is ruled out by~(iii).

We recall that these assumptions imply the good fibers property for complex collinearity testing.

%%%%%%%%%%%%%%%%%%%%%%%%%%%%%%%%%%%%%
\begin{corollary}
  \label{cor:colcomp}
  Let $A$, $B$, $C$ be $n$-point sets in the complex $zw$-plane, so that $A$ 
  (resp.,~$B$,~$C$) lies on a curve $\gamma_A$ (resp., $\gamma_B$, $\gamma_C$)
  represented by parametric equations of the form 
  $(z,w) = (f_A(t),g_A(t))$ (resp., $(z,w) = (f_B(t),g_B(t))$, $(z,w) = (f_C(t),g_C(t))$),
  where $f_A$, $g_A$, $f_B$, $g_B$, $f_C$, $g_C$ are constant-degree univariate
  complex polynomials. 
  Then one can determine, in the algebraic decision-tree model, 
  whether there exists a collinear triple $(a,b,c)\in A\times B\times C$,
  with $O\left(n^{24/13 + \eps}\right)$ real polynomial sign tests, in expectation, for any $\eps > 0$.
\end{corollary}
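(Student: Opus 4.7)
The plan is to reduce complex collinearity to an instance of the pair-of-polynomials vanishing problem of Corollary~\ref{cor:complex}. Represent each point $a \in A$ by its complex parameter $t_a$, viewed as a pair of reals in~$\reals^2$, and similarly for the points of $B$ and $C$. The complex collinearity determinant $H(t_a, t_b, t_c)$ of equation~\eqref{eq:colcomp} is a constant-degree polynomial in three complex variables, whose real and imaginary parts form two real constant-degree $6$-variate polynomials $F$ and $G$. A triple $(a,b,c)$ is complex-collinear if and only if $F(t_a,t_b,t_c) = G(t_a,t_b,t_c) = 0$. Hence, once assumptions (i)--(iii) of Section~\ref{sec:fg} are verified for the pair $(F,G)$, Corollary~\ref{cor:complex} delivers the claimed bound $O(n^{24/13+\eps})$.

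Before invoking the main machinery, I would dispose in $O(n \log n)$ time of the easy special cases sketched in the parenthetical remarks throughout Section~\ref{sec:fg}. First, test pairwise disjointness of $A, B, C$; any coincident pair forms a collinear triple with any point of the third set. Second, detect in constant time whether any two of $\gamma_A, \gamma_B, \gamma_C$ are coinciding lines, in which case all possible collinear triples lie on that common line and can be found in linear time by intersecting it with the third set. Third, for each $\gamma_X$ that happens to be a line, remove from the other two sets any point lying on $\gamma_X$, after checking (by an angular sort around each such point, in $O(n \log n)$ total time since there are only $O(1)$ such points) whether it already participates in a collinear triple. These reductions guarantee that, whenever any of the three curves is a line, it is distinct from the other two curves and contains no points from the other two sets.

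The core technical step is verifying the good fibers assumptions for the reduced instance. For assumption~(i), fix $c_0 \in C$: the surface $\pi_{c_0}$ consists of the pairs $(a,b) \in \gamma_A \times \gamma_B$ collinear with $c_0$. It has real dimension two, because no two of the curves coincide as lines through $c_0$; and its fiber over each non-exceptional $b_0$ has size $O(1)$ by B\'ezout's theorem (as the complex line through $c_0$ and $b_0$ meets $\gamma_A$ in $O(1)$ points, the sole exception being when $\gamma_A$ is a line through both, which is excluded by our reductions), with the symmetric bound over each non-exceptional $a_0$ holding for the same reason. Assumption~(ii) concerns the locus $\Omega$ of pairs $(a,b)$ yielding a degenerate $c$-fiber of the system $F=G=0$; geometrically these are the pairs for which the complex line through $a$ and $b$ is tangent to $\gamma_C$, passes through a singular point of $\gamma_C$, or coincides with $\gamma_C$ (the latter excluded by our reductions when $\gamma_C$ is a line, and impossible otherwise because a line cannot overlap a higher-degree irreducible curve). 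The good fibers property for $\Omega$ then follows from the observation that for fixed $a \in A$ (resp.\ $b \in B$), only $O(1)$ tangent or singular lines through $a$ (resp.\ $b$) relative to $\gamma_C$ exist. Finally, assumption~(iii), that $\pi_c$ meets $\Omega^\infty$ in at most a one-dimensional set, holds trivially since $\Omega^\infty$ is empty once overlapping lines have been excluded.

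The main obstacle I anticipate is a careful case analysis in verifying the good fibers conditions across the possible geometric types of $\gamma_A, \gamma_B, \gamma_C$ (lines, irreducible higher-degree curves, reducible curves handled componentwise, curves with singular points, and so on), ensuring that the preliminary reductions really cover every degenerate scenario that could break the B\'ezout-type fiber counts. Once this verification is complete, the bound $O(n^{24/13+\eps})$ follows directly from Corollary~\ref{cor:complex}.
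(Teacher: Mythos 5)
Your proposal is correct and follows essentially the same route as the paper: dispose of the degenerate configurations (coinciding points, coinciding lines, points of one set lying on a line containing another set) in $O(n\log n)$ time, then verify assumptions (i)--(iii) of Section~\ref{sec:fg} by the same geometric arguments (B\'ezout for the fiber counts, tangency/singularity/overlap for $\Omega$, emptiness of $\Omega^\infty$), and finally invoke Corollary~\ref{cor:complex}. The paper's own proof consists precisely of these preliminary reductions followed by the parenthetical good-fibers verifications scattered through Section~\ref{sec:fg}, summarized just before the corollary.
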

%%%%%%%%%%%%%%%%%%%%%%%%%%%%%%%%%%%%%

\paragraph*{Discussion and further comments.}

It is now time to see why the approach presented in this section %~\ref{sec:fg} %
fails when each of $A$, $B$, $C$ is two-dimensional,
and we have a single polynomial equation (as in collinearity testing in the real plane). 
We follow here the notation from the present section.
If we only enforce the condition $F(a,b,c) = 0$ (in which case the surface $\pi_{c}$ 
is three-dimensional), the efficiency of the method deteriorates:
The number of cells in the hierarchical partition crossed by $\pi_{c}$ at a single level,
under the most favorable assumptions (in particular, having good fibers) would be
$O(D^3)$, leading to a bound of $O((n/g)^{3/2+\eps})$ on the total number of cells met 
by $\pi_{c}$. Then the cost of the search with the elements of $C$ would now be 
(again, suitably modifying $\eps$)
\[
O\left(n\left(n/g\right)^{3/2+\eps} \right)
= O\left(n^{5/2+\eps} / g^{3/2+\eps}\right) .
\]
Balancing between this cost and the cost of the other point-location step, 
which is close to~$O((ng)^{8/5})$, would require choosing $g \approx n^{9/31}$,
and the overall cost would become roughly~$O\left(n^{64/31}\right)$, which is superquadratic.
This explains why collinearity testing has to be restricted to the case of
$2{\times}1{\times}1$ dimensions (as in Section~\ref{sec:col}).
Even the case of $2{\times}2{\times}1$ dimensions
yields a superquadratic solution in our approach, as can be similarly checked.

%%%%%%%%%%%%%%%%%%%%%%%%%%%%%%%%%%%%%
\section{Collinearity in Higher Dimensions: The $(d \times (d-1) \times(d-1))$ Case}
\label{sec:higher_dim}

Let $A$, $B$ and $C$ be three sets of $n$ points each, so that $A$ is a 
set of points in $\reals^d$ and each of $B$ and $C$ lies in a \emph{hyperplane}.
The goal is to test, in the algebraic decision-tree model, whether 
$A\times B\times C$ contains a collinear triple. Our approach is to use 
a recursive chain of projections, which ultimately map the points in 
$A$, $B$, and $C$ to some plane, so that each of $B$ and $C$ is mapped to 
a set of points on some respective line, collinearity is preserved, and no new collinearity appears
among the projected points. This is a variant of a projection technique described by 
De~Zeeuw~\cite{deZ-18}.\footnote{%
  We are indebted to Adam Sheffer and Frank de Zeeuw for suggesting this approach.}

We denote by $h_1$, $h_2$ the respective hyperplanes containing $B$ and $C$.
In what follows we assume that (a)~$h_1 \neq h_2$ and (b)~$A \subset \reals^d \setminus (h_1 \cup h_2)$.
The reasons for these assumptions are:

\medskip
\noindent
(a)~We must assume $h_1\neq h_2$, for otherwise collinearities could only involve points 
in $h_1$ and the entire problem would be equivalent to testing $(A\cap h_1)\cup B\cup C$
for collinear triples, which is the unrestricted $((d-1)\times(d-1)\times(d-1))$ version 
of collinearity testing, which we do not know how to solve in subquadratic time, even for $d-1=2$.

\medskip
\noindent
(b)~If we allow any points of $A$ to lie in, say, $h_1$, then testing collinearities 
with such points would be equivalent to testing collinearities among triples in 
$(A\cap h_1)\times B \times (C \cap h_1)$, which is a $((d-1)\times(d-1)\times(d-2))$-dimensional 
variant of the collinearity testing problem (assuming $h_1\ne h_2$),
again an instance that we do not know how to solve in subquadratic time.

Accepting these assumptions, we now show that this setting can 
be reduced to the case of collinearity testing in $2{\times}1{\times}$ dimensions, and we can therefore 
attain the bound in Theorem~\ref{thm:coll}, which does not depend on $d$ 
(except for the constant of proportionality).

We may now also assume that no points of $B \cup C$ lie on the $(d-2)$-flat $h_1\cap h_2$.
Indeed, if $h_1 \cap h_2$ contained, say, points from $B$, then any trichromatic collinearity
involving such a point would have to be contained in $h_2$, but $A\cap h_2 = \emptyset$, by assumption,
so no such triples exist, and we may simply delete all points of $(B\cup C)\cap h_1\cap h_2$.

%%%%%%%%%%%%%%%%%%%%
\begin{lemma}
  \label{lem:projection}
  Let $A$, $B$ and $C$ be three sets of $n$ points each, so that $A$ is a
  set of points in $\reals^d$, for some $d\ge 3$, $B$ lies in a hyperplane $h_1$, 
  and $C$ lies in a different hyperplane $h_2$. Assume that 
  $A \subset \reals^d \setminus (h_1 \cup h_2)$ and that $(B \cup C)\cap h_1 \cap h_2 = \emptyset$.
  Then we can project $A$, $B$, and $C$ to some random hyperplane, so that, with probability $1$,
  (i) this transformation is bijective on $A\cup B\cup C$ and preserves collinearity,
  and (ii) each of the images of $B$ and $C$ is contained in a different $(d-2)$-flat,
  and the image of $A$ lies in the complement of the union of these flats.
\end{lemma}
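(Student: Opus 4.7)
The plan is to use a central projection from a point $p$ chosen uniformly at random in $\ell \coloneqq h_1 \cap h_2$, onto a fixed hyperplane $\pi \subset \reals^d$ chosen in advance so that $p \notin \pi$ and $\ell \not\subset \pi$. By the assumption $h_1 \ne h_2$, the flat $\ell$ has dimension $d-2 \ge 1$. (If $h_1 \parallel h_2$, then $\ell = \emptyset$; the analogous construction is parallel projection along a random direction in the common direction space of $h_1$ and $h_2$, and the arguments below adapt with only routine modifications, so we describe only the non-parallel case.) Set $\phi(q) \coloneqq \overline{pq} \cap \pi$ for every $q \in \reals^d \setminus \{p\}$.

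The two containment claims in~(ii) are immediate. For $b \in B \subset h_1$, both $p$ and $b$ lie in $h_1$, so the line $\overline{pb}$ is contained in $h_1$, whence $\phi(b) \in h_1 \cap \pi$, a $(d-2)$-flat in $\pi$; analogously $\phi(C) \subset h_2 \cap \pi$, and the two flats $h_1 \cap \pi$, $h_2 \cap \pi$ are distinct because $h_1 \ne h_2$ and $\pi \not\supset \ell$. For $a \in A$, since $a \notin h_1$ and $p \in h_1$, the line $\overline{pa}$ meets $h_1$ only at $p \notin \pi$, so $\phi(a) \notin h_1 \cap \pi$; symmetrically $\phi(a) \notin h_2 \cap \pi$.

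The nontrivial work is to establish~(I)~injectivity of $\phi$ on $A \cup B \cup C$ and~(II)~preservation of trichromatic collinearity, namely that for each $(a,b,c) \in A \times B \times C$, the three points are collinear in $\reals^d$ iff their $\phi$-images are collinear in $\pi$. For~(I), $\phi(q_1) = \phi(q_2)$ with $q_1 \ne q_2$ holds precisely when $p \in \overline{q_1q_2}$; by the hypotheses $A \cap (h_1 \cup h_2) = \emptyset$ and $(B \cup C) \cap \ell = \emptyset$, neither endpoint lies in $\ell$, so $\overline{q_1q_2}$ is not contained in $\ell$ and meets it in at most one point. Hence the bad set on $\ell$ has size $O(n^2)$, which a random $p$ avoids with probability~$1$. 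The forward direction of~(II) is standard: central projection sends each line not through $p$ to a line, and the $O(n^3)$ lines through pairs of input points are avoided on the same measure-zero grounds. For the reverse direction, if $a, b, c$ are not collinear yet $\phi(a), \phi(b), \phi(c)$ are, then $p$ must lie in the affine $2$-flat $\Pi \coloneqq \operatorname{aff}(a, b, c)$; the bad locus on $\ell$ is then the union, over all $O(n^3)$ non-collinear trichromatic triples, of the intersections $\Pi \cap \ell$.

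The main obstacle will be showing that, for every such triple, $\Pi \cap \ell$ is a \emph{proper} subflat of~$\ell$, so that the bad locus has measure zero. The plan is a short dimension count, ruling out both $\Pi \supseteq \ell$ and $\Pi \subseteq \ell$. If $\Pi \supseteq \ell$, then $\dim(\Pi \cap h_1) \ge \dim \ell = d-2$ while $\dim(\Pi \cap h_1) \le \dim \Pi = 2$. For $d \ge 4$ this forces $\Pi \subseteq h_1$, contradicting $a \in \Pi \setminus h_1$; for $d = 3$, either $\Pi = h_1$, which again contradicts $a \notin h_1$, or $\dim(\Pi \cap h_1) = 1$ and hence $\Pi \cap h_1 = \ell$, so $b \in \Pi \cap h_1 = \ell$, contradicting $B \cap \ell = \emptyset$. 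Symmetrically, $\Pi \subseteq \ell$ implies $a \in \Pi \subseteq \ell \subset h_1$, again contradicting $A \cap h_1 = \emptyset$. Thus each $\Pi \cap \ell$ is a proper subflat of~$\ell$, their union over the $O(n^3)$ non-collinear trichromatic triples has measure zero in $\ell$, and a random $p$ avoids it with probability~$1$, completing the argument.
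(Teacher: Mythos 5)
Your proof is essentially the paper's own construction — a central projection from a random point $p$ on $\ell\coloneqq h_1\cap h_2$ (or parallel projection along a random common direction in the parallel case), with the paper's $q$ playing the role of your $p$ — and your forward/backward collinearity arguments match the paper's in spirit. You add value by making explicit, via a clean dimension count, the step the paper compresses into ``which can happen with probability $0$'': namely that for each non-collinear trichromatic triple, $\operatorname{aff}(a,b,c)\cap\ell$ is a proper subflat of $\ell$. That count is correct and is a genuine improvement in rigor over the paper's terse treatment.

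One place where you are slightly less careful than the paper: you fix the target hyperplane $\pi$ in advance, whereas the paper also randomizes the target hyperplane~$H$. Your stated condition on $\pi$ (``$\ell\not\subset\pi$'') is not quite enough to guarantee that $\phi$ is everywhere defined on the input: for a fixed input point $q$, the set $\{p\in\ell:\overline{pq}\parallel\pi\}=\ell\cap(q+\operatorname{dir}(\pi))$ is either a proper subflat of $\ell$ or all of $\ell$, and the latter can occur when $\operatorname{dir}(\ell)\subset\operatorname{dir}(\pi)$, which ``$\ell\not\subset\pi$'' does not rule out. You should instead require $\pi$ transversal to $\ell$ (i.e., $\operatorname{dir}(\ell)\not\subset\operatorname{dir}(\pi)$), and also $\pi\neq h_1,h_2$ so that $h_1\cap\pi$, $h_2\cap\pi$ really are $(d-2)$-flats. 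With that small strengthening of the genericity assumption on $\pi$, the argument goes through; alternatively, randomizing $\pi$ as the paper does makes all these conditions hold almost surely without having to enumerate them.
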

%%%%%%%%%%%%%%%%%%%%

\begin{proof}
%\noindent{\bf Proof.}
We construct a generic random hyperplane $H$ (say, by picking each of its coefficients 
independently at random from $[0,1]$), and project the points
of $A$, $B$, and $C$ onto~$H$, using the following method.
First suppose that $h_1$, $h_2$ are not parallel, so they
intersect in a $(d-2)$-flat $\pi$. Choose a random point $q$ on $\pi$ 
(which, with probability $1$, will not be contained in $H$).
Project each point $p \in A \cup B \cup C$ onto $H$ by mapping $p$ to the 
intersection point of $H$ and the line $pq$; each of these intersection 
points is unique with probability $1$. Indeed, since $H$ is a random hyperplane, 
it is almost surely not parallel to either $h_1$ or $h_2$, and therefore $pq$ 
must meet $H$ at a unique point, as claimed, implying that this 
projection is well-defined. Let $A^{*}$, $B^{*}$, and $C^{*}$ be the 
images of $A$, $B$, and~$C$ on~$H$, respectively. 

Since $q$ is a random point in~$\pi$, each point  
$p \in A \cup B \cup C$ is mapped almost surely to a distinct point 
on $H$, and therefore this projection is a bijection on $A\cup B\cup C$.
It is easy to verify that this projection almost surely preserves 
collinearity, that is, a triple in the original setting is collinear 
if and only if it is mapped to a collinear triple in $H$. Indeed, the
``only~if'' part is obvious. For the ``if'' part, assume to the contrary
that $(a,b,c)\in A\times B\times C$ is a non-collinear triple that
is mapped to a collinear triple $(a^*,b^*,c^*)$. But then $q$, $a^*$,
$b^*$ and~$c^*$ are all coplanar, lying in a unique common plane, which means that
$q$, $a$, $b$ and $c$, all lying in that plane, are also coplanar, which can happen with
probability $0$. This establishes property~(i).

We next prove property (ii), that is, with probability~1, all the points 
of~$B^{*}$ (resp., of~$C^{*}$) lie on a $(d-2)$-flat in $H$, and these flats 
are distinct, and no point of $A$ is projected to any of these flats.
Consider the case of $B^*$. Since $\pi$ is contained 
in $h_1$, the image of $B$ is contained in $h_1\cap H$, which
is almost surely a $(d-2)$-flat. The same argument holds for $C^*$.
Since we assumed $A \subset \reals^d \setminus (h_1 \cup h_2)$, it is easy to
verify that almost surely the image of $A$ is contained in the complement of
the union of the flats $h_1\cap H$, $h_2\cap H$.

The case where $h_1$ and $h_2$ are parallel can be handled in much 
the same way, taking $\pi$ to be the (projective) $(d-2)$-flat at infinity parallel to 
$h_1$ and $h_2$. Effectively, this means that we choose a random direction
parallel to both $h_1$ and $h_2$, and project the
points of $B$ (resp., $C$) in this direction, within $h_1$ (resp.,
$h_2$) onto the flat $H\cap h_1$ (resp., $H\cap h_2$). It is easily
checked that all the properties hold with probability~1 in this case as well.
% $\Box$
\end{proof}

Lemma~\ref{lem:projection} suggests a recursive randomized procedure to 
reduce the $d{\times}(d-1){\times}(d-1)$~case to the $2{\times}1{\times}1$~case. 
Given a $d{\times}(d-1){\times}(d-1)$-dimensional instance, we apply 
Lemma~\ref{lem:projection} $d-2$ times, reducing the dimension
by one at each step, until we reach the planar setup and invoke
Theorem~\ref{thm:coll}, and thus obtain the following result:

%%%%%%%%%%%%%%%%%%%%%
\begin{theorem}
  \label{cor:proj}
  Let $A$, $B$ and $C$ be three sets of $n$ points each, where $A$ is a set of points in~$\reals^d$
  and each of $B$, $C$ lies in a distinct respective hyperplane $h_1$, $h_2$.
  Assume that $h_1 \neq h_2$, that $A \subset \reals^d \setminus (h_1 \cup h_2)$,
  and that $(B \cup C)\cap h_1 \cap h_2 = \emptyset$. 
  Then one can test whether $A\times B\times C$ contains a collinear triple, in
  the algebraic decision-tree model, by a randomized algorithm that succeeds with probability $1$,
  and uses only $O\left(n^{28/15+\eps}\right)$ polynomial sign tests, for any~$\eps>0$,
  where the constant of proportionality depends on $\eps$ and on $d$.\footnote{%
    The dependence on $d$ appears only in the part of the recursive projections, whose overall running time is linear in $n$.}
\end{theorem}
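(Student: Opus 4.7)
The plan is to prove Theorem~\ref{cor:proj} by iterating Lemma~\ref{lem:projection} exactly $d-2$ times to reduce the $d{\times}(d-1){\times}(d-1)$ setting to the planar $2{\times}1{\times}1$ setting, and then invoking Theorem~\ref{thm:coll}. At each stage we maintain three invariants on the current triple $(A,B,C)$ in the current ambient space $\reals^{d'}$: the two flats $h_1, h_2$ containing $B,C$ are distinct hyperplanes of the ambient space, $A$ lies in the complement of $h_1 \cup h_2$, and $(B \cup C) \cap h_1 \cap h_2 = \emptyset$. These are precisely the hypotheses of Lemma~\ref{lem:projection}, so one invocation yields sets $A^*, B^*, C^*$ on a random hyperplane $H \cong \reals^{d'-1}$, such that $A^* \subseteq H \setminus ((h_1\cap H) \cup (h_2 \cap H))$, and each of $B^*$, $C^*$ lies on a distinct $(d'-2)$-flat of $H$.

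The first step is to verify that the three invariants above propagate to the next round. Distinctness of the new hyperplanes $h_1 \cap H$ and $h_2 \cap H$ of $H$ holds almost surely because $h_1 \ne h_2$ and $H$ is generic; the inclusion $A^* \subseteq H \setminus ((h_1 \cup h_2) \cap H)$ is given directly by Lemma~\ref{lem:projection}(ii); and, since the projection is a bijection on $A\cup B\cup C$ that preserves incidence with $h_1$ and $h_2$, the condition $(B \cup C) \cap h_1 \cap h_2 = \emptyset$ transfers to $(B^* \cup C^*) \cap (h_1 \cap H) \cap (h_2 \cap H) = \emptyset$. Collinearity is preserved in both directions, by Lemma~\ref{lem:projection}(i), so no spurious or missing collinear triples arise during the reduction.

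Iterating this $d-2$ times leaves us in $\reals^2$ with $B$ on a line $\ell_1$, $C$ on a distinct line $\ell_2$, and $A$ disjoint from $\ell_1 \cup \ell_2$, which is an instance of the $2{\times}1{\times}1$ setting. We then apply Theorem~\ref{thm:coll}, which decides the existence of a collinear triple in $A \times B \times C$ with $O(n^{28/15+\eps})$ polynomial sign tests. Each projection stage can be executed in $O(n)$ time and uses no comparisons that depend on the input (the random hyperplane and the random point in $h_1 \cap h_2$ are chosen independently), so the total cost of the $d-2$ projection phases is $O(dn)$ and is absorbed into the final bound, with the constant depending on $d$ only through the additive projection cost.

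The only mildly delicate point, and the one I would expect to be the main obstacle to write cleanly, is to argue that each of the ``probability $1$'' events in the chain of $d-2$ projections actually holds simultaneously almost surely, even though the random choices at later stages depend on the outcomes at earlier stages. Since each individual projection fails only on a measure-zero algebraic subset of the parameter space (determined by the current finite sets $A,B,C$ and the two flats $h_1,h_2$), one can choose the randomness at level $i+1$ conditionally on the sets produced at level $i$ and avoid that measure-zero set; the union of $d-2$ such conditional failure events is still measure zero. This shows the entire reduction succeeds with probability $1$, yielding the claimed randomized algorithm and the bound of Theorem~\ref{cor:proj}.
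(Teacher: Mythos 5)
Your proof is correct and follows exactly the same route as the paper: iterate Lemma~\ref{lem:projection} $d-2$ times and then invoke Theorem~\ref{thm:coll}. You actually supply two details that the paper leaves implicit---an explicit check that the hypothesis $(B \cup C) \cap h_1 \cap h_2 = \emptyset$ propagates from one level of the projection chain to the next (the paper's Lemma~\ref{lem:projection} does not state this as part of its conclusion, so without the incidence-preservation argument the recursive applicability of the lemma is not formally justified), and the Fubini-type argument that chains the $d-2$ probability-1 events into a single probability-1 event; both are sound and appropriate to include.
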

%%%%%%%%%%%%%%%%%%%%%

As mentioned in the introduction, we have sketched, in the earlier version
of this paper~\cite{AES:prev}, initial results for more general extensions 
of both the single-polynomial and the polynomial-pair vanishing problems 
to higher dimensions, where in the former setup, each of~$B$ and~$C$ is 
contained in an algebraic surface of codimension $1$ and constant degree. 
Unlike the bound in Theorem~\ref{cor:proj}, the bounds that this technique seems to yield 
(and that are stated in the introduction) deteriorate with $d$, but 
remain subquadratic for every $d$. We leave the completion of the work
on these higher-dimensional extensions for future research.

%%%%%%%%%%%%%%%%%%%%%%%%%%%%%%%%%
\section*{Acknowledgments}
The authors wish to thank Adam Sheffer and Frank de Zeeuw for suggesting the transformation
used for collinearity testing in Theorem~\ref{cor:proj}. This has considerably simplified and shortened our
original analysis for the ``flat'' $d \times (d-1) \times (d-1)$ case.
We also thank Jean Cardinal, John Iacono, Stefan Langerman, and Aur\'elien Ooms 
for useful discussions on the relation of our work with that in \cite{BCILOS-17}.
A preliminary version of this paper \cite{AES:prev} has appeared in 
{\it Proc.~36th Annual Symposium on Computational Geometry}, 2020.

\appendix

%%%%%%%%%%%%%%%%%%%%%%%%%%%%%%%%%%%%%%%%%%%%%%%%%%%%%%%%%%
\section{Hierarchical Polynomial Partitioning}
\label{app:hierarchical}

The idea of a hierarchical polynomial partition is very similar to the
earlier constructions of hierarchical cuttings, proposed by Chazelle~\cite{Chazelle-91} and
by Matou\v{s}ek~\cite{Mat} already in the 1990s. Still, it has hardly been used
in the context of polynomial partitionings, mostly because almost all the applications
of this technique to date have been combinatorial, so the issue of efficient algorithmic
construction of the partitioning polynomial seldom arises. One notable early exception
is the work of Agarwal \etal~\cite{AMS-13}, mentioned above, which uses a data structure similar 
to the one developed here, albeit in a different context.
More recent algorithms are by Aronov \etal~\cite{AEZ-19} and by Agarwal \etal~\cite{AAEZ-19}, 
which construct polynomial partitions for a set of varieties.

We begin by presenting the hierarchical approach for planar point sets. Let $P$ be 
a set of $n$ points in the plane, and let $1\le r \le n$ be an integer parameter.
Our goal is to efficiently obtain a hierarchical polynomial partitioning of $P$ into 
$O(n/r)$ subsets, each of size at most~$r$, that has properties similar to the single 
partitioning of Guth and Katz~\cite{GK}---see below.

The following easy (but useful) property is related to, but is much simpler than
Proposition~\ref{prop:GK}.

%%%%%%%%%%%%%%%%%%%%%%%%%%%%%
\begin{proposition}[Partitioning a real algebraic curve; Solymosi and De~Zeeuw~\cite{SdZ-18}]
  \label{prop:curve_partition}
  Let $\gamma \subset \reals^2$ be an algebraic curve of degree $\delta$,
  containing a finite set $Q$. Then there is a subset $X \subset \gamma \setminus Q$
  of $O(\delta^2)$ points, such that $\gamma \setminus X$ consists of $O(\delta^2)$
  arcs, each containing at most $\abs{Q}/\delta^2$ points of $Q$.
  Moreover, each point $p \in X$ has an open neighborhood on $\gamma$ (disjoint from $Q$) such that any
  point of that neighborhood could replace~$p$ without affecting the partitioning property.
  The partition can be constructed in $O(\abs{Q})$ time, where the constant of proportionality
  depends on $\delta$.
\end{proposition}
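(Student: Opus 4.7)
The plan is to reduce the partitioning task to the classical topology of real plane algebraic curves and then subdivide each smooth branch according to the distribution of $Q$ along it. First, I would factor $\gamma$ into its irreducible components over $\reals$ (a preprocessing step depending only on $\delta$), and let $S \subset \gamma$ be the union of the singular points of every component together with all pairwise crossings between distinct components. By B\'ezout's theorem applied to $\gamma$ and its partials, and by the standard intersection bound between distinct components, one has $|S| = O(\delta^2)$. The complement $\gamma \setminus S$ is a smooth one-dimensional manifold whose connected components are open arcs or topological circles; by the Harnack--Hurwitz theorem (or the Milnor--Thom bound on Betti numbers, applied to the real normalization of~$\gamma$), the number of such components is $O(\delta^2)$. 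Call them $M_1,\ldots,M_k$ with $k = O(\delta^2)$.

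Next, I would insert an initial batch of cut points into $X$ in order to ``sever'' the~$M_i$ from the singular set. For each $p \in S \setminus Q$, insert $p$ itself into $X$. For each $p \in S \cap Q$, the curve has at most $\mathrm{mult}_p(\gamma) \le \delta$ local analytic branches at~$p$; choose one point on each branch, taken close enough to~$p$ to avoid all of $Q \setminus \{p\}$, and insert these into $X$. Since the sum of local multiplicities over all singular points is $O(\delta^2)$, this contributes only $O(\delta^2)$ points to $X$, all disjoint from $Q$, and each admitting an open neighborhood on its branch in which it may be slid without affecting the resulting partition.

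With the $M_i$ thus separated, I would subdivide each one so that every resulting sub-arc contains at most $\lfloor |Q|/\delta^2 \rfloor$ points of~$Q$. On each~$M_i$, order the points of $Q \cap M_i$ along the natural one-dimensional parameter of $M_i$ (locally available from any generic linear projection, for which $M_i$ decomposes into $O(\delta)$ monotone pieces), and insert one cut point into the open gap between every block of $\lfloor |Q|/\delta^2 \rfloor$ consecutive ordered points of $Q \cap M_i$; each such cut point lies strictly between two points of~$Q$, so it is outside~$Q$ and can slide within the open gap without changing the partition. The number of additional cuts contributed by $M_i$ is $O\!\left(1 + |Q \cap M_i|\,\delta^2/|Q|\right)$; summed over~$i$, this is $O(\delta^2)$. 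In total $|X| = O(\delta^2)$, and $\gamma \setminus X$ consists of $O(\delta^2)$ arcs, each meeting $Q$ in at most $|Q|/\delta^2$ points, exactly as required.

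For the running time, factoring~$\gamma$ and computing~$S$ depend only on~$\gamma$ and cost $O(1)$ with constants depending on~$\delta$. Each point of~$Q$ can be assigned to its component~$M_i$ in $O(1)$ time, for a total of $O(|Q|)$. Within each $M_i$, the subdivision into equal-size blocks need not be obtained by a full sort; it suffices to apply a linear-time weighted-median selection $O(\log \delta)$ times, giving $O(|Q \cap M_i| \log \delta)$ work per component and $O(|Q|)$ overall with constants depending on~$\delta$. The main obstacle I anticipate is the careful topological bookkeeping at singular points that lie in $Q$: one must argue that the total number of local real branches at such points is $O(\delta^2)$ (which follows from a Puiseux-type analysis of the real normalization of~$\gamma$), and that the inserted branch-local cut points genuinely detach the branches, so that each connected component of $\gamma \setminus X$ is a true arc rather than a more complicated one-dimensional set.
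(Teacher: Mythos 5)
The paper does not actually supply a proof of this proposition: it is imported verbatim from Solymosi and De~Zeeuw~\cite{SdZ-18}, and the only proof-related content in the paper is the one-sentence remark in Appendix~A that the construction ``relies on finding the critical points of $\zeta=Z(f)$ (there are $O(D^2)$ such points)'' via $f$ and its $y$-derivative. So your proposal can only be compared against a reasonable proof of the statement, not against a proof the paper spells out. With that caveat, your overall plan (B\'ezout to bound singular and crossing points, Harnack/Milnor--Thom for the number of smooth components, then balancing cuts inside each component) is the right one and matches the hint the paper gives.

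There is, however, a genuine gap in the handling of singular points not belonging to $Q$. You propose to ``insert $p$ itself into $X$'' for each $p\in S\setminus Q$. Such a $p$ cannot be slid: if $p$ is, say, a node, then removing $p$ severs the curve locally into four rays, whereas removing a nearby nonsingular $p'$ on one of the branches leaves the node intact and only splits a single ray, so the connected components of $\gamma\setminus X$ (and hence how the points of $Q$ distribute among them) change. This directly contradicts the ``moreover'' clause, which asserts that \emph{every} $p\in X$ admits an open $Q$-free neighborhood of interchangeable replacements. The fix is precisely the device you already employ for $p\in S\cap Q$: for \emph{every} singular point (in $Q$ or not), insert one cut point near $p$ on each local real branch, chosen inside a $Q$-free neighborhood. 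All inserted points are then slideable, and the total count remains $O(\delta^2)$ because the sum of local branch multiplicities over all singular points of a degree-$\delta$ curve is $O(\delta^2)$.

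Doing this cleanly resolves the residual worry you flag at the end, but in a way you should state explicitly: after cutting on every local branch, each singular point $p$ sits in a small ``star'' piece that is not literally an arc. That piece, however, contains at most one point of $Q$ (namely $p$ itself, if $p\in Q$), so the $|Q|/\delta^2$ bound is met (for the regime $|Q|\ge\delta^2$, which is the relevant one in the paper's application), and the number of such star pieces is $O(\delta^2)$. If one insists on genuine arcs throughout one can re-express each star as its $O(\delta)$ constituent half-open rays, which does not change the asymptotics. Either way, be explicit about this bookkeeping rather than deferring it, since it interacts with the slideability requirement. The remaining steps of your write-up (Harnack/Milnor--Thom for component counts, weighted-median subdivision for $O(|Q|)$ time with $\delta$-dependent constants) are fine.
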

%%%%%%%%%%%%%%%%%%%%%%%%%%%%%

Proposition~\ref{prop:GK} yields a partitioning with a 
single polynomial $f$ of degree $O(\sqrt{n/r})$. However, when $r$ is relatively small 
(so the degree of $f$ is large), the computation of the partition might be costly.
Concretely, the best algorithm for constructing (an approximation of)
such a polynomial is the algorithm by Agarwal \etal~\cite{AMS-13},
which runs in randomized expected time $O(n^2/r + n^3/r^3)$.

To circumvent this issue, we resort to a hierarchical approach, 
where we recursively construct polynomial partitioning of appropriately chosen
constant degree. That is, let $D \ge 1$ be a sufficiently large constant.\footnote{%
  As we will shortly describe, $D$ depends on the prescribed parameter $\eps$.}
We apply Proposition~\ref{prop:GK} with degree~$O(D)$ and Proposition~\ref{prop:curve_partition} with $\delta = D$,
and obtain a partition of~$P$ into at most $cD^2$ one- and two-dimensional cells, 
for a suitable absolute constant $c>0$, so that each cell contains at most $n/D^2$ points of~$P$.
We then recurse with each two-dimensional (resp., one-dimensional) cell $\tau$ by applying 
Propositions~\ref{prop:GK} and~\ref{prop:curve_partition} 
(resp., only Proposition~\ref{prop:curve_partition})
to $P \cap \tau$. The procedure for the one-dimensional case is trivial to perform.

However, the second-level cells $\tau'$ produced at $\tau$ do not have 
to be contained in $\tau$, so the decomposition of the plane that all the second-level cells 
produce, over all $\tau$, is not necessarily a partition of the plane. 
We handle this as follows: The subset associated with a child cell~$\tau'$ of the partition 
is formed by intersecting $\tau'$ with the set associated with its parent cell $\tau$;
only the points of $P$ in that intersection are passed to the subproblem at $\tau'$.
If this set is empty, we do not create a recursive subproblem at $\tau'$. With this modification, 
the subsets of $P$ associated with the current level of recursion form a partition of $P$.

We obtain, in the second recursive stage, a collection of at most~$c^2D^4$ cells, 
each of which is associated with a subset of at most~$n/D^4$ points of~$P$ which
the cell contains (in addition to other points that the cell might also contain
but which are not associated with it). We continue in this manner recursively, so that at the 
$j$th level of recursion we get at most $c^jD^{2j}$ cells, each associated with a subset
of at most $n/D^{2j}$ points of $P$ which it contains. 
(The situation shares common features with the simplicial partitioning scheme
of Matou\v{s}ek~\cite{Ma:ept}, which is based on standard cuttings.)
The recursion terminates when the number of points in a cell is at most $r$.
This leads to a simple recurrence on the number of cells, which solves to 
the bound $O((n/r)^{1+\eps})$, for any $\eps>0$ ($\eps$ depends on $D$, 
or rather $D$ depends on $\eps$; to achieve a bound with a smaller $\eps$ we need to increase $D$,
and pay for the improved bound with a larger constant of proportionality).
In particular, this is an upper bound on the number of cells obtained at the last level. 
We refer to them as the \emph{bottom-level cells}.

So far, this hierarchical construction makes little sense, as there are many other
(more trivial) ways to partition $P$ in this manner. It is the next observation that
makes the hierarchical partition useful, as it shows that the partitioning almost possesses 
(with an $\eps$-loss in the exponents)
the same properties as does a standard polynomial partition. Specifically, let
$\gamma$ be an algebraic curve of constant degree~$b$, and suppose that $D$ is chosen 
to be sufficiently larger than $b$ (see below). 
We bound the number of cells in the decomposition of $P$ that $\gamma$ 
\emph{reaches}, where we say, as above, that $\gamma$ reaches a cell~$\tau$ if $\gamma$~intersects 
$\tau$ and all its ancestral cells. Bounding the number of these cells proceeds as follows.

At the first level of recursion, $\gamma$ meets at most $bD+1$ cells,
which easily follows from B\'ezout's theorem. Clearly, $\gamma$ reaches all these cells.
Let $X_\gamma(n)$ be an upper bound on the maximum number of cells
in a partition of $n$ points that $\gamma$ reaches.\footnote{%
  This recurrence counts the number of reached cells at all levels of the hierarchy. 
  If we are only interested in the number of bottom-level cells that $\gamma$ reaches, 
  the nonrecursive term $bD+1$ should be removed.}
We then obtain the recurrence
\begin{equation}
  \label{eq:crossing_curve}
  X_\gamma(n) \le (bD+1) X_\gamma(n/D^2) + bD+1 ,
\end{equation}
for $n > r$. For $n\le r$, $X_\gamma(n) = 1$, as no further partitioning is done in this case.
Using induction on $n$, it is easy to verify that the solution is 
$X_\gamma(n) = O((n/r)^{1/2+\eps})$, for any $\eps>0$, as long as we
choose $D$ sufficiently larger than $b^{1/(2\eps)}$.
Thus, as promised, this hierarchical partition has similar properties to polynomial partitioning 
with a single polynomial of degree $O(\sqrt{n/r})$ (up to the extra $\eps$ in the exponent):

%%%%%%%%%%%%%%%%%%%%%%%%%%%%%%
\begin{theorem}
  \label{app:thm:hier_partition}
  Let $P$ be a set of $n$ points in the plane, let $1\le r \le n$ be an integer parameter, 
  and $\eps>0$ be an arbitrarily small number. Then the following hold: \\
  (i) There is a hierarchical polynomial partition for $P$ with $O((n/r)^{1+\eps})$ bottom-level cells, 
  each of which is associated with at most $r$ points of $P$ which it contains. 
  The overall number of cells is also $O((n/r)^{1+\eps})$. \\
  (ii) Any algebraic curve $\gamma$ of constant degree reaches (in the meaning 
  defined above) at most $O((n/r)^{1/2+\eps})$ cells at all levels of this partition. \\
  The constants in the bounds depend on $\eps$ and on the degree of $\gamma$.
\end{theorem}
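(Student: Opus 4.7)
The plan is to follow the hierarchical construction already sketched before the theorem statement, making the bookkeeping precise. Fix a constant $D$ that is sufficiently large as a function of $\eps$ (and, for part (ii), of the degree $b$ of $\gamma$); I will choose $D \ge b^{1/(2\eps)}$ up to constants, so that the extra $D^2$-factor introduced per recursive level is absorbed into the prescribed $n^\eps$ slack. At each node of the recursion, associated with a subset $P_\tau \subseteq P$ living in a cell $\tau$, I apply Proposition~\ref{prop:GK} to $P_\tau$ with partitioning-polynomial degree $O(D)$, obtaining $O(D^2)$ two-dimensional open cells each meeting at most $|P_\tau|/D^2$ points of $P_\tau$, and then apply Proposition~\ref{prop:curve_partition} on $Z(f) \cap \tau$ to break the zero set into $O(D^2)$ arcs, each meeting at most $|P_\tau|/D^2$ points. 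The subset passed to a child $\tau'$ is $P_\tau \cap \tau'$, so the subsets on each level do form a partition of $P_\tau$ (and hence, overall, of $P$); recursion stops when $|P_\tau| \le r$.

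For part (i), the subset size drops by a factor of $D^2$ at each level, so the recursion has depth $j = \lceil \log_{D^2}(n/r) \rceil$. At level $j$ the number of cells is at most $(cD^2)^j$ for an absolute constant $c$ coming from Propositions~\ref{prop:GK} and~\ref{prop:curve_partition}, and a short computation gives
\[
(cD^2)^j = (n/r)^{1+\log_{D^2} c} = O\bigl((n/r)^{1+\eps}\bigr),
\]
provided $D$ is taken large enough that $\log_{D^2} c \le \eps$. Summing this geometric series over all levels bounds the total number of cells (not merely bottom-level ones) by the same expression, establishing~(i).

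For part (ii), I count the cells reached by $\gamma$ using B\'ezout's theorem: at each recursion node, $\gamma$ enters at most $bD + 1$ children of $\tau$, since $\gamma \cap Z(f)$ has at most $bD$ points (where $f$ is the partitioning polynomial at $\tau$, of degree $O(D)$), and $\gamma \setminus Z(f)$ has at most $bD+1$ connected pieces, each in a single child cell. Letting $X_\gamma(n)$ denote the maximum number of reached cells in a partition of $n$ points, this yields
\[
X_\gamma(n) \le (bD+1)\, X_\gamma(n/D^2) + (bD+1), \qquad X_\gamma(n) = 1 \text{ for } n \le r.
\]
A straightforward induction, unfolding the recurrence, gives $X_\gamma(n) = O\bigl((bD+1)^{\log_{D^2}(n/r)}\bigr) = O\bigl((n/r)^{\log_{D^2}(bD+1)}\bigr)$, and the exponent is $\tfrac{1}{2} + \tfrac{\log(bD+1) - \tfrac12 \log D^2}{\log D^2} \le \tfrac12 + \eps$ once $D$ is taken large enough compared to $b^{1/(2\eps)}$, proving the claimed bound.

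The only conceptual subtlety, rather than an obstacle, is the fact that children cells of $\tau$ need not be contained in $\tau$, so the geometric partitions on different branches of the tree overlap; this is why I insist that the \emph{subset} associated with $\tau'$ be $P_\tau \cap \tau'$ rather than $P \cap \tau'$. Once this convention is in place, both the counting argument in~(i) and the B\'ezout-based argument in~(ii) go through without change, since the argument on each branch only uses the partitioning polynomial constructed at its nodes and is independent of what happens on sibling branches.
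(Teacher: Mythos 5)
Your proposal follows exactly the paper's construction and analysis: the same recursive application of Propositions~\ref{prop:GK} and~\ref{prop:curve_partition} with constant degree $D$, the same convention of passing $P_\tau\cap\tau'$ to children to handle the fact that geometric cells do not nest, and the same recurrence $X_\gamma(n)\le(bD+1)X_\gamma(n/D^2)+(bD+1)$ for part~(ii). The arithmetic you fill in (expressing $(cD^2)^j$ and $(bD+1)^j$ as powers of $n/r$ and choosing $D\gtrsim b^{1/(2\eps)}$ to absorb the excess into $\eps$) is correct and is precisely what the paper leaves implicit.
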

%%%%%%%%%%%%%%%%%%%%%%%%%%%%%%

The analysis of algorithms for constructing this partition, as well as its
extensions to higher dimensions, presented next, will be given later in this section.

\paragraph*{Hierarchical partition for Cartesian products of two planar point sets.}
We next present an extension of the decomposition of Solymosi and De~Zeeuw~\cite{SdZ-18}
to hierarchical partitioning.

We are now given two $n$-point sets $P_1$ and $P_2$ in the plane 
and an arbitrary integer parameter $1\le r \le n$. Our goal is to obtain a hierarchical 
polynomial partition for $P_{1,2} \coloneqq P_1 \times P_2$, embedded in $\reals^2\times\reals^2$,
which we regard as $\reals^4$, such that at the bottom level of the recursive partition
we obtain roughly $O(\abs{P_{1,2}}/r^2)$ subsets, each of which is the Cartesian product of a subset
of at most $r$ points of $P_1$ with a subset of at most $r$ points of $P_2$, and thus has size at~most~$r^2$. 
The hierarchical partition that we will construct will be a hierarchy of Cartesian products
of planar partitions (for $P_1$ and for $P_2$); see below for details.
Put $N \coloneqq n^2 = \abs{P_1 \times P_2}$.

We construct a planar hierarchical polynomial partition for each of the sets~$P_1$ and~$P_2$, as described above,
and combine them, level by level, to form the desired hierarchical partitioning of $4$-space.
That is, let $D > 1$ be the (large constant) degree parameter of the partition.
We take the first level of the partition of $P_1$ and the first level of the partition of $P_2$,
and construct their Cartesian product, as described in Section~\ref{sec:prelim},
thereby obtaining at most $c^2D^4$ cells in total (of dimensions~$2$, $3$, and $4$), each containing at most~$N/D^4$ 
points of~$P_{1,2}$ (see Corollary~\ref{cor:partition_CP}). By construction, each set of the partition is the
Cartesian product of a set of at most $n/D^2$ points of $P_1$ and a set of at most $n/D^2$ points of $P_2$.
In the next level of the planar partition of~$P_1$ (resp.,~$P_2$), we have at most $cD^2$ subpartitions,
each resulting from a cell at the first level. 
We now consider all pairs of these subpartitions, one from some cell $\tau_1$ of the top-level
partition of $P_1$ and the other from a cell $\tau_2$ from the top-level partition of $P_2$,
and construct their Cartesian product as above. For each pair of top-level cells under consideration, we get at most 
$c^2D^4$ second-level cells, for a total of at most $c^4D^8$ cells, so that each of them is associated with a subset of
at most $N/D^8$ points of $P_{1,2}$ which it contains (where each of these subsets is a Cartesian product of the kind we want).
In general, at the $j$th recursive step of the partition, 
we consider the $j$th level of the two planar partitions of $P_1$ and $P_2$, and, for each pair of cells,
one from each partition, we construct the Cartesian product of the partitions produced at those cells.
We obtain at most $c^{2j}D^{4j}$ cells, each associated with a subset of at most $N/D^{4j}$ points of 
$P_{1,2}$ which it contains (in addition to other points it may also contain, where each set is the Cartesian product of a set of at most $n/D^{2j}$ points of $P_1$ and a set of at most $n/D^{2j}$ points of $P_2$). 
At the last step, each cell in the partition of $P_1$ (resp.,~$P_2$) 
is associated with a subset of at most $r$ points of $P_1$ (resp., of $P_2$), which it contains,
so their product is associated with a subset of at most $r^2$ points of $P_{1,2}$ which it contains.
The number of bottom-level cells in each of the planar partitions is $O((n/r)^{1+\eps})$, 
and therefore the total number of product bottom-level cells in the four-dimensional
construction is at most $O((N/r^2)^{1 + \eps})$, for any $\eps>0$ (for this bound we need to choose $D$ as a
suitable function of $\eps$), and each of them is associated with a subset of at most $r^2$ points of $P_{1,2}$,
which is the Cartesian product of a subset of at most $r$ points of $P_1$ and a subset of at most $r$ points of $P_2$,
which the cell contains.

We now consider the interaction between the resulting partition and a two-dimensional algebraic surface.
Let $S \subset \reals^4$ be a two-dimensional surface of degree at most $b$, which has good fibers.
As above, we assume that $D$ is chosen to be sufficiently large with respect to $b$.
We bound the number of cells in the hierarchical decomposition of $P_1 \times P_2$ that $S$ \emph{reaches},
in the same meaning as above.
With a slight abuse of notation, let $X_S(N)$ denote the bound on the maximum number of 
product cells in a partition of the Cartesian product, of size $N=n^2$, of two planar point sets 
of $n$ points each, that are reached by $S$. Using an enhanced version of Proposition~\ref{prop:intersect}, 
we then obtain the recurrence\footnote{%
  Again, the nonrecursive term $O(b^2D^2)$ should be dropped if we only care about the number of bottom-level cells that $S$ reaches.}:
\begin{equation}
  \label{eq:surface_crossing}
  X_S(N) = O(b^2 D^2) X_S(N/D^4) + O(b^2 D^2) ,
\end{equation}
for $N > r^2$, and $X_S(N) = 1$ for $N\le r^2$ (again, no partition is done in the latter case). 
Using induction on $N$, it is easy to verify that the solution is 
$X_S(N) = O(({N/r^2})^{1/2+\eps}) = O((n/r)^{1+2\eps})$,
where, as above, we need to choose $D$ larger than $b^{1/(2\eps)}$.
We thus conclude:
%%%%%%%%%%%%%%%%%%%%%%%%%%%%
\begin{theorem}
  \label{app:thm:hier_partition_CP}
  Let $P_1$, $P_2$ be two $n$-point sets in the plane and put $P_{1,2} = P_1 \times P_2$.
  Let $1\le r \le n$ be an integer parameter.
  Then, for any $\eps>0$, the following hold: \\
  (i) There is a hierarchical polynomial partition of $\reals^4$ for $P_{1,2}$ with 
  $O((n/r)^{2+\eps})$ bottom-level cells, 
  each of which is associated with a subset of at most $r^2$ points of $P_{1,2}$,
  which is the Cartesian product of a set of at most $r$ points of $P_1$ and a set 
  of at most $r$ points of $P_2$, which the cell contains. 
  The number of cells at all levels is also $O((n/r)^{2+\eps})$. The constants of proportionality depend on $\eps$.\\
  (ii) A two-dimensional algebraic surface $S$ of degree at most $b$ with good fibers, where $b$ is a constant, reaches at most
  $O((n/r)^{1+2\eps})$ cells at all levels of the hierarchical partition of $P_{1,2}$.
  The constant of proportionality depends on $\eps$ and on $b$.
\end{theorem}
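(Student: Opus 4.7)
The plan is to build the hierarchical partition of $P_{1,2}$ by taking, level by level, the Cartesian product of the planar hierarchical partitions of $P_1$ and of $P_2$ provided by Theorem~\ref{app:thm:hier_partition}. Fix a constant degree parameter $D$, to be chosen at the end as a sufficiently large function of $\eps$ (and, for part (ii), of the degree~$b$). For each set $P_i$ we run the planar hierarchy with degree~$D$, stopping recursion once an associated subset has at most $r$ points; this produces at most $O((n/r)^{1+\eps/2})$ bottom-level planar cells per set. At level $j$ of the four-dimensional hierarchy I pair each level-$j$ planar cell of $P_1$ with each level-$j$ planar cell of $P_2$, and within each such pair apply Corollary~\ref{cor:partition_CP} with the same parameter $D$ to the two partitioning polynomials already available. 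The point subset associated with a product cell $\tau_1 \times \tau_2$ is the Cartesian product of the subsets associated with its factors, which is automatically a Cartesian product of point sets and satisfies the ``at most $r$ points per factor'' requirement at the bottom level.

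For part (i), the cell count follows a clean recurrence: at level $j$ there are at most $c^{2j}D^{4j}$ product cells, each associated with a subset of at most $N/D^{4j}$ points (where $N = n^2$), and the recursion stops after $\Theta(\log_D(n/r))$ levels. Summing the geometric series, the total number of cells at all levels, including the bottom, is at most $O\bigl((N/r^2)^{1+\eps}\bigr) = O\bigl((n/r)^{2+\eps}\bigr)$, provided $D$ is chosen large enough in terms of $\eps$ to absorb the constant $c$ that comes from the degree-bound conversion in Proposition~\ref{prop:GK} and the product construction. This quantitative trade-off between $D$ and $\eps$ is exactly the same one that appears in the planar case leading up to Theorem~\ref{app:thm:hier_partition}, so the bookkeeping is straightforward.

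For part (ii), I would trace a fixed surface $S$ of degree $b$ with good fibers through the hierarchy. The key point is that the good-fibers property is an intrinsic property of $S$ (independent of any partition), so at every recursive step Proposition~\ref{prop:intersect} applies and bounds the number of product cells of a single level that $S$ crosses by $O(b^2 D^2)$. Each such crossed cell is then refined into its level-$(j+1)$ product sub-partition, and only the sub-cells that $S$ \emph{reaches} (i.e., intersects together with all of its ancestors) are pursued further. This gives exactly the recurrence~\eqref{eq:surface_crossing}:
\[
X_S(N) \le O(b^2 D^2)\, X_S(N/D^4) + O(b^2 D^2), \qquad X_S(N) = 1 \text{ for } N \le r^2.
\]
A standard induction on the depth of the recursion solves this to $X_S(N) = O((N/r^2)^{1/2+\eps}) = O((n/r)^{1+2\eps})$, as long as $D$ is chosen so that $D^{4\eps}$ dominates the constant $O(b^2)$; concretely, $D \gg b^{1/(2\eps)}$ suffices. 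The bound counts all reached cells at all levels simultaneously, which is what the theorem asserts.

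The only genuine subtlety, rather than an obstacle, is the choice and tracking of the two independent constants $c$ (from the polynomial and product partitions) and $b$ (from the surface), together with the level-by-level assembly: one has to verify that applying Corollary~\ref{cor:partition_CP} at a \emph{refinement} step of a parent product cell is legitimate, which in turn relies on the fact that children are obtained by intersecting with the parent's associated subset (as already discussed for the planar case in this appendix). Once this is granted, both recurrences are clean and the parameter $D = D(\eps,b)$ can be selected uniformly to make both bounds go through simultaneously.
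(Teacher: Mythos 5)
Your proposal is correct and follows essentially the same approach as the paper: build the four-dimensional hierarchy by taking level-by-level Cartesian products of the two planar hierarchical partitions, count cells via the geometric growth $c^{2j}D^{4j}$ per level, and bound the cells reached by a good-fibers surface $S$ through the recurrence $X_S(N) \le O(b^2D^2)\,X_S(N/D^4) + O(b^2D^2)$, which (with $D$ chosen sufficiently large relative to $b^{1/(2\eps)}$) solves to $O((n/r)^{1+2\eps})$. The only cosmetic difference is that you track the $\eps$-budget a bit more carefully per planar factor ($\eps/2$ each), whereas the paper just renames $\eps$ at the end.
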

%%%%%%%%%%%%%%%%%%%%%%%%%%%%

We next comment that when we have a one-dimensional variety (i.e., an algebraic curve) $\gamma$ of degree at most $b$,
one can show, using similar arguments as above and the fact that at every level of the partition $\gamma$ meets at most $bD$ product
cells, that the maximum number of product cells at all levels (as above)
%of a partition of the Cartesian product, of size $N=n^2$, of two planar point sets  of $n$ points each,
that are reached by $\gamma$ is $O((n/r)^{1/2+\eps})$.
We thus conclude:

\begin{corollary}
  \label{app:cor:hier_partition_1dim}
  Given the setting of Theorem~\ref{app:thm:hier_partition_CP} and an algebraic curve $\gamma$ of degree at most $b$ (where $b$ is a constant), the number of cells at all levels of the hierarchical partition of $P_{1,2}$ reached by $\gamma$ is $O((n/r)^{1/2+\eps})$, for any $\eps > 0$. The constant of proportionality depends on $\eps$ and on $b$.
\end{corollary}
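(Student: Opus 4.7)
The plan is to mimic the recurrence analysis used for Theorem~\ref{app:thm:hier_partition_CP}(ii) but to exploit the one-dimensionality of $\gamma$ to sharpen the per-level bound from $O(b^2 D^2)$, which was valid for a two-dimensional surface with good fibers, down to $O(bD)$. This sharpening of the branching factor in the recurrence is exactly what produces the exponent $1/2+\eps$ in place of $1+2\eps$, paralleling the curve-versus-surface dichotomy already seen in the single-set analysis leading to Theorem~\ref{app:thm:hier_partition}(ii).

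First I would establish the per-level bound. At any one level of the hierarchy, the partition of $\reals^4$ is the Cartesian product of two planar polynomial partitions, determined by polynomials $\varphi_1(x,y)$ and $\varphi_2(z,w)$, each of degree $O(D)$. As one moves along $\gamma$, the ambient product cell $\tau_1 \times \tau_2$ can only change when the moving point crosses the $3$-surface $Z(\varphi_1)\times\reals^2$ (a transition in the first factor) or $\reals^2\times Z(\varphi_2)$ (a transition in the second factor). Each such crossing is a point of intersection of $\gamma$ with a hypersurface of degree $O(D)$ in $\reals^4$; equivalently, it can be viewed as an intersection of the projection $\pi_i(\gamma)$, an algebraic variety of degree $O(b)$ in the appropriate plane, with $Z(\varphi_i)$. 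B\'ezout's theorem bounds each such intersection by $O(bD)$, giving $O(bD)$ transitions in total. Since $\gamma$ has $O(1)$ irreducible components, and each component contributes at most $(\text{transitions}+1)$ visited cells, the curve $\gamma$ reaches at most $O(bD)$ product cells at any single level.

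Next I would assemble and solve the recurrence. Let $Y_\gamma(n)$ denote the maximum number of product cells, summed over all levels, reached by $\gamma$ in the hierarchical partition of a Cartesian product of two $n$-point planar sets. Each of the $O(bD)$ top-level product cells reached by $\gamma$ spawns a subproblem on at most $n/D^2$ points in each factor, to which the same curve (of unchanged degree at most $b$) applies, so
\[
Y_\gamma(n) \;\le\; c\, bD \cdot Y_\gamma(n/D^2) \;+\; c\, bD ,
\]
with base case $Y_\gamma(n) = O(1)$ for $n \le r$. Induction on $n$ yields $Y_\gamma(n) = O\bigl((n/r)^{1/2+\eps}\bigr)$, provided $D \ge b^{1/(2\eps)}$, which makes the recursive factor $c bD/D^{1+2\eps}$ strictly less than $1$; the dependence on $b$, $c$, and $\eps$ is absorbed into the constant of proportionality. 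This solution has exactly the same shape as the one obtained from~\eqref{eq:crossing_curve} for a single planar point set.

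I expect the main obstacle to be the rigorous justification of the per-level bound of $O(bD)$ in the presence of degeneracies. The curve $\gamma$ may be reducible, may lie inside one of the separating surfaces $Z(\varphi_1)\times\reals^2$ or $\reals^2\times Z(\varphi_2)$, or may pass through lower-dimensional product cells (the $2$- and $3$-dimensional cells arising when Proposition~\ref{prop:curve_partition} is applied to subdivide the zero sets). Each such anomaly can be handled by splitting $\gamma$ into its $O(1)$ irreducible components, applying B\'ezout to each within the appropriate stratum, and charging each transversal crossing to at most $O(1)$ incident lower-dimensional cells; in the extreme case where an entire component of $\gamma$ lies within a separating surface, one simply recurses on that component as a curve within the corresponding lower-dimensional product decomposition, using the cruder (but still $O(bD)$) bound per level there. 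None of these situations affects the asymptotic count. The algorithmic procedure for enumerating the reached cells is inherited from Theorem~\ref{app:thm:hier_partition_CP}, so only the combinatorial bound is being refined here.
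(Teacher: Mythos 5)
Your argument is correct and mirrors the paper's own (very terse) justification: the paper simply asserts that "at every level of the partition $\gamma$ meets at most $bD$ product cells" and refers to "similar arguments as above," whereas you spell out the B\'ezout count of transitions across $Z(\varphi_1)\times\reals^2$ and $\reals^2\times Z(\varphi_2)$ and then solve the recurrence $Y_\gamma(n)\le O(bD)\,Y_\gamma(n/D^2)+O(bD)$ exactly in the style of~\eqref{eq:crossing_curve}. The degenerate situations you flag at the end (a component of $\gamma$ lying inside one of the separating walls, so that it must be tracked through the product of lower-dimensional strata) are glossed over just as lightly in the paper itself, so your handling of them is consistent with the level of detail the authors supply.
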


\paragraph*{Hierarchical partition for the Cartesian product of a point set in $\reals^2$ and a one-dimensional point set.}

We are now given a set $P$ of $n$ points in the plane, a set $Q$ of $n$ points on an algebraic curve
$\gamma \subset \reals^2$ of constant degree $\delta$, and two arbitrary integer parameters $r, s \ge 1$,
which we constrain to satisfy $r, s \le \sqrt{n}$. 
Our goal is to obtain a hierarchical polynomial partition for $P \times Q$, 
now embedded in the three-dimensional space $\reals^2 \times\gamma$,
such that, at the bottom level of the recursive partition, we obtain roughly $O(n^2/rs)$ subsets,
each of size at most $rs$. Put $N \coloneqq n^2 = \abs{P \times Q}$.

Let $D_1 > 1$ be the (sufficiently large constant) degree parameter of the partition of $P$
and let $D_2$ be the degree parameter of the partition of $Q$,
chosen to be a constant sufficiently larger than $\delta$ (a more specific choice
of these parameters is stated in the analysis below).
We construct a planar hierarchical polynomial partition for $P$, as described above,
and construct a hierarchical partition for $Q$ by recursively partitioning at each step some subset $Q'$ of $Q$
into at most $cD_2^2$ sets, each containing at most $\abs{Q'}/D_2^2$ points,
using Proposition~\ref{prop:curve_partition}.\footnote{%
  An inspection of the analysis of Proposition~\ref{prop:curve_partition} shows 
  that one needs to split the curve at its singular points, which is the reason we require $D_2 > \delta$.} 
We stop as soon as each subset of $P$ contains at most $r$ points, and each subset of $Q$ contains at most $s$ points.

To simplify the presentation, we assume, from now on, that $\gamma$ is connected and
non-self-intersecting. Handling more general curves can be done by partitioning
$\gamma$ into maximal connected and non-self-intersecting pieces, and by applying
the analysis to each piece separately, but we do not address this issue any more
in what follows. Consider the parametric representation $\gamma(t)$ of $\gamma$, 
for $t \in \reals$, now regarding $\gamma$ as (a homeomorphic copy of) $\reals$, 
and write $\reals^2\times\reals$ instead of $\reals^2\times\gamma$.

% \micha{we spend here a lot of effort on how exactly we partition, worrying about 
% various dimensions, introducing gaps and what not. But in the 2x2 case studied 
% above we simply say: partition $P_1$, partition $P_2$, and take the product. 
% this does not make sense. i think the attitude should be as in the 2 x 2 case, 
% without getting to the kishkes of the technique.}
% 
% \esther{The reason is because the 2x2 case was analyzed in~\cite{SdZ-18}, 
% whereas the 2x1 case was not, so the details should be addresed. I am adding a sentence to justify that.}

Before proceeding to the description of the hierarchical partition for $P \times Q$, 
we describe the structure of the decomposition of $\reals^2\times \reals$ at 
a single level, when we form the Cartesian product of the partition of $P$ with 
the partition of $Q$.
We note that although this structure is similar in spirit to the one obtained for the Cartesian product of two planar
point sets, the setting of a planar point set and a one-dimensional point set was not addressed in~\cite{SdZ-18},
and we therefore give these details for the sake of completeness.

Let $f$ be the polynomial of degree $D_1$ partitioning $P$, 
given in Proposition~\ref{prop:GK}. Put $\zeta = Z(f)$, and let $X \subset \zeta$
be the set of $O(D_1^2)$ points obtained by applying Proposition~\ref{prop:curve_partition} 
to $\zeta$ and $P\cap \zeta$.  Similarly, let $X' \subset \gamma$ be the set of $O(D_2^2)$ partitioning 
points for $\gamma$, obtained by applying Proposition~\ref{prop:curve_partition} to~$\gamma$ and~$Q$.

The partitioning of $\reals^2\times\reals$ has the following structure.
The two-dimensional ``walls'' $\zeta \times \reals$ and $\reals^2 \times X'$
partition $\reals^3$ into $O(D_1^2 D_2^2)$ three-dimensional cells, each being the Cartesian product of
a two-dimensional cell of the partition of $\reals^2$ with a one-dimensional cell of the partition of $\reals$. 
The two-dimensional cells are formed by taking $\zeta \times \reals$ and partitioning it by the
``level curves''~$\zeta \times X'$ and the so-called \emph{$1$-gaps}~$X \times \reals$,\footnote{%
  Following the definition in~\cite{SdZ-18}, a \emph{$k$-gap} is a $k$-dimensional variety that 
  is used to cut out $(k+1)$-dimensional cells into subcells,
  but as it does not contain any points of interest, it does not have to be partitioned further.}
and, similarly, by taking $\reals^2 \times X'$ and partitioning it by the curves $\zeta \times X'$.
Overall, we obtain $O(D_1^2 D_2^2)$ two-dimensional cells.
The one-dimensional cells are formed by partitioning $\zeta \times X'$ by $X \times \reals$,
and the vertices ($0$-dimensional cells of the partition) are the endpoints of the one-dimensional cells.
The overall number of $0$- and $1$-dimensional cells is $O(D_1^2 D_2^2)$.  However, since these cells are empty
of points of $P\times Q$, by construction, we disregard them hereafter.
So we get a total of at most $cD_1^2 D_2^2$ cells, for a suitable constant $c$, and each of these cells
contains at most $\abs{P \times Q}/D_1^2 D_2^2$ points of $P \times Q$,
which form a Cartesian product of a subset of at most $\abs{P}/D_1^2$ points of $P$ and
a subset of at most $\abs{Q}/D_2^2$ points of $Q$,
as the cell is the Cartesian product of a one- or a two-dimensional cell from the partition of $P$
with a one-dimensional cell from the partition of $Q$.

% \micha{skipped up to here, waiting for a decision.}
% \esther{I suggest to keep it.}

We now iterate over the levels of the hierarchy, and construct the Cartesian product of the cells
in the partition of $P$ with the cells in the partition of $Q$ at each level of the hierarchy. 
This is somewhat similar to the
earlier construction for two planar point sets, with the following main difference: 
Recall that we started with two arbitrary parameters $D_1$, $D_2$. 
However, for convenience of the following analysis
we require that both hierarchical
partitions of $P$ and $Q$ have the same number of levels,
which imposes a constraint on the choice of $D_1$ and $D_2$.  
Specifically, let $k$ be the common number of levels in the hierarchical partitions of $P$, 
and $Q$.
We thus have $D_1^{2k} = n/r$ and $D_2^{2k} = n/s$.
We thus need to choose
${\displaystyle D_2 = D_1^{\frac{\log{(n/s)}}{\log{(n/r)}}}}$.
Due to our assumption that $1 \le r, s \le \sqrt{n}$, it follows that
$1/2\le \frac{\log{(n/s)}}{\log{(n/r)}} \le 2$, and so
$\sqrt{D_1} \le D_2 \le D_1^2$.

The remaining details of the decomposition are similar to the earlier construction for two planar point sets.
That is, at the $j$th recursive step,
we construct the Cartesian product of the partitions, for each pair of (parent) cells, and obtain 
at most $c^{2j}D_1^{2j} D_2^{2j}$ cells (of dimensions $1$, $2$, and $3$), each associated with a subset of
at most $N/D_1^{2j} D_2^{2j}$ points of $P \times Q$ which it contains. As in the previous construction,
we obtain a recurrence that shows that the total number of cells is at most $O((n^{2+\eps}/(rs)^{1 + \eps})$,
for any $\eps>0$ (choosing $D_1$ and $D_2$ to be sufficiently large in terms of $\eps$).

Let $S \subset \reals^2\times\reals$ be a two-dimensional algebraic surface of degree at most $b$ with \emph{good fibers}. 
Recalling its definition from Section~\ref{sec:prelim}, this means that, 
for every point $p \in \reals^2$, excluding $O(1)$ exceptional points,
the fiber $(\{p\} \times \reals) \cap S$ is finite, and for every point $q \in \reals$,
again excluding $O(1)$ exceptional points,
the fiber $(\reals^2 \times \{q\}) \cap S$ is a one-dimensional variety (i.e., an algebraic curve). 
As above, we assume that $D_1$ and $D_2$ are chosen 
to be sufficiently large with respect to $b$. We bound the number of cells in the hierarchical 
decomposition of $P \times Q$ that $S$ \emph{reaches}, a notion defined above.

% \micha{here we use the terminology of gaps and other details of the 2x1 case (see my complaint above), so we probably dont want to suppress it?}
% \esther{Right, look also at my comment above.}

% \esther{I added the handling of exceptional points $q$, which was missing.}
We first give an upper bound on the number of cells that $S$ intersects at any single level in the hierarchy.
Assume first that there are no exceptional points $q$ of the second kind (that is, where the fiber is $\reals^2 \times \{q\}$).
Using the above notation for $\zeta$, $X$, and $X'$, we intersect $S$ with the two-dimensional walls 
$\zeta \times \reals$, $\reals^2 \times X'$, and with the $1$-gap $X \times \reals$.
Due to the good-fibers property and by applying B\'ezout's Theorem, this yields a single curve 
$\alpha$ on $S$ (contained in~$\zeta\times \reals$) of degree $O(bD_1)$ (to which we 
append $O(1)$ copies $\{p\}\times \reals$ of $\reals$, for the $O(1)$ exceptional 
points~$p$ where we do not have a good fiber),
a collection $\Xi$ of $O(D_2^2)$ pairwise disjoint curves, each of degree~$O(b)$, 
and a discrete set $M$ of $O(D_1^2)$ points located on the curve $\alpha$.
We construct the two-dimensional map on $S$ formed by the overlay
of $\alpha$, the curves in $\Xi$, and the points in $M$. The faces of this map correspond to cells in the
Cartesian product of the partitions of $P$ and $Q$ that $S$ crosses. 
(Several cells of the map can correspond to the same cell of the product.)
We observe that, by B\'ezout's Theorem,
$\alpha$ has at most $O(b^2 D_1^2)$ critical points. Moreover, each curve in $\Xi$ meets $\alpha$ in at most
$O(b^2 D_1)$ points (applying once again B\'ezout's Theorem), for a total of $O(b^2 D_1D_2^2)$ intersections over all
curves in $\Xi$. We thus conclude that $O(b^2 D_1 D_2^2 + b^2 D_1^2)$ bounds the total complexity of the overlay,
and thus $S$ crosses 
\[
O(b^2 D_1 D_2^2 + b^2 D_1^2) = O(b^2 D_1 D_2^2) 
\] 
cells in total (the bound on the right hand side follows since $D_1 \le D_2^2$).
%$\sqrt{D_1} \le D_2 \le D_1^2$).

% \esther{Added this par from the Arxiv version (it looks like this version was not the most up-to-date).}

Next, we consider the $O(1)$ possible exceptional points $q$ of the second kind.
This, in particular, implies that $S$ may contain up to $O(1)$ planes of the form $\reals^2 \times \{q\}$.
Using the above considerations, it is easy to verify that each of these planes crosses at most $O(D_1^2)$ cells,
which is therefore also the total asymptotic bound over all exceptional points $q$. 
As above, this bound is subsumed by $O(b^2 D_1 D_2^2)$.

With a slight abuse of notation, let $X_S(n_1, n_2)$ denote a bound on the maximum number of cells in a partition
of the Cartesian product of a planar point set of size $n_1$
and a one-dimensional point set of size $n_2$ (with a total of $n_1n_2$ points),
that are intersected by~$S$. We then obtain the recurrence:\footnote{%
  Here too the nonrecursive term can be dropped if we only care about bottom-level cells.}
\begin{equation}
  \label{eq:surface_crossing_3D}
  X_S(n_1, n_2) = O(b^2 D_1 D_2^2) X_S(n_1/D_1^2,n_2/D_2^2 ) + O(b^2 D_1 D_2^2) ,
\end{equation}
for $n_1 > r$ or, equivalently, $n_2 > s$.
We have $X_S(n_1, n_2) = 1$ otherwise. 
Using induction on $N = n_1n_2$, it is easy to verify that (assuming $b$ to be a constant)
the solution is
\[
X_S(n_1, n_2) = O\left(\frac{(\sqrt{n_1} n_2)^{1+\eps}}{r^{1/2+\eps}s^{1+\eps}} \right) ,
\]
for any $\eps>0$.
Since initially we have $n_1 = n_2 = n$, the bound on the number of cells intersected by~$S$ is thus
\[
O\left(\frac{n^{3/2+\eps}}{r^{1/2+\eps}s^{1+\eps}} \right) .
\]
We thus conclude:

%%%%%%%%%%%%%%%%%%%%%%%%%%%%
\begin{theorem}
  \label{app:thm:hier_partition_12}
  Let $P$ be a set of $n$ points in the plane, and let $Q$ be a set of $n$ points lying on an algebraic curve
  $\gamma \subset \reals^2$ of constant degree $\delta$.
  Let $1 \le r, s \le \sqrt{n}$ be integer parameters.  Let $\eps>0$.
  Then the following hold: \\
  (i) For any $\eps>0$, there is a hierarchical polynomial partition for $P \times Q$ into 
  $O(n^{2+\eps}/(rs)^{1+\eps})$ bottom-level cells, 
  each of which contains at most $rs$ points of $P \times Q$, which form the Cartesian product
  of a subset of at most $r$ points of $P$ with a subset of at most $s$ points of $Q$. \\
  (ii) Any two-dimensional surface $S$ of degree at most $b$ with good fibers reaches at most
  $O\left(\frac{n^{3/2+\eps}}{r^{1/2+\eps}s^{1+\eps}} \right)$
  cells on all levels of this partition.
\end{theorem}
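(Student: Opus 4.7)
The plan is to build the partition as the level-by-level Cartesian product of two independently constructed hierarchical partitions: a planar hierarchical polynomial partition of $P$ using Propositions~\ref{prop:GK} and~\ref{prop:curve_partition} with a constant per-level degree $D_1$, and a one-dimensional hierarchical partition of $Q$ on $\gamma$ obtained by iterating Proposition~\ref{prop:curve_partition} with a constant per-level degree $D_2 > \delta$. To make the product well-behaved I will force both hierarchies to have the same depth $k$, which forces $D_1^{2k}=n/r$ and $D_2^{2k}=n/s$, hence $D_2 = D_1^{\log(n/s)/\log(n/r)}$. The hypothesis $r,s \le \sqrt{n}$ keeps the exponent in $[1/2,2]$, so that $\sqrt{D_1}\le D_2 \le D_1^2$ and $D_2$ remains a constant once $D_1$ is chosen as a sufficiently large constant in terms of $\eps$ and $\delta$.

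For part~(i), at level $j$ I pair each cell of the level-$j$ partition of $P$ with each cell of the level-$j$ partition of $Q$, obtaining at most $c^{2j}D_1^{2j}D_2^{2j}$ product cells, each of which is the Cartesian product of a set of at most $n/D_1^{2j}$ points of $P$ with a set of at most $n/D_2^{2j}$ points of $Q$ that it contains. The recursion terminates at level $k$, giving $c^{2k}D_1^{2k}D_2^{2k}=c^{2k}n^2/(rs)$ bottom-level cells of the required form. Choosing $D_1$ large enough (in terms of $\eps$) absorbs the $c^{2k}$ overhead into the $\eps$, yielding the bound $O(n^{2+\eps}/(rs)^{1+\eps})$. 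A standard geometric-series sum shows that the total number of cells over all levels satisfies the same bound.

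For part~(ii), the crucial step is to show that at a single level, a good-fibers surface $S$ of degree $\le b$ meets only $O(b^2 D_1 D_2^2)$ product cells. Letting $\zeta=Z(f)$ for the degree-$D_1$ partitioning polynomial $f$, $X\subset\zeta$ the $O(D_1^2)$ partitioning points on $\zeta$, and $X'\subset\gamma$ the $O(D_2^2)$ partitioning points on $\gamma$, I consider the overlay on $S$ of the traces $S\cap(\zeta\times\reals)$, $S\cap(\reals^2\times X')$, and $S\cap(X\times\reals)$. The good-fibers property guarantees that $S\cap(\zeta\times\reals)$ is one-dimensional apart from $O(1)$ exceptional ``vertical'' planes $\{p\}\times\reals$ (contributing only $O(D_1^2)$ extra cells); it thus produces, via B\'ezout, a single algebraic curve $\alpha$ on $S$ of degree $O(bD_1)$. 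Similarly, $S\cap(\reals^2\times X')$ is a collection of $O(D_2^2)$ curves of degree $O(b)$ (with $O(1)$ planar exceptions, also easy to absorb), and $S\cap(X\times\reals)$ is $O(D_1^2)$ points lying on $\alpha$. An Euler-type count of the faces of this overlay on $S$, using B\'ezout to bound pairwise intersections, then gives $O(b^2 D_1 D_2^2 + b^2 D_1^2)=O(b^2 D_1 D_2^2)$ product cells crossed at one level, where the final simplification uses $D_1\le D_2^2$. Substituting this into the recurrence
\[
X_S(n_1,n_2) \le O(b^2 D_1 D_2^2)\,X_S(n_1/D_1^2,n_2/D_2^2) + O(b^2 D_1 D_2^2),
\]
with base case $X_S=1$ once $n_1\le r$ and $n_2\le s$, and solving by induction on $N=n_1 n_2$ yields $X_S(n,n)=O(n^{3/2+\eps}/(r^{1/2+\eps}s^{1+\eps}))$, again absorbing the per-level constant overhead into $\eps$ by choosing $D_1$ large enough.

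The main obstacle I anticipate is the single-level crossing bound, specifically extracting the factor $D_1$ rather than $D_1^2$ in $D_1 D_2^2$. Without the good-fibers assumption $S\cap(\zeta\times\reals)$ could be two-dimensional and the bound would deteriorate by a factor of $D_1$, spoiling the recurrence. Handling the $O(1)$ exceptional fibers of both kinds requires care: one must verify that the ``vertical'' planes $\{p\}\times\reals$ possibly contained in $S$, and the ``horizontal'' planes $\reals^2\times\{q\}$, together contribute only $O(D_1^2)$ crossings per level, which is safely subsumed by $O(b^2 D_1 D_2^2)$. Once these points are pinned down, the remaining arguments are routine inductive bookkeeping, and the claimed bounds follow.
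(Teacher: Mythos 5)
Your proposal matches the paper's proof essentially step for step: the same level-synchronized Cartesian product of the two hierarchical partitions, the same coupling $D_2 = D_1^{\log(n/s)/\log(n/r)}$ (with $r,s \le \sqrt{n}$ keeping the exponent in $[1/2,2]$), the same overlay-on-$S$ argument giving the per-level bound $O(b^2 D_1 D_2^2)$, and the same recurrence. The only nit is that the exceptional fibers $\{p\}\times\reals$ are lines (not planes) and cross $O(D_2^2)$ rather than $O(D_1^2)$ product cells per level, but this is immaterial since both quantities are dominated by $O(b^2 D_1 D_2^2)$.
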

%%%%%%%%%%%%%%%%%%%%%%%%%%%%

\paragraph*{Computation time.}

We next show that we can efficiently compute the hierarchical polynomial partition, 
as well as the set of partition cells reached by any given two-dimensional algebraic variety,
in both the four- and the three-dimensional cases, i.e., both the product of two planar sets
and the product of a planar point set and a set of points on a curve.

% \esther{I rewrote this part, now the assumption about root extraction apprear in Section~\ref{sec:prelim}.}

For the analysis below, we recall our assumption about the computational model from Section~\ref{sec:prelim},
in which the computation of the roots of a real univariate polynomial of degree $b$, and subsequent
comparisons and algebraic manipulations of these roots, can be computed in time that depends only on $b$.

We first consider the time to produce the planar hierarchical partition.
At every level of the partition, we apply Propositions~\ref{prop:GK} and~\ref{prop:curve_partition} in order to produce
a partitioning polynomial $f$, and a partition for $Z(f)$. 
Using the algorithm in~\cite{AMS-13}, a suitable variant of 
$f$ can be computed in expected time $O(nD^2 + D^6) = O(n \poly(D))$.
Concerning the implementation of Proposition~\ref{prop:curve_partition}, a closer inspection of the analysis
of~\cite{SdZ-18} shows that a main ingredient in the computation of this partition relies on finding the
critical points of $\zeta=Z(f)$ (there are $O(D^2)$ such points). By our assumption, this computation
takes $O(1)$ time (by equating~$f$ and its $y$-derivative to $0$), when $D$ is constant.
Omitting any further details, this partition can be computed in $O(n)$ time, where the constant of proportionality
depends on $D$, and thus on $\eps$. 
The overall expected time to construct the hierarchical partition is therefore $O(n)$
at every level, due to the fact that the subproblems at each level come from 
an actual partition of the points. Summing over all
$O(\log_D{n})$ levels of the recursion, we obtain total expected construction time of $O(n\log{n})$.

The computation of the hierarchical partition of the Cartesian product of two planar point sets $P_1$, $P_2$
is straightforward, given the separate planar hierarchical partitions of $P_1$ and $P_2$ (in fact, we can represent it
implicitly by these two partitions, and need not perform any further operations).
Similarly, the computation of the hierarchical partition of the Cartesian product of a planar point set $P$
and a set $Q$ lying on an algebraic curve $\gamma \subset\reals^2$ is also
straightforward, given the individual partitions for $P$ and $Q$. We thus conclude:

%%%%%%%%%%%%%%%%%%%%%%%%%%%%%%%
\begin{corollary}
  \label{app:cor:construction_time}
  With the above notation, the hierarchical polynomial partitions for $P_1 \times P_2$ 
  (in the $2\times 2$-dimensional case) and for $P \times Q$ (in the $2\times 1$-dimensional case) 
  can be computed implicitly in randomized expected time $O(n \log{n})$ in the uniform model, 
  where the constants of proportionality depend on $D$.
\end{corollary}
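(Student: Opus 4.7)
The plan is to observe that the Cartesian product hierarchical partitions need never be constructed explicitly: once we possess the two separate planar hierarchical partitions (one per factor), the product partition is specified implicitly, level by level, as the product of the corresponding level partitions of the factors. Thus the work reduces to bounding the expected cost of building each planar hierarchical partition individually, plus the trivial bookkeeping for linking the two hierarchies.

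I would first analyze the cost of one planar hierarchical partition (as described in the proof of Theorem~\ref{app:thm:hier_partition}). At each internal node of the recursion tree, associated with a subset $P_\tau$ of $m = |P_\tau|$ points in a cell $\tau$, we carry out two tasks: (a) compute a constant-degree ($O(D)$) partitioning polynomial $f_\tau$ for $P_\tau$ via the algorithm of Agarwal, Matou\v{s}ek, and Sharir~\cite{AMS-13}, in expected time $O(m\,\poly(D))$; and (b) subdivide $Z(f_\tau)$ into arcs according to Proposition~\ref{prop:curve_partition}, whose implementation (as discussed in~\cite{SdZ-18}) amounts to finding the $O(D^2)$ critical points of a degree-$O(D)$ polynomial, which by our model of computation (Section~\ref{sec:prelim}) costs $O(1)$, followed by $O(m)$ work to distribute the points of $P_\tau\cap Z(f_\tau)$ among the arcs. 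Hence each node costs expected $O(m)$ time, where the hidden constant depends only on $D$, hence only on $\eps$.

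The crucial observation is that the subsets $P_\tau$ at any fixed level of the recursion form a partition of $P$, so the total size of the subproblems per level is exactly $n$, yielding $O(n)$ expected work per level. Since each step reduces the point count by a factor of $\Theta(D^2)$, the recursion has depth $O(\log_D n)=O(\log n)$ and the total expected time for one planar hierarchical partition is $O(n\log n)$. Building the two factor hierarchies for $P_1,P_2$ (or for $P$ and $Q$, where the partition of $Q$ is one-dimensional along $\gamma$ and thus even cheaper) therefore takes expected $O(n\log n)$ time in total.

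Finally, for the product hierarchy itself, no further computation is needed: a bottom-level product cell is represented by a pair of pointers, one into the hierarchy of $P_1$ and one into the hierarchy of $P_2$ (respectively a pointer into the hierarchy of $P$ and a pointer into the partition of $Q$), and ancestral product cells are similarly encoded. The only mild obstacle is to verify that this implicit representation suffices for the combinatorial properties used elsewhere in the paper—most notably, for the statements about how many product cells a good-fibers surface $S$ \emph{reaches}, and for the algorithmic step of enumerating these reached cells—but for the construction time claim itself this is immediate, and we obtain the asserted $O(n\log n)$ bound, with the constants of proportionality depending on $D$ (and hence on $\eps$), as required.
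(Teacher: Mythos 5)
Your proposal matches the paper's argument essentially step for step: you bound the cost of each planar hierarchical partition by charging $O(m)$ expected time per node (via the Agarwal–Matoušek–Sharir algorithm for the partitioning polynomial plus the constant-time root-finding step for Proposition~\ref{prop:curve_partition}), sum to $O(n)$ per level because the subsets at each level partition $P$, multiply by the $O(\log_D n)$ depth, and then observe that the product hierarchy requires no further work since it is represented implicitly by the two factor hierarchies. This is the same reasoning the paper gives, so there is nothing to add.
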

%%%%%%%%%%%%%%%%%%%%%%%%%%%%%%%

Given an algebraic curve $\gamma$ of degree at most $b=O(1)$ and a hierarchical planar partition,
we can compute the set of cells in the partition intersected by $\gamma$ in a straightforward manner.
At each level, we need to compute the intersection points of $\gamma$ with the zero set $\zeta$ 
of the partitioning polynomial.
Once again, using our assumption, this takes $O(1)$
time (where the constant of proportionality depends on $D$, $\eps$, and $b$). Extracting the actual
cells that $\gamma$ crosses can be obtained from these intersection points using a suitable planar map
representation of the partition.
Using a simple recurrence relation resembling~\eqref{eq:crossing_curve}, we can conclude that the overall
computation time, for finding the cells that $\gamma$ crosses, is $O((n/r)^{1/2+\eps})$, for any $\eps>0$.

In order to compute the set of cells in the hierarchical partition of $P_{1,2}$ reached by a two-dimensional
surface $S \subset \reals^4$ of degree at most $b = O(1)$ (with good fibers), we need to apply several
elementary algebraic operations at each level of the partition.
Taking a closer inspection of the analysis in~\cite{SdZ-18}, we put $\zeta_1 = Z(\varphi_1)$, 
$\zeta_2 = Z(\varphi_2)$, and let $X_1 \subset \zeta_1$ (resp, $X_2 \subset \zeta_2$) be the set of
$O(D^2)$ points obtained by applying Proposition~\ref{prop:curve_partition} to $\zeta_1$ and $P_1\cap \zeta_1$
(resp., $\zeta_2$ and $P_2\cap \zeta_2$). Given a surface $S$ as above, at each level in the hierarchical 
partition we need to intersect $S$ with the $3$-dimensional walls $\zeta_1 \times \reals^2$ and 
$\reals^2 \times \zeta_2$, and with the so-called $2$-gaps $X_1 \times \reals^2$ and $\reals^2 \times X_2$.
Due to the good-fiber property, the $3$-dimensional walls intersect $S$ in two respective algebraic curves,
$\gamma_1$, $\gamma_2$, and the $2$-gaps 
intersect $S$ in two respective discrete sets of points $Q_1$ and $Q_2$,
located on $\gamma_1$ and $\gamma_2$, respectively. Next, we need to construct,
for each level of the hierarchical partition, the two-dimensional map on $S$
formed by the overlay of $\gamma_1$, $\gamma_2$ and $Q_1$, $Q_2$, 
where the faces of this map correspond to the cells
in the current level of the hierarchical partition that $S$ intersects. 
The proof of Proposition~\ref{prop:intersect} implies that the complexity of this overlay is $O(b^2 D^2)$,
and our assumption guarantees that the time to compute depends only on $D$ 
(and thus on $\eps$) and $b$.\footnote{%
  This is fairly straightforward, e.g., by splitting the curves on $S$ into monotone arcs 
  and then using a sweep-line algorithm.}
This leads to a simple recurrence relation resembling~\eqref{eq:surface_crossing} on the overall running time,
resulting in the bound $O((N/r^2)^{1/2+\eps}) = O((n/r)^{1+2\eps})$.
We thus conclude:

%%%%%%%%%%%%%%%%%%%%%%%%%%%%%%%
\begin{corollary}
  \label{app:cor:crossing_time}
  Let $P_1$ and $P_2$ be two planar point sets, each consisting of $n$ points, and let
  $S \subset \reals^4$ be a two-dimensional surface of degree at most $b = O(1)$, with good fibers.
  Then the set of cells in the hierarchical polynomial partition of $P_{1,2}$, 
  as constructed in Theorem~\ref{app:thm:hier_partition_CP}, that are reached by $S$
  can be computed in $O((n/r)^{1+2\eps})$ time, for the prescribed parameter $\eps>0$
  of the hierarchical partition, where the constant of proportionality depends on $\eps$ and $b$.
  When $S$ is replaced by an algebraic curve (of degree at most $b = O(1)$), the bound becomes $O((n/r)^{1/2+\eps})$.
\end{corollary}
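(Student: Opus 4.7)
The plan is to process the hierarchy top-down, maintaining at each level the list of cells currently reached by~$S$, and to charge all work done at an internal cell to that cell itself. Concretely, I would initialize the procedure at the single root cell~$\reals^4$, which~$S$ trivially reaches, and then recursively, at each internal cell~$\tau$ of the hierarchy that~$S$ is already known to reach, compute which of the~$O(D^4)$ children of~$\tau$ are reached by~$S$. Since the partition at~$\tau$ is, by construction in Theorem~\ref{app:thm:hier_partition_CP}, the Cartesian product of two planar polynomial partitions of~$P_1\cap\tau_1$ and~$P_2\cap\tau_2$ (where~$\tau=\tau_1\times\tau_2$), the local geometric data consists of the zero sets~$\zeta_1=Z(\varphi_1)$,~$\zeta_2=Z(\varphi_2)$ of the two partitioning polynomials (of constant degree~$D$) and two constant-size sets~$X_1\subset\zeta_1$,~$X_2\subset\zeta_2$ of partition points on the curves, as provided by Proposition~\ref{prop:curve_partition}.

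The key local step is to determine, in time depending only on~$D$ and~$b$, which of the child product cells at~$\tau$ are intersected by~$S$. Following the proof of Proposition~\ref{prop:intersect}, I would intersect~$S$ with the two three-dimensional walls $\zeta_1\times\reals^2$ and $\reals^2\times\zeta_2$, and with the two two-dimensional gaps $X_1\times\reals^2$ and $\reals^2\times X_2$. By the good fibers property (together with B\'ezout), each wall meets~$S$ in a constant-degree algebraic curve (plus at most~$O(1)$ exceptional two-dimensional components, which are handled separately as in the proof of Theorem~\ref{app:thm:hier_partition_CP}), and each gap meets~$S$ in a discrete set of~$O(1)$ points lying on those curves. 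I would then build, on~$S$, the overlay map of these two curves and the two point sets; by the argument in Proposition~\ref{prop:intersect} this overlay has complexity~$O(b^2D^2)=O(1)$. Each face of the overlay certifies a reached child cell, and each can be associated with the corresponding child in~$O(1)$ time using the computational-model assumption of Section~\ref{sec:prelim} (root extraction, comparisons, and elementary algebraic operations on a constant number of constant-degree polynomials all cost~$O(1)$).

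Since the per-node cost is~$O(1)$ and each node visited corresponds to a cell reached by~$S$, the total running time satisfies the same recurrence as the number of reached cells, namely (up to constants) recurrence~\eqref{eq:surface_crossing}:
\[
T_S(N)\le O(b^2D^2)\,T_S(N/D^4)+O(b^2D^2),
\]
with $T_S(N)=O(1)$ for $N\le r^2$. Solving it inductively, exactly as in the proof of Theorem~\ref{app:thm:hier_partition_CP} and choosing~$D$ sufficiently large in terms of~$\eps$ and~$b$, yields $T_S(n^2)=O((n/r)^{1+2\eps})$. The curve case is handled identically but with the simpler local step: intersect~$\gamma$ with the two partitioning zero sets at~$\tau$, compute the~$O(bD)$ resulting child cells in~$O(1)$ time, and obtain the recurrence~\eqref{eq:crossing_curve}, which solves to $O((n/r)^{1/2+\eps})$.

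The only point that needs some care, and which I would regard as the main (mild) obstacle, is the treatment of the~$O(1)$ exceptional points on~$S$ at which the good-fibers property fails, together with lower-dimensional product cells whose local partition must split them further (the gap/wall distinction in the proof of Theorem~\ref{app:thm:hier_partition_CP}). These contribute only additive $O(D^2)$ terms per level and are absorbed into the $O(b^2D^2)$ local cost, so the recurrence is unaffected; one must simply verify that all of these local computations are still~$O(1)$ in the assumed model, which is immediate since~$D$ and~$b$ are constants and the relevant polynomials have constant degree.
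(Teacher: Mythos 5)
Your plan matches the paper's proof essentially step for step: top-down traversal, intersecting $S$ with the walls $\zeta_1\times\reals^2$, $\reals^2\times\zeta_2$ and the gaps $X_1\times\reals^2$, $\reals^2\times X_2$, building the constant-complexity overlay on $S$ as in the proof of Proposition~\ref{prop:intersect}, and observing that the running time obeys the same recurrence~\eqref{eq:surface_crossing} (resp.~\eqref{eq:crossing_curve} for the curve case). Your explicit note about the $O(1)$ exceptional fibers and the lower-dimensional cells is a fair bit of extra care that the paper leaves implicit, but it does not change the argument.
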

%%%%%%%%%%%%%%%%%%%%%%%%%%%%%%%

Regarding hierarchical partitions of the product of a planar point set $P$ with a one-dimensional point set $Q$,
using similar considerations as above, and recalling our discussion leading to the second part of
Theorem~\ref{app:thm:hier_partition_12}, we can show
that the set of child cells, at a single step of the recursive partition, reached by a two-dimensional surface
$S \subset \reals^3$, with good fibers, that has reached the parent cell, 
can be computed in $O(1)$ time (with a constant of proportionality
that depends on $\eps$ and $b$). This leads to a recurrence formula, similar to those obtained
in the proof, which implies that the overall running time to compute the set of cells reached 
by $S$ is $O\left(\frac{n^{3/2+\eps}}{r^{1/2+\eps}s^{1+\eps}} \right)$.
We thus conclude:

%%%%%%%%%%%%%%%%%%%%%%%%%%%%%%%
\begin{corollary}
  \label{app:cor:crossing_time_12}
  Let $P$ be a planar point set of $n$ points and $Q$ a set of $n$ points
  on a curve of constant degree, and let $S \subset \reals^3$ be a 
  two-dimensional surface of degree at most $b = O(1)$, with good fibers.
  Then the set of cells in the hierarchical polynomial partition of $P \times Q$, 
  as constructed in Theorem~\ref{app:thm:hier_partition_12}, that are reached by $S$
  can be computed in $O\left(\frac{n^{3/2+\eps}}{r^{1/2+\eps}s^{1+\eps}} \right)$
  time, for the prescribed parameter $\eps > 0$ of the hierarchical partition,
  where the constant of proportionality depends on $\eps$ and $b$.
\end{corollary}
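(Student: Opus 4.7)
The plan is to mirror, in the three-dimensional product setting, the argument already used for Corollary~\ref{app:cor:crossing_time}, replacing the combinatorial ingredients by those developed in the proof of the second half of Theorem~\ref{app:thm:hier_partition_12}. Concretely, I would traverse the hierarchy top-down, maintaining at each node the surface $S$ restricted to the parent cell, and at each level compute explicitly the intersection of $S$ with the two-dimensional walls and the $1$-gaps that define the subpartition of that cell, using this to enumerate the child cells that $S$ reaches.

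At a single level, the subdivision is generated by the walls $\zeta \times \reals$ and $\reals^2 \times X'$, and the $1$-gap $X \times \reals$, in the notation of the proof of Theorem~\ref{app:thm:hier_partition_12}. Assuming the good-fibers property for $S$, the intersection $S \cap (\zeta \times \reals)$ consists of an algebraic curve $\alpha$ of degree $O(bD_1)$ on $S$ (augmented by the $O(1)$ exceptional vertical lines), the intersection with $\reals^2 \times X'$ is a family $\Xi$ of $O(D_2^2)$ pairwise disjoint curves of degree $O(b)$, and the intersection with $X \times \reals$ is a discrete set $M$ of $O(D_1^2)$ points on $\alpha$. The overlay of $\alpha$, $\Xi$, and $M$ on $S$ is a planar-type map whose faces are in one-to-one correspondence with the child cells that $S$ reaches, and its overall combinatorial complexity is $O(b^2 D_1 D_2^2)$, exactly as counted in the second half of the proof of Theorem~\ref{app:thm:hier_partition_12}. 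Since $D_1$, $D_2$, and $b$ are constants (depending only on $\eps$), and since by the computational model of Section~\ref{sec:prelim} the roots of the resulting constant-degree univariate polynomials can be extracted in constant time, this overlay and the list of reached child cells can be built in $O(1)$ time per parent cell.

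Summing over reached cells then gives a recurrence of the same shape as the combinatorial recurrence~\eqref{eq:surface_crossing_3D},
\[
  T_S(n_1,n_2) \le O(b^2 D_1 D_2^2)\, T_S(n_1/D_1^2,\,n_2/D_2^2) + O(b^2 D_1 D_2^2),
\]
with $T_S(n_1,n_2) = O(1)$ as soon as $n_1 \le r$ and $n_2 \le s$. Choosing $D_1$, $D_2$ sufficiently large relative to $b$ and $\eps$, and using the relation $D_2 = D_1^{\log(n/s)/\log(n/r)}$ imposed in the construction of the hierarchy, an induction on $N = n_1 n_2$ identical to the one used in proving the combinatorial bound of Theorem~\ref{app:thm:hier_partition_12}(ii) yields the claimed bound $O\!\left(n^{3/2+\eps}/(r^{1/2+\eps}s^{1+\eps})\right)$, with a constant of proportionality depending on $\eps$ and $b$.

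The only place where real care is needed is the per-level computation when $S$ has points where the good-fibers property degenerates: here one must verify that the $O(1)$ exceptional vertical fibers of the form $\{p\}\times\reals$ and the $O(1)$ horizontal slices $\reals^2 \times \{q\}$ contained in $S$ contribute only $O(D_1^2)$ additional reached cells each per level, and can be enumerated in constant time using the precomputed list of exceptional points. This is the same observation made within the proof of Theorem~\ref{app:thm:hier_partition_12} when bounding the combinatorial crossing count, so it does not alter the recurrence; it is merely the routine step one must double-check to turn the combinatorial bound into an algorithmic one.
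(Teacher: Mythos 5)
Your proposal is correct and mirrors the paper's argument: the paper likewise observes that the per-level overlay on $S$ (using the same curves $\alpha$, $\Xi$, and point set $M$ from the combinatorial proof of Theorem~\ref{app:thm:hier_partition_12}(ii)) can be computed in $O(1)$ time under the assumed computational model, and then solves a recurrence of the same shape as~\eqref{eq:surface_crossing_3D} to obtain the stated time bound. The handling of the $O(1)$ exceptional vertical fibers and horizontal slices you flag is also exactly what the paper does when establishing the combinatorial bound, so no new issues arise in the algorithmic version.
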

%%%%%%%%%%%%%%%%%%%%%%%%%%%%%%%


\begin{thebibliography}{99}
\frenchspacing

 \bibitem{AAEZ-19}
  P.~K.~Agarwal, B.~Aronov, E.~Ezra, and J.~Zahl,
  An efficient algorithm for generalized polynomial partitioning and its applications,
  \emph{Proc. 35th Int. Symp. Comput. Geom.},
  2019, pp.~5:1--5:14.  Also in arXiv:1812.10269.
  
\bibitem{AMS-13}
  P.~K.~Agarwal, J.~{Matou{\v{s}}ek}, and M.~Sharir,
  On range searching with semialgebraic sets~II,
  \emph{SIAM J. Comput.} 42 (2013), 2039--2062.
  
\bibitem{AC-20}
  B.~Aronov and J.~Cardinal,
  Geometric pattern matching reduces to $k$-SUM.
  \emph{Proc. 31st Inter. Sympos. Algorithms Comput. (ISAAC 2020)}, to appear.
  Also in arXiv:2003.11890.
  
\bibitem{AES:prev}
  B.~Aronov, E.~Ezra, and M.~Sharir,
  Testing polynomials for vanishing on Cartesian products of planar point sets,
  \emph{Proc. 36rd Inter. Sympos. Comput. Geom.} (2020), 8:1--8:14.
  Also in arXiv:2003.09533.
  
\bibitem{AEZ-19}
  B.~Aronov, E.~Ezra, and J.~Zahl,
  Constructive polynomial partitioning for algebraic curves in~$\reals^3$ with applications,
  \emph{Proc. 30th Ann. {ACM-SIAM} Sympos. Discrete Algorithms} (2019), 2636--2648.
  Also in Arxiv:1904.09526.
  
\bibitem{BCILOS-17}
  L.~Barba, J.~Cardinal, J.~Iacono, S.~Langerman, A.~Ooms, and N.~Solomon,
  Subquadratic algorithms for algebraic 3SUM,
  \emph{Discrete Comput. Geom.} 61 (2019), 698--734.
  Also in \emph{Proc. 33rd Inter. Sympos. Comput. Geom.} (2017), 13:1--13:15.

% \bibitem{BPR}
  % S.~Basu, R.~Pollack, and M.-F. Roy, 
  % \emph{Algorithms in Real Algebraic Geometry}, 
  % Algorithms and Computation in Mathematics 10,
  % Springer-Verlag, Berlin, 2003.
  
\bibitem{BenOr83}
  M.~Ben-Or,
  Lower bounds for algebraic computation trees,
  \emph{Proc. 16th Annu. ACM Sympos. Theory Comput.} (1983), 80--86.

\bibitem{BCR-98}
  J.~Bochnak, M.~Coste, and M.F.~Roy.
  \emph{G\'eom\'etrie alg\'ebrique r\'eelle (Second edition in English: Real Algebraic Geometry),} volume 12 (36) of Ergebnisse der
  Mathematik und ihrer Grenzgebiete [Results in Mathematics and Related Areas]. Springer-Verlag, Berlin, 1987 (1998).
  
\bibitem{Chan-18}
  T.~M. Chan,
  More logarithmic-factor speedups for 3SUM, (median,$+$)-convolution, and some geometric 3SUM-hard problems,
  \emph{Proc. 29th ACM-SIAM Sympos. on Discrete Algorithms} (2018), 881--897.

\bibitem{Chazelle-91}
  B. Chazelle,
  Cutting hyperplanes for Divide-and-Conquer,
  \emph{Discrete Comput. Geom.} 9 (1993), 145--158.
  
\bibitem{CF-90}
   B. Chazelle and J. Friedman,
   A deterministic view of random sampling and its use in geometry,
   \emph{Combinatorica} 10 (1990), 229--249.

\bibitem{CEGS-91}
  B. Chazelle, H. Edelsbrunner, L. Guibas and M. Sharir,
  A singly exponential stratification scheme for real semi--algebraic varieties and its applications,
  \emph{Theoretical Computer Science} 84 (1991), 77--105.
  Also in \emph{Proc. 16th Int. Colloq. on Automata, Languages and Programming} (1989), pp.~179--193.

\bibitem{Col}
  G. E. Collins,
  Quantifier elimination for the elementary theory of real closed fields by cylindrical algebraic decomposition,
  \emph{Proc. 2nd GI Conf. Automata Theory and Formal Languages}, Springer LNCS 33, 1975.

\bibitem{CLO}
  D. Cox, J. Little and D. O'Shea,
  \emph{Ideals, Varieties, and Algorithms:
  An Introduction to Computational Algebraic Geometry and Commutative Algebra,}
  Springer Verlag, Heidelberg, 2007.

\bibitem{ER}
  G. Elekes and L. R\'onyai,
  A combinatorial problem on polynomials and rational functions,
  \emph{J. Combinat. Theory Ser. A} 89 (2000), 1--20.

\bibitem{ESz}
  G. Elekes and E. Szab\'o, 
  How to find groups? (And how to use them in Erd\H os geometry?), 
  \emph{Combinatorica} 32 (2012), 537--571.

\bibitem{ES}
  J. Erickson and R. Seidel,
  Better lower bounds on detecting affine and spherical degeneracies,
  \emph{Discrete Comput. Geom.} 13 (1995), 41--57.

\bibitem{Fre}
  A.~Freund,
  Improved subquadratic 3SUM,
  \emph{Algorithmica} 77(2) (2017), 440--458.

\bibitem{GO}
  A.~Gajentaan and M.~H. Overmars,
  On a class of ${O}(n^2)$ problems in computational geometry,
  \emph{Comput. Geom. Theory Appl.} 5 (1995), 165--185.

\bibitem{GS}
  O. Gold and M. Sharir, 
  Improved bounds for 3SUM, $k$-SUM, and linear degeneracy,
  \emph{Proc. European Sympos. Algorithms} (2017), 42:1--42:13.
  Also in arXiv:1512.05279.

\bibitem{GP}
  A.~Gr{\o}nlund and S.~Pettie,
  Threesomes, degenerates, and love triangles,
  \emph{J. ACM} 65 (2018), 22:1--22:25.
  Also in \emph{Proc. 55th Annu. Symp. Found. Comput. Sci.}, pp. 621--630, 2014.

\bibitem{GK}
  L. Guth and N.~H. Katz,
  On the Erd{\H o}s distinct distances problem in the plane,
  \emph{Annals Math.} 181 (2015), 155--190.
  Also in arXiv:1011.4105.

% \bibitem{Har}
% J. Harris,
% \emph{Algebraic Geometry: A First Course},
% Vol. 133, Springer Verlag, New York 1992.

\bibitem{Kal}
  E. Kaltofen,
  Polynomial factorization,
  in \emph{Computer Algebra} (B. Buchberger, G. Collins and R. Loos, editors),
  Springer Verlag, Heidelberg, 1982.

\bibitem{KLM}
  D.~M. Kane, S.~Lovett, and S. Moran,
  Near-optimal linear decision trees for k-SUM and related problems,
  \emph{Proc. 50th Annu. ACM Sympos. Theory Comput.} (2018), 554--563.
  Also in \emph{Electronic Colloquium on Computational Complexity}, Report No. 82 (2017),
  and in arXiv:1705.01720.

\bibitem{Koltun-04}
  V. Koltun,
  Almost tight upper bounds for vertical decompositions in four dimensions,
  \emph{J. ACM} 51(5) (2004), 699--730.
  
\bibitem{Ma:ept}
  J.~Matou{\v s}ek, 
  Efficient partition trees, 
  \emph{Discrete Comput. Geom.} 8~(1992), 315--334.
  
\bibitem{Mat}
  J.~Matou{\v s}ek, 
  Range searching with efficient hierarchical cuttings, 
  \emph{Discrete Comput. Geom.} 10~(1993), 157--182.

\bibitem{MP} 
  J.~Matou\v{s}ek and Z.~Pat\'akov\'a,
  Multilevel Polynomial Partitions and Simplified Range Searching,
  \emph{Discrete Comput. Geom.} 54~(2015), 22--41. 
      
\bibitem{PS-85}
  F.~P.~Preparata and M.~I.~Shamos,
  \emph{Computational Geometry: An Introduction},
  Texts and Monographs in Computer Science. Springer, 1985.

\bibitem{RSS}
  O.~E.~Raz, M. Sharir, and J. Solymosi,
  Polynomials vanishing on grids: The Elekes--R\'onyai problem revisited,
  \emph{Amer. J. Math.} 138 (2016), 1029--1065.
  Also in arXiv:1401.7419.

\bibitem{RSdZ}
  O.~E.~Raz, M. Sharir, and F. de~Zeeuw,
  Polynomials vanishing on Cartesian products: The Elekes--Szab\'o theorem revisited,
  \emph{Duke Math. J.} 165(18) (2016), 3517--3566.
  Also in arXiv:1504.05012.

\bibitem{SS2}
  J.T. Schwartz and M. Sharir,
  On the Piano Movers' problem: II. General techniques for
  computing topological properties of real algebraic manifolds,
  \emph{Advances in Appl. Math.} 4 (1983), 298--351.

\bibitem{SdZ-18}
  J. Solymosi and F. de~Zeeuw,
  Incidence bounds for complex algebraic curves on Cartesian products,
  \emph{New Trends in Intuitive Geometry}, pp. 385--405.
  Springer Verlag, Berlin-Heidelberg, 2018.
  
\bibitem{SA-95}
  M. Sharir and P.~K. Agarwal,
  \emph{Davenport Schinzel Sequences and their Geometric Applications},
  Cambridge University Press, New York 1995.

% \bibitem{ShSo}
  % M. Sharir and N. Solomon,
  % Incidences between points and lines on two- and three-dimensional varieties,
  % \emph{Discrete Comput. Geom.} 59 (2018), 88--130.
  % Also in arXiv:1609.09026.

\bibitem{vzg}
  J. von zur Gathen,
  Irreducibility of multivariate polynomials,
  \emph{J. Computer Systems Science} 31 (1985), 225--264.

\bibitem{deZ-18}
  F. de~Zeeuw,
  Ordinary lines in space,
  in arXiv:1803.09524.

\end{thebibliography}
\end{document}